\def\dOi{10(4:21)2014}
\subjclass{D.2.4; G.3}
\tikzstyle{prob-node}=[draw=black,diamond]
\tikzstyle{player-node}=[draw=black,circle]
\tikzstyle{opponent-node}=[draw=black]
\tikzstyle{transition-edge}=[->,line width=1pt]
\tikzset{background rectangle/.style={}}
\newcommand{\hide}[1]{}
 \newcommand{\longleadsto}
    {
      \tikz \draw [->,
      line join=round,
      decorate, decoration={
        zigzag,
        segment length=4,
        amplitude=.9,post=lineto,
        post length=2pt
    }]  (0,0) -- (0.5,0);
    }
\newcommand{\lrb}[1]{[#1]}
\newcommand{\lrc}[1]{(#1)}
\newcommand{\lrd}[1]{\{#1\}}
\newcommand{\ignore}[1]{}
\newcommand{\emptyword}{\varepsilon}
\newcommand{\wpp}{w.p.p.}
\newcommand{\as}{a.s.}
\newcommand{\nat}{\mathbb N}
\newcommand{\ord}{\mathbb O}
\newcommand{\set}[1]{\lrd{#1}}
\newcommand{\setcomp}[2]{\lrd{{#1}:{#2}}}
\newcommand{\tuple}[1]{\lrc{#1}}
\newcommand{\denotationof}[2]{\llbracket #1\rrbracket^{#2}}
\newcommand{\game}{{\mathcal G}}
\newcommand{\tsys}{{\mathcal T}}
\newcommand{\bscc}{B}
\newcommand{\gametuple}{\tuple{\states,\zstates,\ostates,\rstates,\transition,\probp,\coloring}}
\newcommand{\selectfrom}[1]{{\it select}\lrc{#1}}
\newcommand{\restrict}{|}
\newcommand{\undef}{\bot}
\newcommand{\domof}[1]{{\it dom}\lrc{#1}}
\newcommand{\states}{S}
\newcommand{\memstates}{\states_{\memory{}}}
\newcommand{\stateset}{Q}
\newcommand{\targetset}{{\tt Target}}
\newcommand{\state}{s}
\newcommand{\zstates}{\states^0}
\newcommand{\ostates}{\states^1}
\newcommand{\rstates}{\states^R}
\newcommand{\xstates}{\states^x}
\newcommand{\ystates}{\states^{1-x}}
\newcommand{\transition}{{\longrightarrow}}
\newcommand{\tmovesto}{\longleadsto}
\newcommand{\probp}{P}
\newcommand{\parg}[1]{\paragraph{\bf #1}\vspace{-2mm}}
\newcommand{\postof}[2]{{\it Post}_{#1}\left(#2\right)}
\newcommand{\preof}[2]{{\it Pre}_{#1}\left(#2\right)}
\newcommand{\preRof}[2]{{\it Pre}_{#1}^R\left(#2\right)}
\newcommand{\dualpreof}[2]{\widetilde{{\it Pre}_{#1}}\left(#2\right)}
\newcommand{\complementof}[1]{\overline{#1}}
\newcommand{\gcomplementof}[2]{\stackrel{_{#1}}{_\neg}{#2}}
\newcommand{\zcut}[1]{\lrb{#1}^0}
\newcommand{\ocut}[1]{\lrb{#1}^1}
\newcommand{\zocut}[1]{\lrb{#1}^{0,1}}
\newcommand{\rcut}[1]{\lrb{#1}^R}
\newcommand{\xcut}[1]{\lrb{#1}^{x}}
\newcommand{\ycut}[1]{\lrb{#1}^{1-x}}
\newcommand{\run}{\rho}
\newcommand{\runset}{{\mathfrak R}}
\newcommand{\runsof}[1]{{\it Runs}\lrc{#1}}
\newcommand{\pth}{\pi}
\newcommand{\pthset}[1]{{\mathrm\Pi}_{#1}}
\newcommand{\strat}{f}
\newcommand{\zstrat}{\strat^0}
\newcommand{\ostrat}{\strat^1}
\newcommand{\xstrat}{\strat^x}
\newcommand{\ystrat}{\strat^{1-x}}
\newcommand{\xforcestrat}{{\it force}^x}
\newcommand{\yforcestrat}{{\it force}^{1-x}}
\newcommand{\yavoidstrat}{{\it avoid}^{1-x}}
\newcommand{\xallstrats}{F_{\it all}^\px}
\newcommand{\xfinitestrats}{F^\px_{\it finite}}
\newcommand{\xnomemstrats}{F^{\px}_\emptyset}
\newcommand{\yallstrats}{F_{\it all}^{\py}}
\newcommand{\yfinitestrats}{F^{\py}_{\it finite}}
\newcommand{\ynomemstrats}{F^{\py}_\emptyset}
\newcommand{\xstratset}{F^\px}
\newcommand{\ystratset}{F^{\py}}
\newcommand{\zstratset}{F^0}
\newcommand{\ostratset}{F^1}
\newcommand{\px}{x}
\newcommand{\py}{{1-x}}
\newcommand{\pz}{0}
\newcommand{\po}{1}
\newcommand{\partialto}{\rightharpoonup}
\newcommand{\maxcolor}{{n_{\text{max}}}}
\newcommand{\memory}{M}
\newcommand{\memconf}{m}
\newcommand{\memtrans}{\tau}
\newcommand{\memmem}{\mu}
\newcommand{\memstrattuple}{\tuple{\memory,\initmem,\memtrans,\memmem}}
\newcommand{\initmem}{\memconf_0}
\newcommand{\memstrat}[1]{\mathcal M^{#1}}
\newcommand{\memstratn}{\memstrat{}}
\newcommand{\memstratstrat}[1]{\mathit{strat}_{\memstrat{#1}}}
\newcommand{\memstratstratn}{\memstratstrat{}}
\newcommand{\om}{\omega}
\newcommand{\Om}{\Omega}
\newcommand{\probm}{{\mathcal P}}
\newcommand{\formula}{{\varphi}}
\newcommand{\parity}{{\it Parity}}
\newcommand{\xparity}{\mbox{$x$-}\parity}
\newcommand{\yparity}{\mbox{$(1-x)$-}\parity}
\newcommand{\always}{\Box}
\newcommand{\eventually}{\Diamond}
\newcommand{\colorset}[3]{[#1]^{\coloring#2#3}}
\newcommand{\coloring}{{\mathtt{Col}}}
\newcommand{\lcscoloring}{{\mathtt{Col}}}
\newcommand{\colorof}[1]{\coloring\lrc{#1}}
\newcommand{\winset}{W}
\newcommand{\xwinset}{\winset^x}
\newcommand{\ywinset}{\winset^{1-x}}
\newcommand{\vinset}{V}
\newcommand{\xvinset}{\vinset^x}
\newcommand{\yvinset}{\vinset^{1-x}}
\newcommand{\reachset}{{\mathcal R\,\,\!\!}}
\newcommand{\xforceset}{{\it Force}^x}
\newcommand{\yforceset}{{\it Force}^{1-x}}
\newcommand{\yavoidset}{{\it Avoid}^{1-x}}
\newcommand{\tp}{\gamma}
\newcommand{\cset}{{\mathcal C}}
\newcommand{\xset}{{\mathcal X}}
\newcommand{\yset}{{\mathcal Y}}
\newcommand{\zset}{{\mathcal Z}}
\newcommand{\cut}{\ominus}
\newcommand{\attractor}{A}
\newcommand{\statesx}{{\states^\px}}
\newcommand{\lossp}{\lambda}
\newcommand{\sglcstuple}{\tuple{\lcsstates,\lcsstatesz,\lcsstateso,\channels,\msgs,\lcstransitions,\lossp,\coloring}}
\newcommand{\sglcs}{{\mathcal L}}
\newcommand{\lcsstates}{{\tt S}}
\newcommand{\lcsstatesz}{\lcsstates^\pz}
\newcommand{\lcsstateso}{\lcsstates^\po}
\newcommand{\lcsstatesx}{\lcsstates^\px}
\newcommand{\lcsstate}{{\tt s}}
\newcommand{\msgs}{{\tt M}}
\newcommand{\msg}{{\tt m}}
\newcommand{\channels}{{\tt C}}
\newcommand{\channel}{{\tt c}}
\newcommand{\lcstransitions}{{\tt T}}
\newcommand{\lcstransition}{{\tt t}}
\newcommand{\op}{{\tt op}}
\newcommand{\nop}{{\tt nop}}
\newcommand{\chassignment}{{\tt x}}
\newcommand{\emptychannels}{\pmb{\emptyword}}
\newcommand{\transitionx}[1]{\overset{{#1}}{\transition}}
\newcommand{\ucof}[1]{{#1}\!\uparrow}
\newcommand{\ucset}{U}
\newcommand{\lcsstrat}{{\tt f}}
\newcommand{\induced}[1]{\underline{#1}}
\newcommand{\xlcsforcestrat}{{\tt force}^x}
\newcommand{\ylcsforcestrat}{{\tt force}^{1-x}}
\newcommand{\ylcsavoidstrat}{{\tt avoid}^{1-x}}
\newcommand{\xlcsstrat}{\lcsstrat^x}
\newcommand{\ylcsstrat}{\lcsstrat^{1-x}}
\newcommand{\xinducedlcsstrat}{\induced\lcsstrat^x}
\newcommand{\yinducedlcsstrat}{\induced\lcsstrat^{1-x}}
\begin{document}

\title[Stochastic Parity Games on LCS]{Stochastic Parity Games on
  Lossy Channel Systems\rsuper*}

\author[P.~A.~Abdulla]{Parosh Aziz Abdulla\rsuper a}
\address{{\lsuper{a,d}}Uppsala University, Department of Information
Technology, Box 337, SE-751 05 Uppsala, Sweden}
\urladdr{http://user.it.uu.se/\~parosh/}
\email{svens@it.uu.se}

\author[C.~Clemente]{Lorenzo Clemente\rsuper b}
\address{{\lsuper b}D{\'e}partement d'Informatique, Universit{\'e} Libre de Bruxelles (U.L.B.), Belgium}
\urladdr{https://sites.google.com/site/clementelorenzo}
%\email{clementelorenzo@gmail.com}

\author[R.~Mayr]{Richard Mayr\rsuper c}
\address{{\lsuper c}University of Edinburgh, School of Informatics,
10 Crichton Street, Edinburgh EH89AB, UK}
\urladdr{http://www.inf.ed.ac.uk/people/staff/Richard\_Mayr.html}
%\email{rmayr@inf.ed.ac.uk}

\author[S.~Sandberg]{Sven Sandberg\rsuper d}
%\address{}

%\thanks{Supported by} 

\keywords{Stochastic games; Lossy channel systems; Finite attractor; Parity games; Memoryless determinacy}
\titlecomment{{\lsuper*}Parts of this work have appeared in 
the proceedings of QEST 2013 \cite{SPGLCS:2013}.}

\begin{abstract}
We give an algorithm for solving
stochastic parity games with almost-sure winning conditions on
{\it lossy channel systems}, under the constraint that
both players are restricted to finite-memory strategies. 
First, we describe a general framework, where we
consider the class of 
$2\frac{1}{2}$-player games with
almost-sure parity winning conditions on possibly infinite game graphs, 
assuming that the game contains a {\it finite attractor}.
An attractor is a set of states (not necessarily absorbing)
that is almost surely re-visited regardless of the players' decisions.
We present a scheme that characterizes the set
of winning states for each player.
Then, we instantiate this scheme
to obtain an algorithm for
{\it stochastic game lossy channel systems}.
\end{abstract}

\maketitle

\section{Introduction}
\label{introduction:section}
\parg{Background.}
2-player games can be used to model the interaction of
a controller (player 0) who makes choices in a reactive 
system, and a malicious adversary (player 1) who represents
an attacker.
To model randomness in the system
(e.g., unreliability; randomized algorithms),
a third player `random' is defined who makes choices according
to a predefined probability distribution. The resulting 
stochastic game is called a $2\frac{1}{2}$-player game in the terminology of \cite{chatterjee03simple}.
The choices of the players induce a run of the system, and
the winning conditions of the game are expressed in terms of predicates 
on runs.

Most classic work on algorithms for stochastic games has focused
on finite-state systems (e.g.,
\cite{shapley-1953-stochastic,condon-1992-ic-complexity,AHK:FOCS98,chatterjee03simple}),
but more recently several classes of infinite-state systems have been
considered as well. 
Stochastic games on infinite-state probabilistic recursive systems (i.e.,
probabilistic pushdown automata with unbounded stacks) were studied in
\cite{Etessami:Yannakakis:ICALP05,EY:LMCS2008,EWY:ICALP08}.
A different (and incomparable) class of infinite-state systems are channel
systems, which use unbounded communication buffers instead of unbounded
recursion.

{\it Channel Systems} consist of nondeterministic
finite-state machines that communicate by asynchronous message passing
via unbounded FIFO communication channels. They are also known as
communicating finite-state machines (CFSM) \cite{Brand:CFSM}.
Channel Systems are a very expressive model that
can encode the behavior of Turing machines, by storing the content of an unbounded
tape in a channel \cite{Brand:CFSM}.
Therefore, all verification questions are undecidable on Channel Systems.

A {\it Lossy Channel System (LCS)} \cite{AbJo:lossy,Finkel:completely:specified} consists of
finite-state machines that communicate by asynchronous message passing
via unbounded \emph{unreliable} (i.e., lossy) FIFO communication channels,
i.e., messages can spontaneously disappear from channels.
The original motivation for LCS is to capture the behavior of communication protocols
which are designed to operate correctly even if the communication medium is unreliable
(i.e., if messages can be lost).
Additionally (and quite unexpectedly at the time),
the lossiness assumption makes safety/reachability and termination decidable \cite{AbJo:lossy,Finkel:completely:specified},
albeit of non-primitive recursive complexity \cite{schnoebelen-2002-ipl-verifying}.
However, other important verification problems are still undecidable for LCS,
e.g., recurrent reachability (i.e., B\"uchi properties), boundedness, and behavioural equivalences
\cite{AbJo:lossy:undecidable:journal,Schnoebelen:2001,Mayr:2003}.

A {\it Probabilistic Lossy Channel System (PLCS)}
\cite{Schnoeblen:plcs,Parosh:Alex:PLCS}
is a probabilistic variant 
of LCS where, in each computation step, each message
can be lost independently with a given probability.
This solves two limitations of LCS.
First, from a modelling viewpoint, probabilistic losses are more realistic
than the overly pessimistic setting of LCS where all messages can always be lost at any time.
Second, in PLCS almost-sure recurrent reachability properties become decidable (unlike for LCS) \cite{Schnoeblen:plcs,Parosh:Alex:PLCS}.
Several algorithms for symbolic model checking of PLCS have been presented 
\cite{Parosh:etal:attractor:IC,Rabinovich:plcs}.
The only reason why certain questions
are decidable for LCS/PLCS is that the message loss induces a quasi-order
on the configurations, which has the properties of a simulation.
Similarly to Turing machines and CFSM, 
one can encode many classes of infinite-state probabilistic transition
systems into a PLCS.
Some examples are:
\begin{itemize}
\item
Queuing systems where waiting customers in a queue drop out
with a certain probability in every time interval.
This is similar to the well-studied class of queuing systems
with impatient customers which practice {\em reneging}, i.e.,
drop out of a queue after a given maximal waiting time; see 
\cite{Wang-Li-Jiang:Review} section II.B.
Like in some works cited in \cite{Wang-Li-Jiang:Review}, the maximal waiting time
in our model is exponentially distributed.
In basic PLCS, unlike in \cite{Wang-Li-Jiang:Review}, this exponential distribution
does not depend on the current number of waiting customers.
However, an extension of PLCS with this feature would still 
be analyzable in our framework (except in the pathological case where
a high number of waiting customers increases the customers patience
exponentially, because such a system would not necessarily have a so-called \emph{finite attractor}; see below).
\item
Probabilistic resource trading games with
probabilistically fluctuating prices.
The given stores of resources are
encoded by counters (i.e., channels), which exhibit a probabilistic decline
(due to storage costs, decay, corrosion, obsolescence, etc).
\item
Systems modelling operation cost/reward, which is stored in counters/channels,
but probabilistically discounted/decaying over time.
\item
Systems which are periodically restarted (though not necessarily by a
deterministic schedule), due to, e.g., energy depletion or maintenance work. 
\end{itemize}
Due to this wide applicability of PLCS, we focus on this model in this paper.
However, our main results are formulated in more general terms referring
to infinite Markov chains with a finite attractor; see below.

\parg{Previous work.}

In \cite{BBS:ACM2007}, a non-deterministic extension of PLCS was introduced
where one player controls transitions in the control graph
and message losses are fully probabilistic.
This yields a Markov decision process (i.e., a $1\frac{1}{2}$-player game)
on the infinite graphs induced by PLCS.
It was shown in \cite{BBS:ACM2007}
that $1\frac{1}{2}$-player games with \emph{almost-sure}
repeated reachability (B\"uchi) objectives are decidable and pure memoryless determined.

In \cite{ABDMS:FOSSACS08}, $2\frac{1}{2}$-player games on PLCS are considered,
where the players control transitions in the control graph
and message losses are probabilistic.
Almost-sure B\"uchi objectives are decidable for this class,
and pure memoryless strategies suffice for \emph{both players} \cite{ABDMS:FOSSACS08}.
Generalized B\"uchi objectives are also decidable,
and finite-memory strategies suffice for the player,
while memoryless strategies suffice for the opponent \cite{BS-qapl2013}.

On the other hand, $1\frac{1}{2}$-player games on 
PLCS with \emph{positive probability}
B\"uchi objectives, i.e., almost-sure co-B\"uchi objectives from the
(here passive) opponent's point of view,
can require infinite memory to win
and are also undecidable \cite{BBS:ACM2007}. 
However, if the player is restricted to finite-memory strategies,
$1\frac{1}{2}$-player games with positive probability
\emph{parity objectives}
(even the more general \emph{Streett objectives})
become decidable and memoryless strategies suffice for the player \cite{BBS:ACM2007}.
Note that the finite-memory case and the infinite-memory one are a priori incomparable problems,
and neither subsumes the other.
Cf. Section~\ref{conclusions:section}.

Non-stochastic (2-player) parity games on infinite graphs were studied
in \cite{zielonka1998infinite}, where it is shown that such games are
determined, and that both players possess winning memoryless strategies in
their respective winning sets.  Furthermore,
a scheme for computing the winning
sets and winning strategies is given.
Stochastic games ($2\frac12$-player games) with parity conditions on
\emph{finite} graphs are known to be memoryless determined and
effectively solvable
\cite{alfaro-2000-lics-concurrent,chatterjee03simple,chatterjee-2006-qest-strategy}.

\parg{Our contribution.}
We give an algorithm to
decide almost-sure \emph{parity} games for probabilistic lossy channel systems
in the case where the players
are restricted to finite memory strategies.
We do that in two steps.
First, we give our result in general terms
(Section~\ref{parity:section}):
We
consider the class of 
$2\frac{1}{2}$-player games with
almost-sure parity wining conditions on possibly infinite game graphs, 
under the assumption
that the game contains a {\it finite attractor}.
An attractor is a set $\attractor$ of states 
such that, regardless of the strategies used by the players,
the probability measure of the runs which
visit $\attractor$ infinitely often is one.%
\footnote{
In the game community (e.g., \cite{zielonka1998infinite})
the word {\it attractor}
is used to
denote what we  call a {\it force set}
in Section~\ref{reachability:section}.
In the infinite-state systems community
(e.g., \cite{Parosh:etal:attractor:IC,Parosh:etal:MC:infinite:journal}), the word is used 
in the same way as we use it in this paper.}
Note that this means neither that $\attractor$ is absorbing, nor
that every run must visit $\attractor$.
We present a general scheme characterizing the set
of winning states for each player.
The scheme is a generalization of the well-known scheme for
non-stochastic games in \cite{zielonka1998infinite}.
In fact, the
constructions are equivalent in the case that no probabilistic states
are present.
We show correctness of the scheme for games where each player is
restricted to a finite-memory strategy.
The correctness proof here is more involved than in the
non-stochastic case of \cite{zielonka1998infinite};
we rely on the existence of a finite attractor and the restriction
of the players to use finite-memory strategies.
Furthermore, we show that if a player is winning against all
finite-memory strategies of the other player then he
can win using a \emph{memoryless} strategy.

In the second step (Section~\ref{sec:application:sglcs}),
we show that the scheme can be instantiated 
for lossy channel systems.
The above two steps yield an algorithm to
decide parity games in the case when the players
are restricted to finite memory strategies.
If the players are allowed infinite memory, then the problem
is undecidable already for $1\frac{1}{2}$-player games
with co-B\"uchi objectives (a special case of 2-color parity objectives)
\cite{BBS:ACM2007}.
Note that even if the players are restricted to finite memory strategies,
such a strategy (even a memoryless one) on an infinite game graph
is still an infinite object. Thus, unlike for finite game graphs,
one cannot solve a game by just guessing strategies
and then checking if they are winning.
Instead, we show how to
effectively compute a finite, symbolic representation of the
(possibly infinite) set of winning states for each player
as a regular language (Section~\ref{algorithm:section}),
and a finite description of winning strategies (Section~\ref{sec:strategy:construction}).

\section{Preliminaries}
\label{prels:section}

\parg{Notation.}
Let $\ord$ and $\nat$ denote the set of ordinal resp.\ natural numbers.
With $\alpha$, $\beta$, and $\gamma$ we denote arbitrary ordinals,
while with $\lambda$ we denote limit ordinals.
We use $f:X\to Y$ to denote that $f$ is a total function from $X$ to $Y$, and
use $f:X\partialto Y$ to denote that $f$ is a partial function from $X$ 
to $Y$.
We write $f(x)=\bot$ to denote that $f$ is undefined on $x$, and define
$\domof{f}:=\setcomp{x}{f(x)\neq\undef}$.
We say that $f$ is an {\it extension} of $g$ if $g(x)=f(x)$ whenever $g(x)\neq\undef$.
For $X'\subseteq X$, we use $f\restrict X'$ to denote the restriction of $f$ to $X'$.
We will sometimes need to pick an arbitrary element from a set.
To simplify the exposition, we let $\selectfrom{X}$ denote an arbitrary
but fixed element of the nonempty set $X$.

A \emph{probability distribution} on a countable set $X$ is a function
$f:X\to[0,1]$ such that $\sum_{x\in X}f(x)=1$.
For a set $X$, we use  $X^*$ and $X^\omega$ to denote the sets of finite
and infinite words over $X$, respectively.
The empty word is denoted by $\emptyword$.

\parg{Games.}
A \emph{game} (of \emph{rank $n$})
is a tuple $\game=\gametuple$ defined as follows.
$\states$ is a set of \emph{states}, partitioned into
the pairwise disjoint sets of \emph{random states} $\rstates$,
states $\zstates$ of Player$~\pz$, and states $\ostates$ of Player~$\po$.
$\transition\subseteq\states\times\states$ is the
  \emph{transition relation}.
  We write $\state\transition{}\state'$ to denote that
  $\tuple{\state,\state'}\in\transition{}$.
  We assume that for each $\state$ there is at least one and at most
  countably many $\state'$ with $\state\transition{}\state'$.
The \emph{probability function}
  $\probp:\rstates\times\states\to[0,1]$ satisfies both
  $\forall\state\in\rstates.\forall\state'\in\states .
  (\probp(\state,\state')>0\iff\state\transition\state')$ and
  $\forall\state\in\rstates .\sum_{\state'\in\states}
  \probp(\state,\state') = 1$.
(The sum is well-defined since we assumed that the number of successors of any state is at most countable.)
The \emph{coloring function} is defined as
  $\coloring:\states\to\{0,\dots,n\}$,
  where $\colorof\state$ is called the
  \emph{color} of state $s$.

Let 
  $\stateset\subseteq\states$ be a set of states.
  We use $\gcomplementof\game\stateset:=\states-\stateset$ to denote the 
  {\it complement} of $\stateset$.
  Define 
  $\zcut\stateset:=\stateset\cap\zstates$,
  $\ocut\stateset:=\stateset\cap\ostates$,
  $\zocut\stateset:=\zcut\stateset\cup\ocut\stateset$, and
  $\rcut\stateset:=\stateset\cap\rstates$.
  For  $n\in\nat$ and $\sim\;\in\{=,\leq\}$, let $\colorset
  \stateset\sim n:=\setcomp{\state\in \stateset}{\colorof\state\sim n}$
  denote the sets of states in $\stateset$ with color $\sim n$.

    A \emph{run} $\run$ in $\game$ is an infinite sequence
$\state_0\state_1\cdots$ of states s.t.
$\state_i\transition{}\state_{i+1}$ for all $i\geq 0$;
$\run(i)$ denotes $\state_i$.
A \emph{path} $\pth$ is a finite sequence $\state_0\cdots\state_n$ of
states s.t. $\state_i\transition{}\state_{i+1}$ for all $i:0\leq
i<n$.
We say that $\run$ (or $\pth$) \emph{visits} $\state$ if
$\state=\state_i$ for some $i$.
For any $\stateset\subseteq\states$, we use $\pthset{\stateset}$ to
denote the set of paths that end in some state in $\stateset$.
Intuitively, the choices of the players and the resolution of randomness induce a run
$\state_0\state_1\cdots$, 
starting in some initial state $\state_0\in\states$;
state $\state_{i+1}$ is chosen as a successor of $\state_i$,
and this choice is made by Player~0 if $\state_i\in\zstates$,
by Player~1 if $\state_i\in\ostates$,
and it is chosen randomly according to the probability distribution
$\probp(\state_i,\cdot)$ if $\state_i\in\rstates$. 

\parg{Strategies.}
For $\px\in\set{0,1}$,
a strategy for Player~$\px$ prescribes the next move,
given the current prefix of the run.
Formally, a \emph{strategy} of Player~$\px$ is a partial
function $\xstrat:\pthset{\xstates}\partialto\states$ s.t.
$\state_n\transition\xstrat(\state_0\cdots\state_{n})$ if 
$\xstrat(\state_0\cdots\state_{n})$ is
defined.
The strategy $\xstrat$ prescribes for Player~$\px$ the next move,
given the current prefix of the run.
A run $\run=\state_0\state_1\cdots$ is said to be {\it consistent}
with a strategy $\xstrat$ of Player~$\px$ if 
$\state_{i+1}=\xstrat(\state_0\state_1\cdots\state_i)$
whenever $\xstrat(\state_0\state_1\cdots\state_i)\neq\bot$.
We say that $\run$ is {\it induced} by $\tuple{\state,\xstrat,\ystrat}$
if $\state_0=\state$ and $\run$ is consistent with both $\xstrat$ and $\ystrat$.
We use $\runsof{\game,\state,\xstrat,\ystrat}$ to denote the set of runs
in $\game$ induced by 
$\tuple{\state,\xstrat,\ystrat}$.
We say that $\xstrat$ is total if it is defined for every
$\pth\in\pthset{\xstates}$.

A strategy $\xstrat$ of Player~$\px$ is \emph{memoryless}
  if the next state only depends on the current state and not on the
  previous history of the run, i.e., for any path
  $\state_0\cdots\state_n\in\pthset{\xstates}$, we have
  $\xstrat(\state_0\cdots\state_n)=\xstrat(\state_n)$.

  A \emph{finite-memory strategy} updates a finite memory
  each time a transition is taken, and the next state depends only on
  the current state and memory.
  Formally, we define a \emph{memory structure} for Player~$\px$ as a quadruple
  $\memstratn=\memstrattuple$ satisfying the following properties.
  The nonempty set $\memory$ is called the \emph{memory} and
  $\memconf_0\in\memory$ is the \emph{initial memory
    configuration}.
  For a current memory configuration $m$ and a current state $s$, the next
  state is given by 
  $\memtrans : \xstates\times\memory \to\states$, where 
  $\state\transition\memtrans(\state,\memconf)$.
  The next memory configuration is given by
  $\memmem:\states\times\memory\to\memory$.
  We extend $\memmem$ to paths by
  $\memmem(\emptyword,\memconf)=\memconf$ and
  $\memmem(\state_0\cdots\state_n,\memconf) =
  \memmem(\state_n,\memmem(\state_0\cdots\state_{n-1},\memconf))$.
  The total strategy $\memstratstratn:\pthset{\xstates}\to\states$
  induced by $\memstratn$ is given by $
  \memstratstratn(\state_0\cdots\state_{n})
  :=\memtrans(\state_{n},\memmem(\state_0\cdots\state_{n-1},\initmem))
  $.
  A total strategy $\xstrat$ is said to have \emph{finite memory} if
  there is a memory structure $\memstratn=\memstrattuple$ where
  $\memory$ is finite and $\xstrat=\memstratstratn$.
Consider a run $\run=\state_0\state_1\cdots\in\runsof{\game,\state,\xstrat,\ystrat}$ where $\ystrat$ is induced by $\memstratn$.
We say that $\run$ \emph{visits} the configuration
$\tuple{\state,\memconf}$ if there is an
$i$ such that $\state_i=\state$ and
$\memmem(\state_0\state_1\cdots\state_{i-1},\memconf_0)=\memconf$.

We use $\xallstrats(\game)$, $\xfinitestrats(\game)$, and $\xnomemstrats(\game)$ 
to denote the set of {\it all}, {\it finite-memory}, and
{\it memoryless} strategies respectively of Player~$\px$ in $\game$.
Note that memoryless strategies and strategies in general can be
partial, whereas for simplicity we only define total finite-memory
strategies.
 
\parg{Probability Measures.}
We use the standard definition of probability measures for a set of runs
\cite{billingsley-1986-probability}.
First, we define the measure for total strategies, and then we extend 
it to general (partial) strategies.
Consider a game $\game=\gametuple$, an initial state $\state$, and total
strategies $\xstrat$ and $\ystrat$ of Players~$\px$ and~$\py$.
Let
$\Om^{\state}=\state\states^{\om}$ denote the set of all infinite
sequences of states starting from $\state$.
For a measurable set 
${\runset}\subseteq\Om^\state$, 
we define $\probm_{\game,\state,\xstrat,\ystrat}({\runset})$ to be
the probability measure of $\runset$ under the strategies $\xstrat,\ystrat$.
This measure is well-defined
\cite{billingsley-1986-probability}.
%
%When the state $\state$ is known from context, we drop the superscript
%and write $\ProbmStateStrat{\xstrat,\ystrat}(\runset)$.
%
For (partial) strategies 
$\xstrat$ and $\ystrat$ of Players~$\px$ and~$\py$,
$\sim\;\in\{<,\leq,=,\geq,>\}$,
a real number $c\in[0,1]$,
and any measurable set $\runset\subseteq\Om^\state$, 
we define $\probm_{\game,\state,\xstrat,\ystrat}({\runset}) \sim c$
iff $\probm_{\game,\state,g^\px,g^{\py}}(\runset)\sim c$ for all total
strategies $g^\px$ and $g^{\py}$ that are extensions of
$\xstrat$ resp.\ $\ystrat$.

\parg{Winning Conditions.}

The winner of the game is determined by a predicate on
infinite runs.
We assume familiarity with the syntax and semantics of the temporal
logic ${\mathit CTL}^*$ (see, e.g., \cite{CGP:book}).
Formulas are interpreted on the structure $(\states,\transition)$.
We use 
$\denotationof{\formula}{\state}$ to denote the set of runs starting from
$\state$ that satisfy the ${\mathit CTL}^*$ path-formula $\formula$.
This set is measurable \cite{Vardi:probabilistic},
and we just write
$\probm_{\game,\state,\xstrat,\ystrat}(\formula) \sim c$
instead of 
$\probm_{\game,\state,\xstrat,\ystrat}(\denotationof\formula\state) \sim c$.

We will consider games with \emph{parity}
winning conditions,
whereby Player~1 wins if the largest color that occurs infinitely often in the
infinite run is odd, and Player~0 wins if it is even.
Thus, the winning condition for Player~$\px$ can be expressed in  
${\mathit CTL}^*$ 
as 
\[
 \xparity :=
 \bigvee_{i\in \{0,\dots,n\}\wedge (i\bmod 2)=x}(
 \always\eventually \colorset \states=i \wedge
 \eventually\always \colorset \states\leq i) \ .
\]

\parg{Winning Sets.}

For a strategy $\xstrat$ of Player~$\px$,
and a set $\ystratset$ of strategies of Player~$\py$,
we define
\begin{align*}
\xwinset(\xstrat,\ystratset)(\game,\formula^{\sim c}):=
\setcomp{\state}{\forall \ystrat\in\ystratset .
\ystrat\text{ is total} \implies
\probm_{\game,\state,\xstrat,\ystrat}(\formula)\sim c}
\end{align*}
If there is a strategy $\xstrat$ such that $\state\in\xwinset(\xstrat,\ystratset)(\game,\formula^{\sim c})$,
then we say that $\state$ is a {\it winning state}
for Player~$\px$ in $\game$
wrt.\ $\formula^{\sim c}$ (and $\xstrat$ is \emph{winning at $\state$}),
provided that Player~$\py$
is restricted to strategies in
$\ystratset$.
Sometimes, when the parameters $\game$, $\state$, 
$\ystratset$,  $\formula$, and $\sim c$ are known, we will not mention them and
may simply say that ``$\state$ is a winning state'' or that
``$\xstrat$ is a winning strategy'', etc.
If 
$\state\in\xwinset(\xstrat,\ystratset)(\game,\formula^{=1})$,
then
we say that Player~$\px$
wins from $\state$ \emph{almost surely (\as)}.
If
$\state\in\xwinset(\xstrat,\ystratset)(\game,\formula^{>0})$,
then
we say that Player~$\px$ wins from $\state$ \emph{with positive probability (\wpp)}.

We also define 
$\xvinset(\xstrat,\ystratset)(\game,\formula):=
\setcomp{\state}{\forall \ystrat\in\ystratset.\;\runsof{\game,\state,\xstrat,\ystrat}\subseteq\denotationof\formula\state}$.
If $\state\in\xvinset(\xstrat,\ystratset)(\game,\formula)$, then we say that
Player~$\px$ {\it surely} wins from $\state$.
Notice that any strategy that is surely winning from a state $\state$
is also winning from $\state$ \as, and any strategy that is winning \as\  is also winning \wpp, i.e.,
$\xvinset(\xstrat,\ystratset)(\game,\formula)\subseteq
\xwinset(\xstrat,\ystratset)(\game,\formula^{=1})\subseteq
\xwinset(\xstrat,\ystratset)(\game,\formula^{>0})$.

\parg{Determinacy and Solvability.}
A game is called \emph{determined} wrt. an objective $\formula^{\sim c}$ and
two sets $\zstratset,\ostratset$ of strategies of Player~$\pz$, resp.\ Player~$\po$,
if, for every state $s$,
Player~$\px$ has a strategy $\xstrat\in\xstratset$ that is winning against all strategies $g\in\ystratset$ of the opponent,
i.e., $s \in \xwinset(\xstrat,\ystratset)(\game,\textrm{cond}_x)$,
where $\textrm{cond}_0 = \formula^{\sim c}$ and $\textrm{cond}_1 = \formula^{\not\sim c}$.
By \emph{solving} a determined game, we mean giving an algorithm to
compute symbolic representations of the sets of states which are 
winning for either player and a symbolic representation of the corresponding winning strategies.

\parg{Attractors.}
A set $\attractor\subseteq\states$ is said
to be an {\it attractor} if,
for each state $\state\in\states$ and strategies
$\zstrat,\ostrat$ of Player~$\pz$ resp.\ Player~$\po$,
 it is the case that
$\probm_{\game,\state,\zstrat,\ostrat}(\eventually\attractor)=1$.
In other words, regardless of where
we start a run and regardless of the strategies
used by the players, we will
reach a state inside the attractor \as.
It is straightforward to see that this also implies that
$\probm_{\game,\state,\zstrat,\ostrat}(\always\eventually\attractor)=1$,
i.e., the attractor will be visited infinitely often \as

\parg{Transition Systems.}
Consider strategies $\xstrat\in\xnomemstrats$ and 
$\ystrat\in\yfinitestrats$ of
Player~$\px$ resp.\ Player~$\py$,
where $\xstrat$ is memoryless and
$\ystrat$ is finite-memory.
Suppose that $\ystrat$ is 
induced by memory structure   $\memstratn=\memstrattuple$.
We define the {\it transition system}
$\tsys$
induced
by $\game,\ystrat,\xstrat$ to be the pair
$\tuple{\memstates,\tmovesto}$
where
$\memstates=\states\times\memory$, and
$\tmovesto\subseteq\memstates\times\memstates$ such that
$\tuple{\state_1,\memconf_1}\tmovesto\tuple{\state_2,\memconf_2}$ if
$\memconf_2=\memmem(\state_1,\memconf_1)$, and
one of the following three conditions is satisfied:
(i) 
$\state_1\in\xstates$ and
either $\state_2=\xstrat(\state_1)$ or $\xstrat(\state_1)=\bot$,
(ii)  
$\state_1\in\ystates$ and $\state_2=\memtrans(\state_1,\memconf_1)$, or
(iii) 
$\state_1\in\rstates$ and $\probp(\state_1,\state_2)>0$.

Consider the directed acyclic graph (DAG) of maximal strongly
connected components (SCCs) of the transition system 
$\tsys$.
An SCC is called a {\it bottom SCC (BSCC)} if no other SCC is 
reachable from
it.  
Observe that the existence of BSCCs is not guaranteed in an
infinite transition system.
However, if $\game$ contains a finite attractor $\attractor$ 
and $\memory$ is finite
then $\tsys$ contains at least one BSCC, and  in fact
each BSCC contains at least one element
$\tuple{\state_\attractor,\memconf}$
with $\state_\attractor\in\attractor$.
In particular, for any state $\state\in\states$,
any run $\run\in\runsof{\game,\state,\xstrat,\ystrat}$ will visit
a configuration $\tuple{\state_\attractor,\memconf}$
infinitely often \as\ 
where $\state_\attractor\in\attractor$
and $\tuple{\state_\attractor,\memconf}\in\bscc$ 
for
some BSCC $\bscc$.

\section{Reachability}
\label{reachability:section}

In this section we present some concepts related
to checking reachability objectives in games.
First, we define basic notions.
Then we recall a standard scheme
(described e.g. in \cite{zielonka1998infinite})
for
checking reachability winning conditions, and
state some of its properties that we use
in the later sections.
In this section, we do not use the finite attractor property,
nor do we restrict the class of strategies in any way.
Below, fix a  game $\game=\gametuple$.

\parg{Reachability Properties.}
  Fix a state $\state\in\states$ and sets of states 
  $\stateset,\stateset'\subseteq\states$.
Let $\postof\game{\state}:=\{\state':\state\transition\state'\}$ denote the
  set of \emph{successors} of $\state$. Extend it to sets of states  by
  $\postof\game{\stateset}:=\bigcup_{\state\in\stateset}\postof\game{\state}$.
  Note that for any given state $\state\in\rstates$,
  $\probp(\state,\cdot)$ is a probability distribution over
  $\postof\game{\state}$.
  Let $\preof\game{\state}:=\{\state':\state'\transition\state\}$ denote the
  set of \emph{predecessors} of $\state$, and extend it to sets of
  states as above.
  We define $\dualpreof\game{\stateset}:=\gcomplementof\game{\preof\game{\gcomplementof\game\stateset}}$, 
  i.e., it denotes the set of  states whose successors 
  {\it all} belong to $\stateset$.
  We say that $\stateset$ is \emph{sink-free} if
  $\postof\game{\state}\cap\stateset\neq\emptyset$ for all $\state\in\stateset$,
  and \emph{closable} if it is sink-free and 
  $\postof\game\state\subseteq\stateset$ for all $\state\in\rcut\stateset$.
  If $\stateset$ is closable then each state in $\zocut\stateset$ 
  has at least   one successor in $\stateset$, and all 
  the successors of states in $\rcut\stateset$ are  in $\stateset$.

  For $\px\in\set{0,1}$, we say that $\stateset$ is an \emph{$\px$-trap} if it is closable and $\postof\game{\state}\subseteq\stateset$
  for all $\state\in\xcut\stateset$.
  Notice that $\states$ is both a $\pz$-trap and a $\po$-trap, and 
  in particular it is both sink-free and closable.
The following lemma states that, starting from a state inside a set of states 
$\stateset$  that is a trap for one player, 
the other player can surely keep the run inside $\stateset$.
\begin{lem}
\label{trap:certainly:lemma}
If $\stateset$ is a $(\py)$-trap, then there exists
a memoryless strategy $\xstrat\in\xnomemstrats(\game)$ for Player~$\px$ 
such that
$\stateset\subseteq\xvinset(\xstrat,\yallstrats(\game))(\game,\always\stateset)$.
\end{lem}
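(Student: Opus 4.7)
The plan is to exhibit an explicit memoryless strategy for Player~$\px$ that, at every state $\state \in Q^\px$, deterministically picks a successor that stays inside $Q$, and then to verify by straightforward induction along any consistent run that $Q$ is never escaped. The argument is essentially the same as the classical ``safety trap'' argument in non-stochastic games, lifted trivially to the stochastic setting because $\always Q$ is a sure-winning (not merely a.s.-winning) objective and random states are already handled by the closability half of the trap definition.

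First I would unpack the hypothesis. Since $Q$ is a $(\py)$-trap it is, by definition, closable: it is sink-free and $\postof{\game}{\state} \subseteq Q$ for every $\state \in \rcut{Q}$; in addition $\postof{\game}{\state} \subseteq Q$ for every $\state \in \ycut{Q}$. Sink-freeness together with closability implies that for each $\state \in \xcut{Q}$ the set $\postof{\game}{\state} \cap Q$ is nonempty (otherwise $\state$ would be a sink of $Q$). This lets me define
\[
  \xstrat(\state) \;:=\; \selectfrom{\postof{\game}{\state} \cap Q}
  \qquad \text{for every } \state \in \xcut{Q},
\]
and leave $\xstrat$ undefined outside $\xcut{Q}$. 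By construction $\xstrat$ depends only on the current state, so $\xstrat \in \xnomemstrats(\game)$.

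Next I would show $Q \subseteq \xvinset(\xstrat,\yallstrats(\game))(\game,\always Q)$. Fix an arbitrary total opponent strategy $\ystrat \in \yallstrats(\game)$, an initial state $\state_0 \in Q$, and any run $\run = \state_0 \state_1 \cdots \in \runsof{\game,\state_0,\xstrat,\ystrat}$. I would prove $\state_i \in Q$ for every $i \geq 0$ by induction on $i$. The base case holds by hypothesis. For the step, assume $\state_i \in Q$ and split on the owner of $\state_i$: if $\state_i \in \xcut{Q}$ then consistency with $\xstrat$ forces $\state_{i+1} = \xstrat(\state_i) \in \postof{\game}{\state_i} \cap Q \subseteq Q$; if $\state_i \in \ycut{Q}$ then $\state_{i+1} \in \postof{\game}{\state_i} \subseteq Q$ by the $(\py)$-trap property; if $\state_i \in \rcut{Q}$ then $\state_{i+1} \in \postof{\game}{\state_i} \subseteq Q$ by closability. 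In every case $\state_{i+1} \in Q$, so $\run \in \denotationof{\always Q}{\state_0}$, which is exactly what the sure-winning set $\xvinset$ requires.

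There is no genuine obstacle here; the only point that deserves care is the nonemptiness of $\postof{\game}{\state} \cap Q$ for $\state \in \xcut{Q}$, which is where the sink-free part of closability is used and which makes the memoryless definition of $\xstrat$ well-founded. Everything else is a three-way case distinction that mirrors the three clauses defining a trap.
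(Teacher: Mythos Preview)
Your proposal is correct and follows essentially the same approach as the paper: define $\xstrat(\state):=\selectfrom{\postof\game\state\cap Q}$ on $\xcut Q$ using sink-freeness, then prove by induction on $i$ with the same three-way case split (owner $\px$, owner $\py$, random) that every consistent run stays in $Q$. The only cosmetic difference is that you make the appeal to sink-freeness for well-definedness of $\xstrat$ explicit, whereas the paper simply cites that $Q$ is a $(\py)$-trap.
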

\begin{proof}

We define a memoryless strategy $\xstrat$ of Player~$\px$ that is 
surely winning
from any state $\state\in\stateset$, i.e.,
$\stateset\subseteq\xvinset(\xstrat,\yallstrats(\game))(\game,\always\stateset)$.
For a state $\state\in\xcut\stateset$, we define
$\xstrat(\state)=\selectfrom{\postof\game{\state}\cap\stateset}$.
This is well-defined since $\stateset$
is a $(\py)$-trap.
We can now show that any run that starts
from a state $\state\in\stateset$ and
that is consistent with 
$\xstrat$ will surely remain inside
$\stateset$.
Let $\ystrat$ be any strategy of Player~$\py$, and
let $\state_0\state_1\ldots\in\runsof{\game,\state,\xstrat,\ystrat}$.
We show, by induction on $i$,
that $\state_i\in\stateset$ for all $i\geq 0$.
The base case is clear since $\state_0=\state\in\stateset$.
For $i>1$, we consider three cases depending on $\state_i$:
\begin{itemize}
\item
$\state_i\in\xcut\states$.
By the induction hypothesis we know that 
$\state_i\in\stateset$,
and hence by definition of $\xstrat$ we know that 
$\state_{i+1}=\xstrat(\state_i)\in\stateset$.
\item
$\state_i\in\ycut\states$.
By the induction hypothesis we know that 
$\state_i\in\stateset$,
and hence $\state_{i+1}\in\stateset$
since $\stateset$ is a $(\py)$-trap.
\item
$\state_i\in\rcut\states$.
By the induction hypothesis we know that 
$\state_i\in\stateset$,
and hence $\state_{i+1}\in\stateset$
since $\stateset$ is closable. \qedhere
\end{itemize}
\end{proof}

\parg{Scheme.}
Given a set $\targetset\subseteq\states$,  we give a scheme for 
 computing a partitioning of $\states$ into two sets
$\xforceset(\game,\targetset)$ and $\yavoidset(\game,\targetset)$
s.t. 1) Player~$\px$ has a memoryless strategy on $\xforceset(\game,\targetset)$ to force the game to $\targetset$ \wpp,
and 2) Player~$\py$ has a memoryless strategy on $\yavoidset(\game,\targetset)$ to surely avoid $\targetset$.
The scheme and its correctness is adapted from \cite{zielonka1998infinite} to the stochastic setting.

First, we characterize the states that are winning for Player~$\px$,
by defining an increasing set of 
states each of which consists of winning states
for Player~$\px$, as follows:%
\begin{align*}
\reachset_0&:=\targetset\\
\reachset_{\alpha+1}&:=\reachset_\alpha\cup
\rcut{\preof\game{\reachset_\alpha}}\cup
\xcut{\preof\game{\reachset_\alpha}}\cup
\ycut{\dualpreof\game{\reachset_\alpha}} \\
\reachset_\lambda&:=\bigcup_{\alpha<\lambda}\reachset_\alpha
 \qquad \textrm { (for $\lambda$ a limit ordinal) }
\end{align*}
Clearly, the sequence is non-decreasing, i.e., $\reachset_\alpha\subseteq\reachset_\beta$ when $\alpha \leq \beta$,
and since the sequence is bounded by $S$, it converges at some (possibly infinite) ordinal.
We state this as a lemma:
\begin{lem}
\label{reachability:tp:lemma}
There is a $\tp\in\ord$ such that
$\reachset_\tp=\bigcup_{\alpha\in\ord}\reachset_\alpha$.
\end{lem}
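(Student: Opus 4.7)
The statement is a standard transfinite-fixed-point argument: the sequence $\{\reachset_\alpha\}_{\alpha\in\ord}$ is bounded above by $\states$ and monotone, so it must eventually stop growing. The plan is to (i) establish monotonicity, (ii) show that once two consecutive stages coincide the sequence is stationary from that point onwards, and (iii) derive a set-theoretic contradiction if the sequence were never stationary.

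First I would verify monotonicity by transfinite induction on $\alpha$. At successors this is immediate from the definition, since $\reachset_{\alpha+1}$ is built as $\reachset_\alpha$ together with additional states; at limits it is immediate because $\reachset_\lambda$ is a union including all smaller stages. This yields $\reachset_\alpha\subseteq\reachset_\beta$ whenever $\alpha\le\beta$.

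Next I would prove the stabilization lemma: if $\reachset_\alpha = \reachset_{\alpha+1}$ for some $\alpha$, then $\reachset_\beta = \reachset_\alpha$ for every $\beta\ge\alpha$. The successor case is clear, because the operator used at successor stages produces $\reachset_{\alpha+2}$ as a function of $\reachset_{\alpha+1}=\reachset_\alpha$ using exactly the same rule that produced $\reachset_{\alpha+1}$, so $\reachset_{\alpha+2}=\reachset_{\alpha+1}$. The limit case $\lambda>\alpha$ follows because every stage in the union is contained in $\reachset_\alpha$ by the induction hypothesis, while $\reachset_\alpha$ itself appears in the union.

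Finally I would argue by contradiction that stabilization must actually occur. Assume, toward a contradiction, that $\reachset_\alpha\subsetneq\reachset_{\alpha+1}$ holds for every ordinal $\alpha$. Using choice, pick a witness $\state_\alpha\in\reachset_{\alpha+1}\setminus\reachset_\alpha$ for each $\alpha$. Monotonicity forces the map $\alpha\mapsto\state_\alpha$ to be injective: if $\alpha<\beta$, then $\state_\alpha\in\reachset_{\alpha+1}\subseteq\reachset_\beta$ while $\state_\beta\notin\reachset_\beta$, so $\state_\alpha\neq\state_\beta$. This would embed the proper class of ordinals into the set $\states$, which is impossible (one can equivalently phrase this as bounding the stabilization ordinal by the Hartogs number of $\states$, or more crudely by $|\states|^+$). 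Hence some $\gamma$ satisfies $\reachset_\gamma=\reachset_{\gamma+1}$, and by the stabilization lemma $\reachset_\gamma=\bigcup_{\alpha\in\ord}\reachset_\alpha$, as required. The only delicate point is the monotonicity check at successor stages, which has to be unpacked from the definition of $\reachset_{\alpha+1}$; everything else is purely a counting/set-theoretic observation.
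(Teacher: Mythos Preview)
Your proposal is correct and is precisely the standard argument the paper has in mind; the paper does not actually give a proof of this lemma, merely observing just before the statement that ``the sequence is non-decreasing \ldots\ and since the sequence is bounded by $S$, it converges at some (possibly infinite) ordinal.'' Your write-up is a faithful unpacking of that one-line justification.
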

\noindent
Let $\tp$ be the smallest ordinal s.t. $\reachset_\tp = \reachset_{\tp+1}$ (it exists by the lemma above).
We define 
\begin{align*}
	\xforceset(\game,\targetset)&:=\reachset_\tp\\
	\yavoidset(\game,\targetset)&:=\;\gcomplementof\game{\reachset_\tp}
\end{align*}

\begin{lem}
\label{not:reach:trap:lemma}
$\yavoidset(\game,\targetset)$ is an $\px$-trap.
\end{lem}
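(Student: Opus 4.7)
The plan is to unpack the definition of ``$\px$-trap'' and verify its three constituent properties directly from the fixed-point equation $\reachset_\tp = \reachset_{\tp+1}$. Writing $\stateset := \yavoidset(\game,\targetset) = \gcomplementof\game{\reachset_\tp}$, I need to show: (a) $\stateset$ is sink-free; (b) $\postof\game{\state} \subseteq \stateset$ for all $\state \in \rcut\stateset$; (c) $\postof\game{\state} \subseteq \stateset$ for all $\state \in \xcut\stateset$.

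My first step is to exploit the one-step stability of $\reachset_\tp$. Pick any $\state \in \stateset$, so $\state \notin \reachset_\tp$, and hence also $\state \notin \reachset_{\tp+1}$. I then case-split on which component of $\states$ the state belongs to. If $\state \in \xcut\stateset$, then the inductive clause forces $\state \notin \xcut{\preof\game{\reachset_\tp}}$, so no successor of $\state$ lies in $\reachset_\tp$; this yields (c). The identical argument with $\rcut\cdot$ in place of $\xcut\cdot$ handles random states and gives (b). Finally, for $\state \in \ycut\stateset$, the clause forces $\state \notin \ycut{\dualpreof\game{\reachset_\tp}}$, which by definition of $\dualpreof\game\cdot$ means that not all successors of $\state$ lie in $\reachset_\tp$; hence at least one successor lies in $\stateset$, supplying sink-freeness at $\py$-states.

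To finish sink-freeness I still need $\postof\game{\state} \cap \stateset \neq \emptyset$ for $\state \in \xcut\stateset \cup \rcut\stateset$. This is immediate: by the global assumption that every state has at least one successor, combined with (c) and (b) which force \emph{all} successors of such states to lie in $\stateset$, the intersection is non-empty.

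I do not anticipate a real obstacle here; the proof is essentially a bookkeeping exercise once one observes that $\reachset_{\tp+1} = \reachset_\tp$ rules out, for each kind of state, exactly the configuration that would violate the corresponding trap condition. The only minor subtlety is remembering to appeal to the standing assumption that $\postof\game{\state}$ is non-empty to convert ``all successors lie in $\stateset$'' into ``at least one successor lies in $\stateset$'' for controlled and random states.
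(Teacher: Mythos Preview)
Your proof is correct and follows essentially the same approach as the paper's: both arguments unpack the fixed-point equation $\reachset_{\tp+1}=\reachset_\tp$ and case-split on whether the state lies in $\xcut\stateset\cup\rcut\stateset$ or in $\ycut\stateset$, using non-emptiness of $\postof\game\state$ to finish sink-freeness. The only difference is cosmetic ordering---the paper proves sink-freeness first and then observes that closability and the $\px$-trap condition were established along the way, whereas you establish the containment conditions (b) and (c) first and then derive sink-freeness from them.
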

\begin{proof}
	Recall that $\yavoidset(\game,\targetset)=\gcomplementof\game{\reachset_\tp}$
	and $\reachset_{\tp+1}\subseteq\reachset_\tp$.
First, we prove that $\gcomplementof\game\reachset_\tp$ is sink-free.
There are two cases to consider:
\begin{itemize}
\item
$\state\in\xcut{\gcomplementof\game\reachset_\tp} \cup \rcut{\gcomplementof\game\reachset_\tp}$.
First, $\postof\game\state\subseteq\;\gcomplementof\game\reachset_\tp$.
Indeed, if not, we would have $\postof\game\state\cap\reachset_\tp\neq\emptyset$,
and thus $\state\in\reachset_{\tp+1}\subseteq\reachset_\tp$,
which is a contradiction.
Second, since $\states$ is sink-free, we have
$\postof\game\state\neq\emptyset$,
and thus $\postof\game\state\cap\gcomplementof\game\reachset_\tp\neq\emptyset$.
\item
$\state\in\ycut{\gcomplementof\game\reachset_\tp}$.
We clearly have $\postof\game\state\cap\gcomplementof\game\reachset_\tp\neq\emptyset$,
otherwise $\postof\game\state\subseteq\reachset_\tp$,
and thus $\state\in\reachset_{\tp+1}\subseteq\reachset_\tp$,
which is a contradiction.
\end{itemize}
Second, when proving sink-freeness above, we showed that
$\postof\game\state\subseteq\;\gcomplementof\game\reachset_\tp$
for any $\state\in\rcut{\gcomplementof\game\reachset_\tp}$ 
which means that $\gcomplementof\game\reachset_\tp$  
is closable.
Finally, we also showed
that $\postof\game\state\subseteq\;\gcomplementof\game\reachset_\tp$
for any $\state\in\xcut{\gcomplementof\game\reachset_\tp}$,
which means that $\gcomplementof\game\reachset_\tp$ is an $\px$-trap,
thus concluding the proof.
\end{proof}

The following lemma shows correctness of the construction.
In fact, it shows that a winning player also has a
memoryless strategy which is winning against an arbitrary opponent.
\begin{lem}
\label{reachability:correct:lemma}
There are memoryless strategies $\xforcestrat(\game,\targetset)\in\xnomemstrats(\game)$ for Player~$\px$
and \\$\yavoidstrat(\game,\targetset)\in\ynomemstrats(\game)$ for Player~$\py$ s.t.
\begin{align*}
 \xforceset(\game,\targetset)&\subseteq
	\xwinset(\xforcestrat(\game,\targetset),\yallstrats(\game))(\game,\eventually\targetset^{>0}) \\
 \yavoidset(\game,\targetset)&\subseteq
	\yvinset(\yavoidstrat(\game,\targetset),\xallstrats(\game))(\game,\always(\gcomplementof\game\targetset))
\end{align*}
\end{lem}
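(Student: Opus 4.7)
The statement splits into two independent inclusions---one about \emph{surely} avoiding $\targetset$, the other about reaching $\targetset$ with \emph{positive probability}---which I would prove separately.

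For the avoidance inclusion I would directly combine the two preceding lemmas. Lemma~\ref{not:reach:trap:lemma} asserts that $\yavoidset(\game,\targetset)=\gcomplementof\game{\reachset_\tp}$ is an $\px$-trap, so Lemma~\ref{trap:certainly:lemma} (applied with the roles of $\px$ and $\py$ swapped) furnishes a memoryless strategy $\yavoidstrat(\game,\targetset)\in\ynomemstrats(\game)$ that surely keeps every run starting inside $\yavoidset(\game,\targetset)$ within that same set. Because $\targetset=\reachset_0\subseteq\reachset_\tp$, the set $\yavoidset(\game,\targetset)$ is disjoint from $\targetset$, so ``always in $\yavoidset(\game,\targetset)$'' entails ``always in $\gcomplementof\game\targetset$'', which is exactly what is required.

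For the reachability inclusion I would assign each $\state\in\reachset_\tp$ a \emph{rank}, namely the least ordinal $\alpha$ with $\state\in\reachset_\alpha$. A quick observation shows the rank is never a limit ordinal, since the equality $\reachset_\lambda=\bigcup_{\beta<\lambda}\reachset_\beta$ would immediately contradict minimality; hence every positive rank is a successor $\beta+1$. I then define a partial memoryless strategy by setting $\xforcestrat(\game,\targetset)(\state):=\selectfrom{\postof\game{\state}\cap\reachset_\beta}$ for each $\state\in\xcut\reachset_\tp$ of rank $\beta+1$ (the set is nonempty by the definition of $\reachset_{\beta+1}$), leaving it undefined elsewhere. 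The core claim, proved by transfinite induction on $\alpha$, is that for every total $\ystrat\in\yallstrats(\game)$ and every total extension of $\xforcestrat(\game,\targetset)$ the probability of $\eventually\targetset$ starting from any $\state$ of rank $\alpha$ is strictly positive. The base case $\alpha=0$ is immediate since $\state\in\targetset$. For the successor step $\alpha=\beta+1$, three sub-cases arise: if $\state\in\xstates$, the strategy deterministically moves to a state of rank $\leq\beta$; if $\state\in\ystates$, the dual-pre condition forces every successor (in particular $\ystrat(\state)$) into $\reachset_\beta$; if $\state\in\rstates$, some successor $\state'\in\reachset_\beta$ satisfies $\probp(\state,\state')>0$, so with positive probability we land at a state of strictly smaller rank. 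In each sub-case the inductive hypothesis closes the argument.

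The most delicate point is making this induction fit the ``all total extensions'' semantics of $\probm$: one must verify that every total extension of $\xforcestrat(\game,\targetset)$ agrees with it on $\xcut\reachset_\tp$ at states of positive rank (so the rank-decreasing step survives), and one must appeal to the well-foundedness of $\ord$ to guarantee that the descending chain of ranks along the witnessing path is finite, making the overall probability a product of \emph{finitely many} positive numbers and hence positive.
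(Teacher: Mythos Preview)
Your proposal is correct and follows essentially the same route as the paper: the avoidance claim is dispatched by combining Lemma~\ref{not:reach:trap:lemma} with Lemma~\ref{trap:certainly:lemma} and noting $\targetset\cap\yavoidset(\game,\targetset)=\emptyset$, and the reachability claim is handled by defining the memoryless strategy via the least $\alpha$ with $\state\in\reachset_\alpha$ and running a transfinite induction with the same three sub-cases. Your explicit observation that positive rank is never a limit ordinal, and your remarks about total extensions and well-foundedness, are details the paper leaves implicit, but the underlying argument is the same.
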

\begin{proof}
Let $\reachset=\xforceset(\game,\targetset)$.
To prove the first claim, we define a memoryless strategy $\xstrat$ of Player~$\px$ that is winning
from $\reachset$.
For any $\state\in\xcut{\reachset}$,
let $\alpha$ be the unique ordinal s.t. $\state\in\xcut{\reachset_{\alpha+1}\setminus\reachset_\alpha}$.
Then, we define 
$\xstrat(\state):=\selectfrom{\postof\game{\state}\cap\reachset_\alpha}$.
We show that $\xstrat$ forces the run to the target set $\targetset$ w.p.p. against an arbitrary opponent.
Fix a strategy $\ystrat$ for Player~$\py$.
We show that 
$\probm_{\game,\state,\xstrat,\ystrat}(\eventually\targetset)>0$ by transfinite induction.
If $\state\in\reachset_0$, then the claim follows trivially.
If $\state\in\reachset_{\alpha+1}$, then either
$\state\in\reachset_{\alpha}$
in which case the claim holds by the induction hypothesis,
or $\state\in\reachset_{\alpha+1}\setminus\reachset_{\alpha}$.
In the latter case, there are three sub-cases:
\begin{itemize}
\item
  $\state\in\xcut{\reachset_{\alpha+1}\setminus\reachset_\alpha}$.
  By definition of $\xstrat$, 
  we know that $\xstrat(\state)=\state'$ for some $\state'\in\reachset_{\alpha}$.
  By the induction hypothesis,
  $\probm_{\game,\state',\zstrat,\ostrat}(\eventually\targetset)>0$, and hence
  $\probm_{\game,\state,\zstrat,\ostrat}(\eventually\targetset)>0$.
\item
  $\state\in\ycut{\reachset_{\alpha+1}\setminus\reachset_{\alpha}}$.
  Let $s'$ be the successor of $s$ chosen by $\ystrat$.
  By definition of $\reachset_{\alpha+1}$,
  we know that $\state'\in\reachset_{\alpha}$.
  Then, the proof follows as in the previous case.
\item
  $\state\in\rcut{\reachset_{\alpha+1}\setminus\reachset_{\alpha}}$.
  By definition of $\reachset_{\alpha+1}$,
  there is a $\state'\in\reachset_{\alpha}$
  such that $\probp(\state,\state')>0$.
  By the induction hypothesis,
  $\probm_{\game,\state,\zstrat,\ostrat}(\eventually\targetset)
  \geq 
  \probm_{\game,\state',\zstrat,\ostrat}(\eventually\targetset)\cdot\probp(\state,\state')>0$.
\end{itemize}
Finally, if $\state\in\reachset_{\lambda}$
for a limit ordinal $\lambda$, then
$\state\in\reachset_{\alpha}$ 
for some $\alpha<\lambda$,
and the claim follows by the induction hypothesis.

From Lemma~\ref{not:reach:trap:lemma} and
Lemma~\ref{trap:certainly:lemma} it follows that
there is a strategy $\ystrat$ for Player~$\py$ such that
$\yavoidset(\game,\targetset)\subseteq\yvinset(\ystrat,\xallstrats)(\game,\always(\yavoidset(\game,\targetset)
))$.
The second claim follows then from the fact that 
$\targetset\cap \yavoidset(\game,\targetset)=\emptyset$.
\end{proof}

\section{Parity Conditions}
\label{parity:section}
We describe a scheme for solving stochastic parity games with
almost-sure winning conditions on infinite graphs,
under the conditions that the game has a finite attractor (as
defined in Section~\ref{prels:section}),
and that the players are restricted to finite-memory strategies.

We define a sequence of functions $\cset_0,\cset_1,\ldots$
Each $\cset_n$ takes a single argument, a game of rank at most $n$,
and it returns the set of states where Player~$\px$ wins \as, with
$\px=n\bmod 2$.
In other words, the player that has the same parity as color $n$ wins \as\  in
$\cset_n(\game)$.
We provide a memoryless strategy that is winning \as~for Player~$\px$
in $\cset_n(\game)$ against any finite-memory strategy of Player~$\py$,
and a memoryless strategy that is winning \wpp~for Player~$\py$ in
$\gcomplementof \game \cset_n(\game)$ against any finite-memory strategy of
Player~$\px$.

The scheme is by induction on $n$ and is related to \cite{zielonka1998infinite}.
In the rest of the section, we make use of the following notion of sub-game.
For a closable $\gcomplementof\game\stateset$, we define the \emph{sub-game}
$\game\cut\stateset:=\tuple{\stateset',\zcut{\stateset'},\ocut{\stateset'},
\rcut{\stateset'},\transition',\probp',\coloring'}$, where
$\stateset':=\gcomplementof\game\stateset$ is the new set of states,
$\transition':=\transition\cap({\stateset'}\times{\stateset'})$, 
$\probp':=\probp\restrict(\rcut{\stateset'}\times{\stateset'})$, 
$\coloring':=\coloring\restrict{\stateset'}$.
Notice that $\probp'(\state)$ is a probability 
distribution for any $\state\in\rcut{\stateset'}$ since $\stateset'$ is closable.
We use $\game\cut\stateset_1\cut\stateset_2$ to denote 
$(\game\cut\stateset_1)\cut\stateset_2$.

For the base case, let $\cset_0(\game):=\states$
for any game $\game$ of rank 0.
Indeed, from any configuration Player 0 trivially wins \as\ (even surely) because there is only color 0.


\begin{figure}
	\centering
	\begin{tikzpicture}[spy using overlays={size=12mm}]
		
		\begin{scope}
			\node at (-1,-1) {game $\game$};
			\draw [clip] (0,0) rectangle (6.2,-2);
			\fill [fill = gray!20] (4,0) rectangle (6.2,-2);
			\node at (5.2,-1) {$\bigcup_{\beta < \alpha} \yset_\beta$};
			\begin{scope}
				\path [clip] (0,0) rectangle (4,-2);
				\node [starburst, draw, minimum height = 4cm, minimum width = 3.6cm, fill = gray!20] at (3.5,-1) {};
			\end{scope}
			\draw[dashed,very thick,->] (3.5,-1) -- node [above] {$1-x$} (4.4,-1);
			\draw (4,0) -- (4,-2);
			\node at (3,-1) {$\xset_\alpha$};
			\node at (1,-1) {$S \setminus \xset_\alpha$};
		\end{scope}
		
		\draw (0, -2) -- (1, -3);
		\draw (2, -2) -- (7.5, -3);
		
		\begin{scope}
			\node at (-.4,-4) {game $\game\cut\xset_\alpha$};
			\draw [clip] (1,-3) rectangle (7.5,-5);
			%\draw [color = gray!100] (2,-3) -- (2,-5);
			%\fill [fill = gray!20, draw = none] (2,-3) rectangle (4,-5);
			\begin{scope}
				\path [clip] (1,-3) rectangle (5,-5);
				\node [starburst, draw, minimum height = 4cm, minimum width = 3.6cm, fill = white] at (4.5,-4) {};
			\end{scope}
			\node at (6.4,-4) {$\colorset{\gcomplementof\game\xset_\alpha}=n$};
			\draw [dashed, very thick,->] (4.5,-4) -- node [above] {$x$} (5.4,-4);
			\draw (5,-3) -- (5,-5);
			\node at (4,-4) {$\zset_\alpha$};
			\node at (2,-4) {$S \setminus \xset_\alpha \setminus \zset_\alpha$};
		\end{scope}
		
		\draw (1, -5) -- (-1, -6);
		\draw (3, -5) -- (8, -6);
		
		\begin{scope}
			\node at (-3,-6.75) {game $\game\cut\xset_\alpha\cut\zset_\alpha$};
			\draw (-1,-6) rectangle (8,-7.5);
			\node at (1.4,-6.75) {$\gcomplementof{\game\cut\xset_\alpha\cut\zset_\alpha}\cset_{n-1}(\game\cut\xset_\alpha\cut\zset_\alpha)$};
			\fill [fill = gray!20] (4,-6) rectangle (8,-7.5);
			\draw (4,-6) -- (4,-7.5);
			\node at (6,-6.75) {$\cset_{n-1}(\game\cut\xset_\alpha\cut\zset_\alpha)$};
		\end{scope}
		
	\end{tikzpicture}
	\caption{The construction of the various sets involved in the inductive step.
	  The grey area is $\yset_\alpha$.}
	\label{fig:parity:schema}
\end{figure}

For $n\geq 1$, let $\game$ be a game of rank $n$.
In the following, let \[ \px=n\bmod 2 .\]
$\cset_n(\game)$ is defined with the help of two auxiliary transfinite sequences
of sets of states $\set{\xset_\alpha}_{\alpha\in\ord}$ and $\set{\yset_\alpha}_{\alpha\in\ord}$.
The construction ensures that
$\xset_0\subseteq\yset_0\subseteq\xset_1\subseteq\yset_1\subseteq\cdots$, 
and that the states of $\xset_\alpha,\yset_\alpha$ are winning \wpp\ 
for Player~$\py$.
We use strong induction, i.e., to construct $\xset_\alpha$ we assume that
$\xset_\beta$ has been constructed for all $\beta<\alpha$, and it suffices to state
one unified inductive step rather than distinguishing between base
case, successor ordinals and non-zero limit ordinals.
In the (unified) inductive step, we have already constructed
$\xset_\beta$ and $\yset_\beta$ for all $\beta<\alpha$.
Our construction of $\xset_\alpha$ and $\yset_\alpha$ is in three steps (cf. Figure~\ref{fig:parity:schema}):
\begin{enumerate}
\item $\xset_\alpha$ is the set of states
  where Player~$\py$ can force the run to visit
  $\bigcup_{\beta<\alpha}\yset_\beta$
  \wpp
\item Find a set of states where
  Player~$\py$ wins \wpp\  in
  the sub-game $\game\cut\xset_\alpha$.
\item Take $\yset_\alpha$ to be the union of $\xset_\alpha$ and the set constructed in step 2.
\end{enumerate}
We next show how to find the winning states in 
the sub-game $\game\cut\xset_\alpha$ in step 2.
We first compute the set of states where Player~$\px$ can force the play
in $\game\cut\xset_\alpha$ to reach a state with color $n$ 
\wpp
We call this set $\zset_\alpha$.
The sub-game $\game\cut\xset_\alpha\cut\zset_\alpha$ does not contain any states of color $n$.
Therefore, this game can be completely solved, using the already constructed
function $\cset_{n-1}(\game\cut\xset_\alpha\cut\zset_\alpha)$.
The resulting winning set is winning \as\  in
$\game\cut\xset_\alpha\cut\zset_\alpha$, hence it is winning \wpp
We will prove that the states where Player~$\py$ wins \wpp\  in 
$\game\cut\xset_\alpha\cut\zset_\alpha$ are winning
\wpp\  also in $\game$.
We thus take $\yset_\alpha$ as the union of $\xset_\alpha$ and 
$\cset_{n-1}(\game\cut\xset_\alpha\cut\zset_\alpha)$.

We define the sequences formally:
%%% DISPLAYED MATH
\begin{align*}
	\xset_\alpha &:= \yforceset(\game,{\textstyle\bigcup_{\beta<\alpha}\yset_\beta}) \\
    \zset_\alpha &:= \xforceset(\game\cut\xset_\alpha,\colorset{\gcomplementof\game\xset_\alpha}=n) \\
    \yset_\alpha &:= \xset_\alpha\cup\cset_{n-1}(\game\cut\xset_\alpha\cut\zset_\alpha)
\end{align*}
%%% END MATH
%
Notice that the sub-games $\game\cut\xset_\alpha$ and
$\game\cut\xset_\alpha\cut\zset_\alpha$ are well-defined, since
$\gcomplementof\game\xset_\alpha$ is closable in $\game$ (by Lemma~\ref{not:reach:trap:lemma}), and 
$\gcomplementof{\game\cut\xset_\alpha}\zset_\alpha$ is closable in $\game\cut\xset_\alpha$.

By the definition, for $\alpha \leq \beta$ we get
$\yset_\alpha \subseteq \xset_\beta \subseteq \yset_\beta$.
As in Lemma~\ref{reachability:tp:lemma},
we can prove that this sequence converges:
\begin{lem}
\label{tp:lemma}
There exists a $\tp\in\ord$ such that
$\xset_\tp = \yset_\tp = \bigcup_{\alpha\in\ord}\yset_\alpha$.
% and (ii) $\cset_n(\game)=\gcomplementof\game\xset_\tp$.
\end{lem}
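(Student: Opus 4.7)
The plan is to mimic the convergence argument from Lemma~\ref{reachability:tp:lemma}: establish monotonicity of the sequence $\{\yset_\alpha\}_{\alpha\in\ord}$, invoke a cardinality argument to find a stabilization ordinal, and then use a sandwich argument to conclude that $\xset_\tp$ and $\yset_\tp$ coincide at an appropriate $\tp$.

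First I would verify the chain of inclusions $\xset_\alpha\subseteq\yset_\alpha\subseteq\xset_{\alpha+1}\subseteq\yset_{\alpha+1}$ already asserted in the text. The inclusion $\xset_\alpha\subseteq\yset_\alpha$ is immediate from the definition $\yset_\alpha=\xset_\alpha\cup\cset_{n-1}(\game\cut\xset_\alpha\cut\zset_\alpha)$. For $\yset_\alpha\subseteq\xset_{\alpha+1}$, observe that $\targetset\subseteq\yforceset(\game,\targetset)$ for every $\targetset$ (this is built into the scheme of Section~\ref{reachability:section}, since $\reachset_0=\targetset$), and $\yset_\alpha\subseteq\bigcup_{\beta<\alpha+1}\yset_\beta$, hence
\[
  \yset_\alpha\;\subseteq\;\yforceset\bigl(\game,\textstyle\bigcup_{\beta<\alpha+1}\yset_\beta\bigr)\;=\;\xset_{\alpha+1}\;\subseteq\;\yset_{\alpha+1}.
\]
Consequently, $\{\yset_\alpha\}_{\alpha\in\ord}$ is a non-decreasing chain of subsets of $\states$.

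Next I would use a standard cardinality argument: any strictly increasing $\ord$-indexed chain of subsets of $\states$ has length bounded by $|\states|^+$, so there must exist a least ordinal $\mu\in\ord$ such that $\yset_\mu=\yset_{\mu+1}$. I would then set $\tp:=\mu+1$ and apply the sandwich
\[
  \yset_\mu\;\subseteq\;\xset_{\mu+1}\;\subseteq\;\yset_{\mu+1}\;=\;\yset_\mu,
\]
where the first inclusion uses extensiveness of $\yforceset$ as above. This forces $\xset_\tp=\xset_{\mu+1}=\yset_\mu=\yset_{\mu+1}=\yset_\tp$. To identify this common value with $\bigcup_{\alpha\in\ord}\yset_\alpha$, I would note that the non-decreasing chain stabilizes from $\mu$ onwards: indeed $\yset_{\mu+1}=\yset_\mu$ together with the inclusions $\yset_\mu\subseteq\xset_{\mu+2}\subseteq\yset_{\mu+2}$ and the recursive definitions of $\xset_\bullet,\yset_\bullet$ force $\yset_\beta=\yset_\mu$ for all $\beta\geq\mu$ (using that the definition of $\yset_\beta$ depends only on $\bigcup_{\gamma<\beta}\yset_\gamma$, which equals $\yset_\mu$ once $\beta\geq\mu+1$). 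Hence $\bigcup_{\alpha\in\ord}\yset_\alpha=\yset_\tp=\xset_\tp$.

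The only mildly delicate point is the treatment of limit ordinals in the chain. Concretely, one must check that at a limit $\lambda$ one still has $\yset_\gamma\subseteq\yset_\lambda$ for every $\gamma<\lambda$; this follows because $\bigcup_{\beta<\lambda}\yset_\beta\subseteq\xset_\lambda\subseteq\yset_\lambda$, again by extensiveness of $\yforceset$. With that observation in hand, the monotonicity is transfinite and the cardinality-plus-sandwich argument goes through verbatim, paralleling Lemma~\ref{reachability:tp:lemma}.
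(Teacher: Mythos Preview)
Your proposal is correct and follows essentially the same approach the paper intends: the paper merely asserts monotonicity (``for $\alpha \leq \beta$ we get $\yset_\alpha \subseteq \xset_\beta \subseteq \yset_\beta$'') and defers the convergence argument to Lemma~\ref{reachability:tp:lemma}, whereas you spell out the sandwich $\yset_\mu\subseteq\xset_{\mu+1}\subseteq\yset_{\mu+1}$ and the stabilization explicitly. The only cosmetic difference is that the paper (immediately after the lemma) picks $\tp$ as the least ordinal with $\xset_{\tp+1}=\xset_\tp$ rather than via stabilization of $\yset$, but the two choices are equivalent by the same sandwich.
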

\noindent
Let $\tp$ be the least ordinal s.t. $X_{\tp + 1} = X_{\tp}$ (which exists by the lemma above).
We define
\begin{align}
	\label{eq:cset:def}
	\cset_n(\game):=\gcomplementof\game{\xset_\gamma}
\end{align}
The following lemma shows the correctness of
the construction.
Recall that we assume that $\game$ is of rank $n$ and that it contains
a finite attractor.
\begin{lem}
\label{cn:infinite:termination:lemma}
There are memoryless strategies
$\xstrat_c\in\xnomemstrats(\game)$ for Player~$\px$ and
$\ystrat_c\in\ynomemstrats(\game)$ for Player~$\py$
such that the following two properties hold:
\begin{align}
	\cset_n(\game)\;&\subseteq\;\xwinset(\xstrat_c,\yfinitestrats(\game))(\game,{\xparity}^{=1} ) \\
	\gcomplementof\game{\cset_n(\game)}\;&\subseteq\;\ywinset(\ystrat_c,\xfinitestrats(\game))(\game,{\yparity}^{>0} )
\end{align}
\end{lem}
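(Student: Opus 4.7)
The plan is to proceed by induction on the rank $n$ of $\game$, establishing both inclusions simultaneously. The base case $n=0$ is immediate: the only available color is $0$, so $\zparity$ holds on every infinite run, giving $\cset_0(\game)=\states$ and rendering the second inclusion vacuous. For the inductive step I build $\xstrat_c$ and $\ystrat_c$ as memoryless combinations of the reachability strategies $\xforcestrat,\yforcestrat$ from Lemma~\ref{reachability:correct:lemma}, the trap-staying strategy of Lemma~\ref{trap:certainly:lemma}, and the memoryless strategies supplied by the induction hypothesis for the sub-game $\game':=\game\cut\xset_\gamma\cut\zset_\gamma$, which has rank at most $n-1$ since $\zset_\gamma$ contains every color-$n$ state of $\gcomplementof\game\xset_\gamma$.

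For Player~$\px$ on $\cset_n(\game)=\gcomplementof\game\xset_\gamma$: the fixpoint identity $\xset_\gamma=\yset_\gamma$ forces $\cset_{n-1}(\game')=\emptyset$, whence the induction hypothesis yields a memoryless $\xstrat'_c$ winning $\xparity$ with positive probability throughout $\game'$ against any finite-memory opponent. I define $\xstrat_c$ as $\xforcestrat(\game\cut\xset_\gamma,\colorset{\gcomplementof\game\xset_\gamma}=n)$ on $\zset_\gamma$ and as $\xstrat'_c$ on the remainder of $\cset_n(\game)$; since $\cset_n(\game)$ is a $\py$-trap by Lemma~\ref{not:reach:trap:lemma} and both pieces of $\xstrat_c$ already pick successors inside $\cset_n(\game)$, no run starting in $\cset_n(\game)$ escapes it. Fixing an arbitrary $\ystrat\in\yfinitestrats(\game)$, the finite attractor of $\game$ together with the finite memory of $\ystrat$ imply that the induced transition system $\tsys$ contains BSCCs, that every run reaches one almost surely, and that each BSCC $\bscc$ meets $\attractor\times\memory$ in a nonempty finite kernel visited infinitely often almost surely. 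Case~(i), $\bscc$ meets $\zset_\gamma\times\memory$: by strong connectedness of $\bscc$, the kernel reaches $\zset_\gamma$ with positive probability, $\xforcestrat$ then forces a color-$n$ visit with positive probability, and a Borel--Cantelli argument over the infinitely many kernel recurrences yields color-$n$ visits infinitely often almost surely, so $\xparity$ holds. Case~(ii), $\bscc\subseteq(\gcomplementof\game\xset_\gamma\setminus\zset_\gamma)\times\memory$: runs in $\bscc$ never encounter color $n$ and behave exactly as runs in $\game'$ under $\xstrat'_c$ and under $\ystrat$ restricted to $\game'$, so the induction hypothesis gives positive probability of $\xparity$, which the zero-one law for tail events in a recurrent BSCC promotes to probability one.

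For Player~$\py$ on $\xset_\gamma$: let $\alpha(s):=\min\setcomp{\alpha}{s\in\yset_\alpha}$, so every $s\in\xset_\gamma$ falls into $\xset_{\alpha(s)}\setminus\bigcup_{\beta<\alpha(s)}\yset_\beta$ or into $\cset_{n-1}(\game\cut\xset_{\alpha(s)}\cut\zset_{\alpha(s)})\setminus\xset_{\alpha(s)}$. On the first piece put $\ystrat_c(s):=\yforcestrat(\game,\bigcup_{\beta<\alpha(s)}\yset_\beta)(s)$; on the second put $\ystrat_c(s)$ to be the induction-hypothesis strategy in the sub-game indexed by $\alpha(s)$. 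Since the pieces are disjoint, $\ystrat_c$ is memoryless. I then prove by transfinite induction on $\alpha$ that Player~$\py$ wins $\yparity$ with positive probability from every $s\in\yset_\alpha$ against any finite-memory opponent. The key structural fact is that a Player~$\px$ transition out of $\cset_{n-1}(\game\cut\xset_\alpha\cut\zset_\alpha)$ cannot enter $\zset_\alpha$: otherwise the predecessor would satisfy the defining property of $\zset_\alpha$ as an $\xforceset$ and therefore already lie in $\zset_\alpha$, contradicting disjointness; combined with the (inductively established) $\px$-trap property of $\cset_{n-1}$ in the sub-game, this confines Player~$\px$ moves from $\cset_{n-1}$ to $\cset_{n-1}\cup\xset_\alpha$. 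Consequently, either the run remains in $\cset_{n-1}(\game\cut\xset_\alpha\cut\zset_\alpha)$ forever, in which case the sub-game induction hypothesis delivers $\yparity$ almost surely, or it reaches $\xset_\alpha$, where $\yforcestrat$ guarantees positive probability of descending to $\bigcup_{\beta<\alpha}\yset_\beta$, and the transfinite hypothesis supplies a positive winning probability from there.

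The main technical obstacles are twofold. First, the positive-probability-to-almost-sure upgrade in case~(ii) of the Player~$\px$ argument relies on the zero-one law for tail events in a (possibly infinite) BSCC: the finite attractor kernel, visited infinitely often almost surely, is essential for invoking this law despite the BSCC itself being infinite. Second, transferring the induction hypothesis from $\game'$ to $\game$ requires that Player~$\py$'s finite-memory behaviour confined to a BSCC $\bscc\subseteq\game'\times\memory$ be a legitimate finite-memory strategy of $\game'$; this rests on the closability of $\game'$ in $\game$ and on the $\px$-trap property of $\cset_{n-1}(\game')$, both of which must be maintained as part of the inductive invariant.
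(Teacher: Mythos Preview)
Your approach is essentially identical to the paper's: the same induction on $n$, the same three-way partition of $\cset_n(\game)$ for building $\xstrat_c$, the same $\alpha$-indexed decomposition of $\xset_\gamma$ for $\ystrat_c$, and the same BSCC analysis exploiting the finite attractor. Two remarks are in order.

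First, there is a small technical gap in your definition of $\xstrat_c$. You set $\xstrat_c:=\xforcestrat(\game\cut\xset_\gamma,\colorset{\gcomplementof\game\xset_\gamma}=n)$ on all of $\zset_\gamma$, but the force strategy from Lemma~\ref{reachability:correct:lemma} is defined only on $\reachset_{\alpha+1}\setminus\reachset_\alpha$, hence is undefined on the target $\reachset_0=\colorset{\gcomplementof\game\xset_\gamma}=n$ itself. Left partial there, a total extension could move into $\xset_\gamma$, and your invariant that runs stay in $\cset_n(\game)$ fails. The paper repairs this with a third case: on $\colorset{\zset_\gamma}=n$ it sets $\xstrat_c(\state):=\selectfrom{\postof\game\state\cap\complementof{\xset_\gamma}}$, which is nonempty because $\complementof{\xset_\gamma}$ is a $(\py)$-trap.

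Second, your case~(ii) argument appeals to a ``zero-one law for tail events in a recurrent BSCC''; this is correct but somewhat indirect. The paper instead observes that, because the finite attractor meets the BSCC, \emph{every} state of the BSCC is visited infinitely often almost surely, so the maximal color $\maxcolor$ appearing in the BSCC is deterministic and decides $\xparity$ outright; the induction hypothesis (positive-probability win for Player~$\px$ in $\game'$) then rules out $\maxcolor\bmod 2=\py$. This avoids having to name and justify a zero-one law, though the underlying mechanism---regeneration at the finitely many attractor configurations---is the same.
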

\begin{proof}
Using induction on $n$,
we define the strategies $\xstrat_c,\ystrat_c$, 
and prove that the strategies are indeed winning.

%%%%%%%%
\parg{Construction of $\xstrat_c$.}
For $n\geq 1$, recall that $\tp$ is the least ordinal s.t. $\xset_{\tp+1} = \xset_\tp$ (as defined above),
and define $\complementof{\xset_\tp}:=\gcomplementof\game{\xset_\tp}$ and 
$\complementof{\zset_\tp}:=\gcomplementof\game{\zset_\tp}$.
By definition, $\cset_n(\game)=\complementof{\xset_\tp}$.
For a state $\state\in\complementof{\xset_\tp}$, we define $\xstrat_c(\state)$ depending
on the membership of $\state$ in one of the following three partitions
of $\complementof{\xset_\tp}$:
\begin{enumerate}
\item
$\state\in \complementof{\xset_\tp}\cap\complementof{\zset_\tp}$.
Define $\game':=\game\cut\xset_\tp\cut\zset_\tp$.
By the definition of $\tp$, we have that
$\xset_{\tp+1}\setminus\xset_\tp=\emptyset$.
By the construction of $\yset_\alpha$ we have, for an arbitrary $\alpha$,
that $\cset_{n-1}(\game\cut\xset_\alpha\cut\zset_\alpha)=\yset_\alpha\setminus\xset_\alpha$,
and by the construction of $\xset_{\alpha+1}$, we have that
$\yset_\alpha\setminus\xset_\alpha\subseteq\xset_{\alpha+1}\setminus\xset_\alpha$.
By combining these facts, we obtain
$\cset_{n-1}(\game')\subseteq\xset_{\tp+1}\setminus\xset_\tp=\emptyset$.
Since $\game\cut\xset_\tp\cut\zset_\tp$ does not contain any states of
color $n$
(or higher), it follows by the induction hypothesis that there is a
memoryless strategy $\strat_1\in\xnomemstrats(\game')$ such that
$\gcomplementof{\game'}{\cset_{n-1}(\game')}\;\subseteq\;\xwinset(\strat_1,\yfinitestrats(\game'))(\game',{\xparity}^{>0} )$.
We define $\xstrat_c(\state):=\strat_1(\state)$.
(Later, we will prove that in fact $\strat_1$ is winning \as)
\item
$\state\in \complementof{\xset_\tp}\cap\colorset{\zset_\tp}< n$.
Define $\xstrat_c(\state):=\xforcestrat(\game\cut\xset_\tp,\colorset{\zset_\tp}=n)(\state)$.

\item
$\state\in \complementof{\xset_\tp}\cap\colorset{\zset_\tp}=n$.
%
%By  Lemma~\ref{not:reach:trap:lemma} we know that 
Lemma~\ref{not:reach:trap:lemma} shows
$\postof\game\state\cap\complementof{\xset_\tp}\neq\emptyset$.
Define 
$\xstrat_c(\state):=\selectfrom{\postof\game\state\cap\complementof{\xset_\tp}}$.
\end{enumerate}

\parg{Correctness of $\xstrat_c$.}
Let $\ystrat\in\yfinitestrats(\game)$ be a finite-memory strategy for Player~$\py$.
We show that
$\probm_{\game,\state,\xstrat_c,\ystrat}(\xparity )=1$
for any state $\state\in\cset_n(\game)$.

First, we give a straightforward proof that any run 
$\state_0\state_1\cdots\in\runsof{\game,\state,\xstrat_c,\ystrat}$ will always stay inside
$\complementof{\xset_\tp}$, i.e.,
 $\state_i\in\complementof{\xset_\tp}$ for all $i\geq 0$.
We use induction on $i$.
The base case follows from $\state_0=\state\in\complementof{\xset_\tp}$.
For the induction step, we assume that
$\state_i\in\complementof{\xset_\tp}$, and show that
$\state_{i+1}\in\complementof{\xset_\tp}$.
We consider the following cases:
\begin{itemize}
\item
$\state_i\in\ycut{\complementof{\xset_\tp}}\cup\rcut{\complementof{\xset_\tp}}$.
The result follows
since $\complementof{\xset_\tp}$ is a
($\py$)-trap in $\game$
(by Lemma~\ref{not:reach:trap:lemma}).
\item 
$\state_i\in\xcut{\complementof{\xset_\tp}\cap\complementof{\zset_\tp}}$.
We know that $\state_{i+1}=\strat_1(\state_i)$.
Since $\strat_1\in\xnomemstrats(\game\cut\xset_\tp\cut\zset_\tp)$ it follows that
$\state_{i+1}\in\complementof{\xset_\tp}\cap\complementof{\zset_\tp}$, and in particular
$\state_{i+1}\in\complementof{\xset_\tp}$.
\item
$\state_i\in\xcut{\complementof{\xset_\tp}\cap\colorset{\zset_\tp}< n}$.
We know that $\state_{i+1}=\xforcestrat(\game\cut\xset_{\tp},\colorset{\zset_\tp}=n) (\state_i)$.
The result follows by the fact that
$\xforcestrat(\game\cut\xset_{\tp},\colorset{\zset_\tp}=n)$ is a strategy
 in $\game\cut\xset_\tp$.
\item
$\state_i\in\xcut{\complementof{\xset_\tp}\cap\colorset{\zset_\tp}=n}$.
We have $\state_{i+1}\in
\postof\game{\state_i}\cap\complementof{\xset_\tp}$,
and in particular $\state_{i+1}\in\complementof{\xset_\tp}$.
\end{itemize}
We now prove the main claim.
This is where we need the assumption of finite attractor and
finite-memory strategies.
Let us again consider a run 
$\run\in\runsof{\game,\state,\xstrat_c,\ystrat}$.
We show that $\run$ is \as\  winning for Player~$\px$
with respect to $\xparity$ in $\game$.
Let $\ystrat$ be induced by a memory structure
$\memstratn=\memstrattuple$.
Let $\tsys$ be the transition system induced
by $\game$, $\xstrat_c$, and $\ystrat$.
As explained in Section~\ref{prels:section},
$\run$ will \as\  visit
a configuration $\tuple{\state_\attractor,\memconf}\in\bscc$ 
for some BSCC $\bscc$ in $\tsys$.
Since there exists a finite attractor,
each state that occurs in $\bscc$ will \as\ 
be visited infinitely often by $\run$.
Let $\maxcolor$ be the maximal color occurring among
the states of $\bscc$.
There are two possible cases:
\begin{itemize}
\item $\maxcolor=n$.
  Since each state in $\game$ has color at most $n$, 
  Player~$\px$ will \as\  win.
\item $\maxcolor<n$.
  This implies that
  $\setcomp{\state_\bscc}{\tuple{\state_\bscc,\memconf}\in\bscc}
  \subseteq\complementof\zset_\tp$, and hence
  Player~$\px$ uses the strategy
  $\strat_1$ to win the game in $\game\cut\xset_\tp\cut\zset_\tp$ \wpp
  Then, either 
  (i) $\maxcolor\bmod 2=\px$ in which case
  all states inside $\bscc$ are almost sure winning for
  Player~$\px$; or
  (ii) $\maxcolor\bmod 2=\py$ in which case
  all states inside $\bscc$ are almost sure losing for Player~$\px$.
  The result follows from the fact that case (ii) gives a contradiction
  since all states in 
  $\game\cut\xset_\tp\cut\zset_\tp$
  (including those in $\bscc$) are winning for Player~$\px$ \wpp
\end{itemize}

%%%%%%%%%%%
\parg{Construction of $\ystrat_c$.}

We define a strategy $\ystrat_c$ such that, for all $\alpha$,
the following inclusion holds:
$\xset_\alpha\subseteq\yset_\alpha\subseteq\ywinset(\ystrat_c,\xfinitestrats(\game))(\game,{\yparity}^{>0})$.
The result then follows from the definition of $\cset_n(\game)$.
The inclusion $\xset_\alpha\subseteq\yset_\alpha$ holds by the definition of $\yset_\alpha$.
For any state $\state\in\gcomplementof\game{\cset_n(\game)}$,
we define $\ystrat_c(\state)$ as follows.
Let $\alpha$ be the smallest ordinal such that $s\in\yset_\alpha$.
Such an $\alpha$ exists by the well-ordering of ordinals
and since $\gcomplementof\game{\cset_n(\game)}=\bigcup_{\beta\in\ord}\xset_\beta=\bigcup_{\beta\in\ord}\yset_\beta$.
Now there are two cases:
\begin{itemize}
\item
  $\state\in\xset_\alpha\setminus\bigcup_{\beta<\alpha}\yset_\beta$.
  Define
  $\ystrat_c(\state):=
  f_1(\state):=
  \yforcestrat(\game,\bigcup_{\beta<\alpha}\yset_\beta)(\state)$.

\item
  $\state\in\cset_{n-1}(\game\cut\xset_\alpha\cut\zset_\alpha)$.
  By the induction hypothesis (on $n$),
  there is a memoryless strategy
  $\strat_2\in\ynomemstrats(\game\cut\xset_\alpha\cut\zset_\alpha)$ of Player~$\py$ such that
  $\state\in
  \ywinset(\strat_2,\xfinitestrats(\game\cut\xset_\alpha\cut\zset_\alpha))
  (\game\cut\xset_\alpha\cut\zset_\alpha,{\yparity}^{=1} )$.
  Define $\ystrat_c(\state):=\strat_2(\state)$.
\end{itemize}
\parg{Correctness of $\ystrat_c$.}
Let $\xstrat\in\xfinitestrats(\game)$ be a finite-memory strategy for
Player~$\px$.
We now use induction on $\alpha$ to show that
$\probm_{\game,\state,\ystrat_c,\xstrat}(\yparity )>0$
for any state $\state\in\yset_\alpha$.
There are three cases:\looseness=-1

\begin{enumerate}
\item
  If $\state\in\bigcup_{\beta<\alpha}\yset_\beta$,
  then $\state \in \yset_\beta$ for some $\beta < \alpha$
  and the result follows
  by the induction hypothesis on $\beta$.
\item
  If $\state\in\xset_\alpha \setminus \bigcup_{\beta<\alpha}\yset_\beta$, then we know that
  Player~$\py$ can use $\strat_1$ to force the game \wpp\  to
  $\bigcup_{\beta<\alpha}\yset_\beta$
  from which she wins \wpp
\item
  If $\state\in\cset_{n-1}(\game\cut\xset_\alpha\cut\zset_\alpha)$, then
  Player~$\py$ uses $\strat_2$.
  There are now two sub-cases:
  either (i) there is a run from $s$
  consistent with $\xstrat$ and $\ystrat_c$
  that reaches $\xset_\alpha$;
  or (ii) there is no such run.

  In sub-case (i), the run reaches $\xset_\alpha$ \wpp\ 
  Then, by cases~1 and~2, Player~$\py$ wins \wpp\looseness=-1

  In sub-case (ii), all runs stay forever outside $\xset_\alpha$.
  So the game is in effect played on $\game\cut\xset_\alpha$.
  Notice then that any run from $\state$ that is consistent
  with $\xstrat$ and $\ystrat_c$
  stays forever in $\game\cut\xset_\alpha\cut\zset_\alpha$.
  The reason is that (by Lemma~\ref{not:reach:trap:lemma})
  $\gcomplementof{\game\cut\xset_\alpha}\zset_\alpha$
  is an $\px$-trap in
  $\game\cut\xset_\alpha$.
  Since all runs remain inside $\game\cut\xset_\alpha\cut\zset_\alpha$,
  Player~$\py$ wins \wpp\  (even \as) wrt.\ $\yparity$ using $\strat_2$.
  \qedhere
\end{enumerate}

%%%%%%%%%%%%%%%

\end{proof}
The following theorem  follows immediately from the  previous lemmas.
\begin{thm}
  Stochastic parity games with almost sure winning conditions on
  infinite graphs are memoryless determined, provided there exists a
  finite attractor and the players are restricted to finite-memory
  strategies.
\end{thm}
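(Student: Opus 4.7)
The plan is to extract the theorem almost directly from Lemma~\ref{cn:infinite:termination:lemma}, whose two inclusions already furnish the memoryless strategies for both players together with their winning regions. So the main work has been done; only the packaging as a determinacy statement remains.

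First, I would set $\px = n \bmod 2$ and apply Lemma~\ref{cn:infinite:termination:lemma} to the game $\game$ (of rank $n$) to obtain the memoryless strategies $\xstrat_c \in \xnomemstrats(\game)$ and $\ystrat_c \in \ynomemstrats(\game)$. This gives
\[
\cset_n(\game) \subseteq \xwinset(\xstrat_c, \yfinitestrats(\game))(\game, \xparity^{=1})
\]
and
\[
\gcomplementof\game{\cset_n(\game)} \subseteq \ywinset(\ystrat_c, \xfinitestrats(\game))(\game, \yparity^{>0}).
\]
By the very definition in equation~(\ref{eq:cset:def}), $\cset_n(\game)$ and $\gcomplementof\game{\cset_n(\game)}$ form a partition of $\states$, so every state lies in exactly one of these winning regions.

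Next I would observe that the parity winning conditions are complementary: on any infinite run, the largest color occurring infinitely often is well-defined (since the color set $\{0,\dots,n\}$ is finite), and it is either even or odd. Hence $\xparity$ and $\yparity$ are negations of each other on runs, so $\probm(\yparity) > 0$ is logically equivalent to $\probm(\xparity) < 1$. Consequently, winning \wpp\ for Player~$\py$ wrt.\ $\yparity$ from a state $\state$ against all finite-memory strategies of Player~$\px$ is exactly the statement that Player~$\px$ does \emph{not} win \as\ wrt.\ $\xparity$ from $\state$. Substituting this into the second inclusion yields
\[
\gcomplementof\game{\cset_n(\game)} \subseteq \gcomplementof\game{\xwinset(\xstrat, \yfinitestrats(\game))(\game, \xparity^{=1})}
\]
for \emph{every} finite-memory strategy $\xstrat$ of Player~$\px$; combined with the first inclusion this fixes the almost-sure winning region to be exactly $\cset_n(\game)$, and symmetrically the \wpp\ winning region for Player~$\py$ is exactly its complement.

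Finally I would conclude: both regions are witnessed by \emph{memoryless} strategies ($\xstrat_c$ and $\ystrat_c$, respectively) taken from Lemma~\ref{cn:infinite:termination:lemma}, so the game is memoryless determined in the sense defined in Section~\ref{prels:section}. There is really no obstacle left; the only subtlety worth double-checking is the semantic equivalence of $\{\probm(\xparity) = 1\}$ and $\{\probm(\yparity) = 0\}$ under finite-memory opponents, which holds because $\xparity \vee \yparity$ is true on \emph{every} run (not merely almost surely), so $\probm(\xparity) + \probm(\yparity) = 1$ for any pair of strategies.
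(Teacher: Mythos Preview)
Your proposal is correct and follows exactly the approach the paper takes: the paper states that the theorem ``follows immediately from the previous lemmas,'' and what you have written is precisely the unpacking of that immediacy from Lemma~\ref{cn:infinite:termination:lemma}. Your added observation that $\xparity$ and $\yparity$ are complementary on every run (so that $\yparity^{>0}$ coincides with $\xparity^{<1}$) is the one detail the paper leaves implicit, and you have handled it correctly.
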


\parg{Remark.}
We can compute both the \as\  winning set and the \wpp\  winning set
for both players as follows.
Let $\maxcolor$ be the maximal color occurring in the game.
Then:
\begin{itemize}
\item Player~$\px$ wins \as\  in $\cset_{\maxcolor}(\game)$ and \wpp\  in $\gcomplementof\game\cset_{\maxcolor+1}(\game)$;
\item Player~$\py$ wins \as\  in $\cset_{\maxcolor+1}(\game)$ and \wpp\  in $\gcomplementof\game\cset_{\maxcolor}(\game)$.
\end{itemize}

\section{Application to lossy channel systems}

\label{sec:application:sglcs}

\subsection{Lossy channel systems}
\label{sglcs:section}

A \emph{lossy channel system (LCS)}
is a finite-state machine
equipped with a finite number of unbounded fifo channels (queues) \cite{AbJo:lossy}.
The system is \emph{lossy} in the sense that, before and after a transition, an
arbitrary number of messages may be lost from the channels.
We consider \emph{stochastic game-LCS (SG-LCS)}: each individual %
message is lost independently with probability $\lossp$ 
in every step, where
$\lossp >0$ is a parameter of the system.
%
%In addition, 
The set of control states is 
partitioned into states belonging to Player~$\pz$ and~$\po$.
The player who owns the current control state chooses an enabled
outgoing transition. % to take.

Formally, a SG-LCS of rank $n$ is a tuple 
 $\sglcs=\sglcstuple$ where $\lcsstates$ is a
finite set of \emph{control states} partitioned into control states
$\lcsstatesz,\lcsstateso$ of Player~$\pz$ 
and~$\po$; % $\lcsstateso$ of \playerb;
$\channels$ is a finite set of \emph{channels}, $\msgs$ is a finite
set called the \emph{message alphabet}, $\lcstransitions$ is a set of
\emph{transitions}, $0<\lossp<1$ is the \emph{loss rate}, and
$\lcscoloring:\lcsstates\to\{0,\dots,n\}$ is the {\it coloring} function.
Each transition $\lcstransition\in\lcstransitions$ is of the form
$\lcsstate\transitionx{\op}\lcsstate'$, where
$\lcsstate,\lcsstate'\in\lcsstates$ and $\op$ is one of
the following three forms:
$\channel!\msg$ (send message $\msg\in\msgs$ in channel
$\channel\in\channels$), $\channel?\msg$ (receive message $\msg$ from
channel $\channel$), or $\nop$ (do not modify the channels).

The SG-LCS $\sglcs$ induces a game 
$\game=\gametuple$, where
$\states=\lcsstates\times(\msgs^*)^\channels\times\{0,1\}$.
That is, each state in the game (also called a \emph{configuration})
consists of a control state, a function
that assigns a finite word over the message alphabet to each channel,
and one of the symbols 0 or 1.
States where the last symbol is 0 are random:
$\rstates=\lcsstates\times(\msgs^*)^\channels\times\{0\}$.
The other states belong to a player according to the control state:
$\statesx=\lcsstatesx\times(\msgs^*)^\channels\times\{1\}$.
Transitions out of states of the form
$\state=(\lcsstate,\chassignment,1)$ model transitions in 
$\lcstransitions$ leaving control state $\lcsstate$.
On the other hand, transitions leaving configurations of the form
$\state=(\lcsstate,\chassignment,0)$ model message losses.
More precisely, transitions are defined as follows:\looseness=-1

\begin{itemize}
\item 
If $\state=(\lcsstate,\chassignment,1),
\state'=(\lcsstate',\chassignment',0)\in\states$, then 
we have $\state\transitionx{}\state'$ iff
$\lcsstate\transitionx{\op}\lcsstate'$ is a transition in $\lcstransitions$ and
(i)
  if $\op = \nop$, then $\chassignment=\chassignment'$;
(ii)
  if $\op = \channel!\msg$, then $\chassignment\channel = w$ and
  $\chassignment' = \chassignment[\channel \mapsto w \cdot \msg]$
%  $\chassignment'(\channel)=\chassignment(\channel)\msg$, and for all
%  $\channel'\in\channels-\{\channel\}$,
%  $\chassignment'(\channel')=\chassignment(\channel')$; and
(iii)
  if $\op = \channel?\msg$, then $\chassignment\channel = \msg\cdot w$ and
  $\chassignment' = \chassignment[\channel \mapsto w]$,
where the notation $\chassignment[\channel \mapsto w]$ represents the channel assignment which is the same as $\chassignment$
except that it maps $\channel$ to the word $w \in M^*$.
%  $\chassignment(\channel)=\msg\chassignment'(\channel)$, and for all
%  $\channel'\in\channels-\{\channel\}$,
%  $\chassignment'(\channel')=\chassignment(\channel')$.
%
%Then, there is a transition $\state\transition\state'$ in the game iff $\state\transitionx{\op}\state'$ for some operation $\op$.

\item 
To model message losses, we introduce the subword ordering $\preceq$
on words: $x\preceq y$ iff $x$ is a word obtained by removing zero or
more messages from arbitrary positions of $y$.
This is extended to channel contents
$\chassignment,\chassignment'\in(\msgs^*)^\channels$ by
$\chassignment\preceq\chassignment'$ iff
$\chassignment(\channel)\preceq\chassignment'(\channel)$ for all
channels $\channel\in\channels$, and to configurations
$\state=(\lcsstate,\chassignment,i),\state'=(\lcsstate',\chassignment',i')\in\states$
by $\state\preceq\state'$ iff $\lcsstate=\lcsstate'$,
$\chassignment\preceq\chassignment'$, and $i=i'$.
For any $\state=(\lcsstate,\chassignment,0)$ and any $\chassignment'\preceq\chassignment$, there is a transition
$\state\transition(\lcsstate,\chassignment',1)$.
The probability of random transitions is given by
$\probp((\lcsstate,\chassignment,0),(\lcsstate,\chassignment',1)) =
a\cdot\lossp^{c-b}\cdot(1-\lossp)^c$, where $a$ is the number of ways to
obtain $\chassignment'$ by losing messages in $\chassignment$, $b$ is
the total number of messages in all channels of $\chassignment$, and $c$ is the
total number of messages in all channels of $\chassignment'$
(see \cite{Parosh:etal:attractor:IC} for details).\looseness=-1
\end{itemize}

\noindent
Every configuration of the form $(\lcsstate,\chassignment,0)$ has at least one
successor, namely $(\lcsstate,\chassignment,1)$.
If a configuration $(\lcsstate,\chassignment,1)$ does not have successors
according to the rules above, then we add a transition
$(\lcsstate,\chassignment,1)\transition(\lcsstate,\chassignment,0)$,
to ensure that the induced game is sink-free.
%(this is not a restriction; only a technical detail in order
%to make the exposition go through smoothly).

Finally, for a configuration $\state=(\lcsstate,\chassignment,i)$, we define
$\coloring(\state):=\lcscoloring(\lcsstate)$.
Notice that the graph of the game is bipartite, in the sense that
a configuration in $\rstates$ has only
transitions to configurations in $\zocut\states$,
and vice versa.

We say that a set of channel contents $\tt X \subseteq (\msgs^*)^\channels$
is \emph{regular} if it is a finite union of sets of the form $\tt Y \subseteq (\msgs^*)^\channels$
where $\tt Y(c)$ is a regular subset of $\msgs^*$ for every $c \in \channels$
(this coincides with the notion of recognisable subset of $(\msgs^*)^\channels$;
cf. \cite{Berstel}).
We extend the notion of regularity to a set of configurations $P \subseteq \states$
by saying that $P$ is \emph{regular} iff, for every control state $\lcsstate \in \lcsstates$ and $i \in \{0,1\}$,
there exists a regular set of channel contents $\tt X_{\lcsstate, i} \subseteq (\msgs^*)^\channels$
s.t. $P = \setcomp {(\lcsstate, \chassignment, i)} {\lcsstate \in \lcsstates, i \in \{0,1\}, \chassignment \in \tt X_{\lcsstate, i}}$.

In the qualitative {\it parity game problem} for SG-LCS, we 
want to characterize the sets of configurations
where Player~$\px$ can force the \xparity{} condition to hold \as,
for both players.

\subsection{From scheme to algorithm}
\label{algorithm:section}
We transform the scheme of Section~\ref{parity:section}
into an algorithm for deciding the \as\ parity game problem for SG-LCS.
Consider an SG-LCS $\sglcs=\sglcstuple$ and
the induced game $\game=\gametuple$
of some rank $n$.
Furthermore,  assume that the players
are restricted to finite-memory strategies.
We show the following.
\begin{thm}
\label{thm:memoryless:determinacy}
The sets of winning configurations for Players~$\pz$~and~$\po$
are effectively computable as regular sets of configurations.
Furthermore, from each configuration, memoryless strategies suffice for the winning player.
\end{thm}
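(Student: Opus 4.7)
The strategy is to instantiate the general scheme of Section~\ref{parity:section} to SG-LCS and verify that every step is effective and preserves regularity.

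\textbf{Finite attractor and memoryless determinacy.} First I would check that $\game$ admits the finite attractor $\attractor := \lcsstates \times \{\emptychannels\} \times \{0,1\}$: it is finite because $\lcsstates$ is finite, and it is visited almost surely from any configuration under any pair of strategies, because at each random step every pending message is lost independently with probability $\lossp > 0$, so with bounded-below probability all channels empty in a single random step; a Borel--Cantelli argument then forces an a.s.\ visit to $\attractor$. Since the players are restricted to finite-memory strategies, Lemma~\ref{cn:infinite:termination:lemma} applies and immediately yields the memoryless half of the statement, with the memoryless winning strategies constructed exactly as in that lemma.

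\textbf{Effective computation of winning sets.} For the algorithmic part I would prove by induction on the rank $n$ that $\cset_n(\game)$ is computable as a regular set of configurations. The base case is immediate from $\cset_0(\game) = \states$. In the inductive step I would unfold the three-stage construction of $\xset_\alpha$, $\zset_\alpha$, and $\yset_\alpha$ symbolically. The two reachability operators $\xforceset(\cdot, \cdot)$ and $\yforceset(\cdot, \cdot)$ are computable on regular targets by the standard saturation procedures developed for LCS, iterating $\preof\game\cdot$, $\dualpreof\game\cdot$, and Boolean combinations; regular (recognisable) subsets of $(\msgs^*)^\channels$ are effectively closed under all these operations. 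The sub-game $\game\cut\xset_\alpha\cut\zset_\alpha$ is a syntactic restriction of the transition relation to a regular set of states, and the recursive invocation $\cset_{n-1}(\game\cut\xset_\alpha\cut\zset_\alpha)$ returns a regular set by the induction hypothesis.

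\textbf{Finite termination and the main obstacle.} The non-trivial point is that the transfinite sequence $\{\xset_\alpha\}$ stabilizes at a \emph{finite} ordinal $\tp$. I would establish a structural invariant --- each $\xset_\alpha$ is upward closed in the componentwise subword ordering on configurations --- and then invoke Higman's lemma, which makes $(\msgs^*)^\channels$ a well-quasi-order, so that any strictly increasing chain of upward-closed sets has length below $\omega$. Combined with the effective per-step computations, this yields a terminating algorithm producing regular representations of both winning sets. The hard part will be precisely this propagation: the upward-closedness invariant must be maintained through $\dualpreof\game\cdot$ and through the sub-game restrictions $\game\cut\xset_\alpha$, and verified coherently across the nested recursion on $n$, since message losses do not uniformly favour one player and the interaction between the LCS semantics and the game-theoretic operators is where the technical work concentrates.
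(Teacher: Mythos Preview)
Your overall plan matches the paper's structure: establish the finite attractor, invoke Lemma~\ref{cn:infinite:termination:lemma} for memoryless determinacy, and then argue effectiveness by showing that both fixpoint iterations terminate below $\omega$ and that every step preserves regularity. The finite-attractor argument and the per-step regularity closure are essentially as in the paper.

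The gap is in your termination argument for the \emph{outer} sequence $\{\xset_\alpha\}$. You propose to show that each $\xset_\alpha$ is upward closed in the subword order and then invoke Higman's lemma. This is not how the paper proceeds, and it is doubtful it can be made to work: $\xset_\alpha$ is the force-set toward $\bigcup_{\beta<\alpha}\yset_\beta$, and each $\yset_\beta$ already contains $\cset_{n-1}(\game\cut\xset_\beta\cut\zset_\beta)$, an almost-sure winning region for Player~$\py$ in a sub-game. There is no reason such winning regions should be upward closed --- adding messages to a channel may enable a receive transition that helps the \emph{other} player --- and you yourself flag propagating the invariant through $\dualpreof\game\cdot$ and sub-game restriction as ``the hard part'' without indicating how to do it. The paper uses Higman's lemma only for the \emph{inner} reachability iteration (Lemma~\ref{reachable:termination:lemma}), where the needed upward-closedness is that of the random-state slice $\rcut{\reachset_i\setminus\targetset}$, which follows directly from the loss semantics and the bipartite structure of the game graph.

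For the outer parity iteration the paper uses a different and much simpler argument (Lemma~\ref{parity:termination:lemma}) that relies only on the finite attractor, not on any order-theoretic invariant: each $\cset_{n-1}(\game\cut\xset_i\cut\zset_i)$ is a $\px$-trap (Lemma~\ref{cset:trap:lemma}), hence closable, hence intersects the finite attractor $\attractor$ (Lemma~\ref{closable:attractor:lemma}); these sets are pairwise disjoint for distinct $i$; therefore the sequence stabilizes after at most $|\attractor|$ steps. You should replace your proposed Higman-based outer termination argument with this attractor-counting one.
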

In the statement of the theorem, ``effectively'' means that
a finite description of the regular sets of winning configurations is computable.
We give the proof in several steps.
First, we show that the game induced by an SG-LCS contains a finite
attractor (Lemma~\ref{fattractors:lemma}).
Then, we show that the scheme in
Section~\ref{reachability:section} for computing winning configurations
wrt.\ reachability objectives
is guaranteed to terminate (Lemma~\ref{reachable:termination:lemma}).
Furthermore, we show that the scheme in
Section~\ref{parity:section} for computing winning configurations
wrt. \as\  parity objectives is guaranteed to terminate
(Lemma~\ref{parity:termination:lemma}).
Notice that 
Lemmas~\ref{reachable:termination:lemma}~and~\ref{parity:termination:lemma}
imply that for SG-LCS our transfinite constructions 
stabilize below $\omega$ (the first infinite ordinal).
Finally, we show that each step in the above two schemes 
can be performed using standard operations on regular languages
(Lemmas~\ref{reachability:compuability:lemma}~and~\ref{parity:compuability:lemma}).

\parg{Finite attractor.}
In \cite{Parosh:etal:attractor:IC} it was shown that
any Markov chain induced
by a Probabilistic LCS contains a finite attractor.
The proof can be carried over in a straightforward manner
to the current setting.
More precisely, the finite attractor is given by
$\attractor=(\lcsstates\times\emptychannels\times\{0,1\})$
where $\emptychannels(\channel)=\emptyword$ for each $\channel\in\channels$.
In other words, $\attractor$ is given by
the set of configurations in which all channels are empty.
  The proof relies on the observation that if the number of messages
  in some channel is sufficiently large, it is more likely that the number of
  messages decreases than that it increases in the next step.
This gives the following.
\begin{lem}
\label{fattractors:lemma}
$\game$ contains a finite attractor.
\end{lem}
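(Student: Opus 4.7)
\medskip
\noindent\textbf{Proof proposal.}
The plan is to adapt the finite-attractor argument for purely probabilistic LCS from \cite{Parosh:etal:attractor:IC} to the game setting by observing that the key drift estimate is independent of the players' choices. Concretely, I would take the candidate attractor to be $\attractor = \lcsstates \times \{\emptychannels\} \times \{0,1\}$, and define, for a configuration $\state = (\lcsstate, \chassignment, i)$, the potential $|\state| := \sum_{\channel \in \channels} |\chassignment(\channel)|$. Fix arbitrary (not necessarily finite-memory) strategies $\zstrat, \ostrat$ and an initial state $\state_0$; let $\state_0 \transition \state_1 \transition \cdots$ be the induced random run.

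The crucial observation is that in one full ``round'' consisting of a player/random-state transition followed by a loss transition, the total number of messages changes in a way that is dominated by a strategy-independent random process. A single player move can add, leave unchanged, or remove at most one message, so after the player step starting from $|\state_t| = c$ the count $c'$ satisfies $c' \leq c+1$ deterministically. In the subsequent loss step each of the $c'$ messages is independently retained with probability $1-\lossp$, so the conditional expectation of the count after the round is at most $(c+1)(1-\lossp)$. Hence there is a threshold $N_0 := \lceil (1-\lossp)/\lossp \rceil + 1$ and a constant $\eta > 0$ such that whenever $|\state_t| > N_0$ one has $E[\,|\state_{t+2}| \mid \state_t\,] \leq |\state_t| - \eta$, uniformly over $\zstrat, \ostrat$ and over the history. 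A standard super\-martingale/Foster-Lyapunov argument (exactly as in \cite{Parosh:etal:attractor:IC}) then yields that $|\state_t| \leq N_0$ infinitely often almost surely.

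From any configuration $\state = (\lcsstate, \chassignment, 0)$ with $|\state| \leq N_0$, the probability that the next loss transition empties all channels is at least $\lossp^{N_0}(1-\lossp)^0 = \lossp^{N_0} > 0$, and this lower bound is independent of the strategies because losses are purely stochastic. Configurations of the form $(\lcsstate, \chassignment, 1)$ with $|\state| \leq N_0$ either already move to a label-$0$ configuration with bounded count, or (in the sink-free repair case) loop to $(\lcsstate, \chassignment, 0)$; in either case $A$ is reached with positive probability bounded below uniformly in at most two rounds. Combining this uniform lower bound with the fact that the set $\{\, \state : |\state| \leq N_0 \,\}$ is hit infinitely often, a Borel--Cantelli type argument on the stopping-time excursions shows that $\attractor$ is visited almost surely, and in fact infinitely often a.s., yielding $\probm_{\game,\state_0,\zstrat,\ostrat}(\eventually \attractor) = 1$ as required.

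The main potential obstacle is making the supermartingale argument precise in the game setting, because the step-to-step dynamics is not a Markov chain when the strategies are history-dependent. However, the drift bound above depends only on $|\state_t|$ and on the strategy-independent loss kernel, so the standard arguments transfer verbatim once one phrases them in terms of the filtration generated by the run; no genuinely new difficulty arises beyond what is already handled in \cite{Parosh:etal:attractor:IC}.
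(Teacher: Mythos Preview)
Your proposal is correct and follows essentially the same approach as the paper: the paper does not give a detailed proof but simply refers to \cite{Parosh:etal:attractor:IC}, takes $\attractor=\lcsstates\times\{\emptychannels\}\times\{0,1\}$, and notes that the drift argument (``if the number of messages \dots\ is sufficiently large, it is more likely that the number of messages decreases than that it increases'') carries over to the game setting. Your write-up is a faithful and more explicit rendering of exactly this argument, including the observation that the drift bound is strategy-independent, which is precisely what makes the transfer from the Markov-chain case to the $2\frac{1}{2}$-player case go through.
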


\parg{Termination of Reachability Scheme.}

For a set of configurations $\stateset\subseteq\states$,
we  define the {\it upward closure} of $\stateset$ by
$\ucof\stateset:=\setcomp{\state}{\exists\state'\in\stateset.\,\state'\preceq\state}$.
A set $\ucset\subseteq\stateset\subseteq\states$
is said to be {\it $\stateset$-upward-closed}
(or {\it $\stateset$-u.c.} for short) if
$(\ucof{\ucset})\cap\stateset=\ucset$.
We say that $\ucset$ is {\it upward closed} if it is
$\states$-u.c.

\newcommand{\tpf}{j}

\begin{lem}
\label{higman:lemma}
  If $\stateset_0\subseteq \stateset_1\subseteq\cdots$, 
and for all $i$ it holds that
  $\stateset_i\subseteq \stateset$ and $\stateset_i$ is $\stateset$-u.c., then there is an $\tpf\in\nat$
  such that $\stateset_i=\stateset_\tpf$ for all $i\geq \tpf$.
\end{lem}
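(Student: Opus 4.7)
The plan is to reduce the lemma to the well-quasi-ordering (wqo) property of the configuration space, via Higman's lemma. Concretely, I will first replace each $\stateset_i$ by its upward closure in the ambient space $\states$, reason about the resulting chain of upward-closed sets, and then intersect back with $\stateset$.

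First I would note that, by the $\stateset$-u.c.\ assumption, each $\stateset_i$ admits the representation
\[
\stateset_i \;=\; (\ucof{\stateset_i}) \cap \stateset,
\]
and that monotonicity $\stateset_i \subseteq \stateset_{i+1}$ immediately gives $\ucof{\stateset_i} \subseteq \ucof{\stateset_{i+1}}$. Thus the sequence $\ucof{\stateset_0} \subseteq \ucof{\stateset_1} \subseteq \cdots$ is an increasing chain of upward-closed subsets of $\states = \lcsstates \times (\msgs^*)^\channels \times \{0,1\}$.

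Next I would invoke the fact that $(\states, \preceq)$ is a wqo. This is standard: by Higman's lemma, $(\msgs^*, \preceq)$ is a wqo on words; products of finitely many wqos are wqo, so $(\msgs^*)^\channels$ is wqo; and adjoining the finite components $\lcsstates$ and $\{0,1\}$ (ordered by equality) preserves this. A defining property of wqos is that every increasing chain of upward-closed sets stabilizes: there exists $\tpf \in \nat$ with $\ucof{\stateset_i} = \ucof{\stateset_\tpf}$ for all $i \geq \tpf$. (Equivalently, the complements form a decreasing chain of downward-closed sets, each characterized by its finite set of minimal forbidden elements.)

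Finally, intersecting with $\stateset$ yields
\[
\stateset_i \;=\; (\ucof{\stateset_i}) \cap \stateset \;=\; (\ucof{\stateset_\tpf}) \cap \stateset \;=\; \stateset_\tpf
\]
for all $i \geq \tpf$, which is the stated conclusion. There is no real obstacle here: the one step that requires care is simply citing the correct wqo fact and checking that the configuration space indeed satisfies it; once that is in place, the proof is a one-line passage from the ambient chain $\{\ucof{\stateset_i}\}$ to the relative chain $\{\stateset_i\}$.
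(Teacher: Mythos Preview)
Your proposal is correct and follows essentially the same route as the paper: pass to the upward closures $\ucof{\stateset_i}$, invoke Higman's lemma (equivalently, the wqo property of $(\states,\preceq)$) to stabilize that chain, then intersect back with $\stateset$ using the $\stateset$-u.c.\ hypothesis. The only difference is that you spell out why $(\states,\preceq)$ is a wqo, whereas the paper simply cites Higman's lemma and proceeds in three lines.
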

\begin{proof}
  By Higman's lemma \cite{higman:divisibility}, there is a $\tpf\in\nat$ s.t.
  $\stateset_i\!\uparrow=\stateset_\tpf\!\uparrow$ for all $i\geq \tpf$.
  Hence, $\stateset_i\!\uparrow\cap \stateset=\stateset_\tpf\!\uparrow\cap \stateset$
  for all $i\geq\tpf$.
  Since all $\stateset_i$ are $\stateset$-u.c.,
  $\stateset_i\!\uparrow\cap \stateset=\stateset_i$
  for all $i\geq\tpf$.
  So $\stateset_i=\stateset_\tpf$ for all $i\geq \tpf$.
\end{proof}

Now, we can show termination of the reachability scheme.
\begin{lem}
\label{reachable:termination:lemma}
  There exists a finite $\tpf\in\nat$ such
  that $\reachset_i=\reachset_\tpf$ for all $i\geq\tpf$.
\end{lem}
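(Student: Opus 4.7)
The plan is to derive termination from Lemma~\ref{higman:lemma} applied to the increasing chain $\reachset_0 \subseteq \reachset_1 \subseteq \cdots$ with ambient set $\stateset = \states$. For this application to go through, every $\reachset_\alpha$ must be upward closed with respect to the subword ordering $\preceq$ on SG-LCS configurations (extended componentwise); Higman's lemma in the form given above then immediately supplies a finite stabilization index $\tpf$.

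I would prove upward closure of $\reachset_\alpha$ by (transfinite) induction on $\alpha$. For the base case, the target $\targetset = \reachset_0$ must itself be upward closed; this holds for the target sets instantiated in the parity scheme of Section~\ref{parity:section}, namely $\colorset{\gcomplementof\game\xset_\alpha}=n$ and $\bigcup_{\beta<\alpha}\yset_\beta$, whose upward closedness follows from the fact that $\lcscoloring$ depends only on the control state, together with an outer induction showing that the relevant complements of the $\xset_\alpha$'s have the dual (upward closed) property.

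For the inductive step, each of the three operators appearing in the construction of $\reachset_{\alpha+1}$ must preserve upward closure. Random predecessors $\rcut{\preof\game{\reachset_\alpha}}$ are straightforward: if $(\lcsstate,\chassignment,0) \transition (\lcsstate,\chassignment',1) \in \reachset_\alpha$ via a loss transition and $\chassignment \preceq \chassignment_t$, then $(\lcsstate,\chassignment_t,0)$ can lose the same messages plus the additional ones and reach the same successor in $\reachset_\alpha$. Player-$\px$ predecessors $\xcut{\preof\game{\reachset_\alpha}}$ are handled by direct monotonicity of $\nop$ and send operations, and for receive operations by combining LCS monotonicity with the upward closure of $\reachset_\alpha$ to exhibit a $\preceq$-dominating successor still in $\reachset_\alpha$.

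The main obstacle is the dual-predecessor step $\ycut{\dualpreof\game{\reachset_\alpha}}$: appending messages to a Player-$\py$ configuration can freshly enable receive transitions that were disabled at the smaller configuration, and these new successors need to be shown to lie in $\reachset_\alpha$ as well. I would dispatch this case by exploiting the bipartite structure of the induced game, where actual message losses are confined to the random states while deterministic LCS operations occur at the player states, together with the upward closure of $\reachset_\alpha$ applied to the $\preceq$-dominating successors guaranteed by LCS monotonicity. Once all three cases are verified, Lemma~\ref{higman:lemma} closes the argument and yields the required finite $\tpf$.
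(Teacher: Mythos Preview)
Your approach of showing that every $\reachset_i$ is upward closed does not go through: the one-step $\preof\game{\cdot}$ operator on \emph{player} states fails to preserve upward closure, because receive transitions in the bipartite SG-LCS game are not monotone in the subword order. Concretely, take a single channel $c$, a control state $\lcsstate \in \lcsstatesx$ whose only outgoing transition is $\lcsstate \transitionx{c?a} \lcsstate'$, and the upward-closed target $\targetset = \{(\lcsstate', \chassignment, i) : \chassignment, i\}$. Then $(\lcsstate, a, 1) \in \xcut{\preof\game{\targetset}} \subseteq \reachset_1$, but $(\lcsstate, ba, 1) \succeq (\lcsstate, a, 1)$ is \emph{not} in $\reachset_1$: the receive $c?a$ is disabled (the channel head is $b$), so the only successor is the sink-free self-loop to $(\lcsstate, ba, 0) \notin \targetset$. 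Hence $\reachset_1$ is not upward closed, and your induction breaks already at the $\xcut{\preof\game{\cdot}}$ case---before you even reach the dual-predecessor case you flagged as the main obstacle. The ``LCS monotonicity'' you invoke holds only for the two-step relation that interleaves a probabilistic loss, not for the single player step appearing in the definition of $\reachset_{i+1}$. This also means your plan additionally rests on an unnecessary hypothesis (upward closure of $\targetset$) that the lemma does not assume and that would require its own nested induction through the parity scheme.

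The paper sidesteps both problems by a more localized argument: it shows only that $\rcut{\reachset_i \setminus \targetset}$ is $(\gcomplementof\game\targetset)$-upward-closed, which needs nothing about player steps or about $\targetset$. From a random state $(\lcsstate,\chassignment,0)$ every $(\lcsstate,\chassignment',1)$ with $\chassignment' \preceq \chassignment$ is a successor, and any such successor is trivially still a successor of every $\succeq$-larger random state; this already places the larger state in $\reachset_i$. Lemma~\ref{higman:lemma} then stabilizes $\rcut{\reachset_i}$ at some finite $\tpf'$, and bipartiteness is used \emph{afterwards}, not inside the induction: since player states see only random successors, one has $\xcut{\preof\game{\reachset_i}} = \xcut{\preof\game{\rcut{\reachset_i}}}$ and $\ycut{\dualpreof\game{\reachset_i}} = \ycut{\dualpreof\game{\rcut{\reachset_i}}}$, so the full sequence stabilizes at $\tpf = \tpf'+1$. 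You should replace the global upward-closure induction by this random-states-only argument.
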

\begin{proof}
First, we show that $\rcut{\reachset_i\setminus\targetset}$ is 
$(\gcomplementof\game\targetset)$-u.c. for all $i\in\nat$.
We use induction on $i$.
For $i=0$ the result is trivial since
$\reachset_i\setminus\targetset=\emptyset$.
For $i>0$, suppose that
$\state=(\lcsstate,\chassignment,0)\in\rcut{\reachset_i}\setminus\targetset$.
This means that 
$\state\transition(\lcsstate,\chassignment',1)\in\reachset_{i-1}$
for some $\chassignment'\preceq\chassignment$, and hence
$\state'\transition(\lcsstate,\chassignment',1)$
for all $\state'$ s.t. $\state\preceq\state'$.

By Lemma~\ref{higman:lemma},
there is a $\tpf'\in\nat$
  such that $\rcut{\reachset_i}\setminus\targetset=\rcut{\reachset_{\tpf'}}\setminus\targetset$ 
 for all $i\geq \tpf'$.
Since $\reachset_i\supseteq\targetset$ for all $i\geq 0$ it follows that 
%there is a $\tp'\in\nat$ such that
 $\rcut{\reachset_i}=\rcut{\reachset_{\tpf'}}$
 for all $i\geq \tpf'$.

Since the graph of $\game$ is bipartite
(as explained in Section~\ref{sglcs:section}),
$\xcut{\preof\game{\reachset_i}}=\xcut{\preof\game{\rcut{\reachset_i}}}$
and
$\ycut{\dualpreof\game{\reachset_i}}=\ycut{\dualpreof\game{\rcut{\reachset_i}}}$.
Since $\rcut{\reachset_i}=\rcut{\reachset_{\tpf'}}$ for all $i\geq\tpf'$,
we have
$\xcut{\preof\game{\reachset_i}}=\xcut{\preof\game{\rcut\reachset_{\tpf'}}}\subseteq\reachset_{\tpf'+1}$ and
$\ycut{\dualpreof\game{\reachset_i}}=\ycut{\dualpreof\game{\rcut\reachset_{\tpf'}}}\subseteq\reachset_{\tpf'+1}$.
It then follows that $\reachset_i=\reachset_{\tpf}$ for all $i\geq\tpf:=\tpf'+1$.
\end{proof}

\parg{Termination of Parity Scheme.}

We prove that the scheme from Section~\ref{parity:section}
terminates under the condition that the reachability sets are computable
and that there exists a finite attractor.
This suffices since, by the part above, the reachability scheme terminates,
thus yielding computability of the reachability set.
However, here we prove termination of the parity scheme with no further assumption on the reachability sets other than their computability. 

We first prove two immediate auxiliary lemmas.
\begin{lem}
  \label{closable:attractor:lemma}
  A closable set intersects every attractor.
\end{lem}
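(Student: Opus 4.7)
The plan is to show that if $\stateset$ is a (nonempty) closable set then both players can cooperate to keep any run starting in $\stateset$ inside $\stateset$ forever (surely, not just almost surely). Combining this with the defining property of an attractor then forces $\stateset \cap \attractor$ to be nonempty.

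First, I would exhibit explicit memoryless strategies $\zstrat \in \znomemstrats(\game)$ and $\ostrat \in \onomemstrats(\game)$ that trap every run inside $\stateset$. For each $\state \in \zcut\stateset$, sink-freeness of $\stateset$ guarantees that $\postof\game{\state} \cap \stateset \neq \emptyset$, so one can set $\zstrat(\state) := \selectfrom{\postof\game{\state} \cap \stateset}$; symmetrically for $\ostrat$ on $\ocut\stateset$. For random states $\state \in \rcut\stateset$, closability gives $\postof\game{\state} \subseteq \stateset$, so no strategic choice is needed. A straightforward induction on run length — identical in structure to the argument used in Lemma~\ref{trap:certainly:lemma} — then shows that, for every $s \in \stateset$, every $\run \in \runsof{\game,s,\zstrat,\ostrat}$ remains in $\stateset$ at all times. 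In particular $\probm_{\game,s,\zstrat,\ostrat}(\always \stateset) = 1$.

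Second, I would combine this with the attractor property. Fix any $s \in \stateset$ (which exists since $\stateset$ is assumed nonempty). By the definition of attractor applied to the strategy pair $\zstrat, \ostrat$, we have $\probm_{\game,s,\zstrat,\ostrat}(\eventually \attractor) = 1$. Intersecting the two almost-sure events yields $\probm_{\game,s,\zstrat,\ostrat}(\always \stateset \wedge \eventually \attractor) = 1$, so in particular at least one run visits some configuration in $\stateset \cap \attractor$. Hence $\stateset \cap \attractor \neq \emptyset$.

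There is no serious obstacle here: the only point worth care is the simultaneous construction of memoryless strategies for \emph{both} players that keep the run in $\stateset$, whereas Lemma~\ref{trap:certainly:lemma} only handled the one-player trap case. Once that is done, the conclusion is an immediate intersection of the two almost-sure events.
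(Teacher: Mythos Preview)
Your proposal is correct and follows essentially the same approach as the paper: both argue that in a closable set the players can choose strategies keeping the run inside the set surely, and then observe this is incompatible with the attractor being disjoint from the set since the attractor is reached almost surely. The paper's proof is a two-sentence sketch of exactly this idea, while you spell out the explicit memoryless strategy construction and the intersection of the two almost-sure events; your parenthetical nonemptiness assumption is indeed needed and is implicit in the paper's use of the lemma.
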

\begin{proof}
  In any closable set, the players can choose strategies that force
  the game to remain in the set surely.
  The lemma now follows since an attractor is visited almost surely
  by any run, and this would be impossible if the attractor did not
  have any element in the set.
\end{proof}

\begin{lem}
\label{cset:trap:lemma}
$\cset_n(\game)$ is a $(\py)$-trap.
\end{lem}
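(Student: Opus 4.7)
The plan is to observe that $\cset_n(\game)$ is by construction the complement of a reachability force set, and then appeal to Lemma~\ref{not:reach:trap:lemma} with the roles of the two players exchanged. Unfolding the definitions, equation~(\ref{eq:cset:def}) gives $\cset_n(\game) = \gcomplementof\game{\xset_\tp}$, and by construction $\xset_\tp = \yforceset(\game, \bigcup_{\beta<\tp}\yset_\beta)$; that is, $\xset_\tp$ is the reachability set computed by the scheme of Section~\ref{reachability:section} with Player~$\py$ in the forcing role and target $\targetset := \bigcup_{\beta<\tp}\yset_\beta$.

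Next I would note that the reachability scheme is symmetric in the two players: the recursive definition of $\reachset_\alpha$ is parameterized by which player is doing the forcing, and swapping the roles of $\px$ and $\py$ throughout yields $\xavoidset$ as the complement of $\yforceset$. Consequently Lemma~\ref{not:reach:trap:lemma}, which states that $\yavoidset(\game,\targetset) = \gcomplementof\game{\xforceset(\game,\targetset)}$ is an $\px$-trap, gives by the same argument (with $\px$ and $\py$ exchanged) that $\xavoidset(\game,\targetset) = \gcomplementof\game{\yforceset(\game,\targetset)}$ is a $\py$-trap, for any target $\targetset$.

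Instantiating this swapped form of the lemma with $\targetset := \bigcup_{\beta<\tp}\yset_\beta$ yields that $\gcomplementof\game{\yforceset(\game,\bigcup_{\beta<\tp}\yset_\beta)} = \gcomplementof\game{\xset_\tp} = \cset_n(\game)$ is a $\py$-trap, concluding the argument.

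I do not anticipate a real obstacle here: the result is essentially a direct corollary of the reachability lemma once the equalities $\cset_n(\game) = \gcomplementof\game{\xset_\tp}$ and $\xset_\tp = \yforceset(\game, \bigcup_{\beta<\tp}\yset_\beta)$ are spelled out. The only point worth flagging is the symmetric reuse of Lemma~\ref{not:reach:trap:lemma}; if one prefers to avoid the informal ``by symmetry'', one can simply reprove the three bullet points of that lemma's proof verbatim with $\px$ and $\py$ swapped, noting that the argument never used any property distinguishing the two players' state sets other than which player was forcing the target.
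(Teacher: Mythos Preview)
Your argument is essentially the paper's own: for $n\geq 1$ the paper also simply observes that $\cset_n(\game)$ is defined in equation~(\ref{eq:cset:def}) as the complement of a force set and invokes Lemma~\ref{not:reach:trap:lemma}. The only thing you are missing is the base case: $\cset_0(\game)$ is defined directly as $\states$ (not via equation~(\ref{eq:cset:def}), which is only set up for $n\geq 1$), so your unfolding does not apply there; but $\states$ is trivially a $(\py)$-trap, so a one-line remark covers it.
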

\begin{proof}
$\cset_0(\game)$ is trivially a $(\py)$-trap.
For  $i\geq 1$, the result follows immediately from the definition of $\cset_n(\game)$ in Eq~\ref{eq:cset:def}
as the complement of a force set (by Lemma~\ref{not:reach:trap:lemma}).
\end{proof}

\begin{lem}
\label{parity:termination:lemma}
There is a finite $\tpf\in\nat$ such that $\xset_i=\xset_\tpf$ for all
$i\geq\tpf$.
\end{lem}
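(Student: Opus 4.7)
The plan is to proceed by strong induction on the rank $n$, proving simultaneously that the sequence $\{\xset_i\}$ stabilizes at some finite $\tpf \in \nat$ and, as an auxiliary strengthening, that $\cset_n(\game)$ is upward closed in $\states$ with respect to the subword ordering $\preceq$. For $n=0$ there is nothing to iterate, since $\cset_0(\game)=\states$ is defined directly, and $\states$ is trivially upward closed. Fix $n \geq 1$ and assume the strengthened statement for all smaller ranks.

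The heart of the argument is an inner transfinite induction establishing that, for every ordinal $\alpha$, both $\xset_\alpha$ and $\yset_\alpha$ are upward closed in $\game$. Once this is in hand, the non-decreasing chain $\xset_0 \subseteq \xset_1 \subseteq \cdots$ consists of upward-closed subsets of $\states$, and Lemma~\ref{higman:lemma} (Higman's lemma) immediately delivers a finite stabilization point $\tpf \in \nat$ with $\xset_i = \xset_\tpf$ for all $i\geq \tpf$.

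For the inner transfinite induction: $\xset_0 = \yforceset(\game, \emptyset) = \emptyset$ is vacuous. At a step, if $\yset_\beta$ is upward closed for every $\beta < \alpha$, then so is $T := \bigcup_{\beta < \alpha} \yset_\beta$. Replaying the argument used in the proof of Lemma~\ref{reachable:termination:lemma} — namely that loss transitions together with the bipartite structure propagate upward-closedness through the approximants $\reachset_i$ of the force set — gives that $\xset_\alpha = \yforceset(\game, T)$ is upward closed in $\game$. An entirely analogous argument shows that $\zset_\alpha$ is upward closed within the sub-game $\game\cut\xset_\alpha$. Finally, since every color-$n$ state of $\game\cut\xset_\alpha$ lies in $\zset_\alpha$ (being trivially in its own reachability force set), the sub-game $\game\cut\xset_\alpha\cut\zset_\alpha$ has rank at most $n-1$, and the outer inductive hypothesis yields that $\cset_{n-1}(\game\cut\xset_\alpha\cut\zset_\alpha)$ is upward closed; combined with $\xset_\alpha$, we conclude that $\yset_\alpha = \xset_\alpha \cup \cset_{n-1}(\game\cut\xset_\alpha\cut\zset_\alpha)$ is upward closed in $\game$.

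The main obstacle I anticipate is the careful verification that upward-closedness propagates through the force-set construction in the SG-LCS setting. The delicate point is that receive actions are not monotone in the channel contents: a receive $c?m$ enabled at $\state$ need not be enabled at $\state' \succeq \state$ if the head of $\chassignment'(c)$ differs from $m$. This is circumvented by exploiting the random loss transitions, which allow $\state'$ to reach any $\state \preceq \state'$ with positive probability before any deterministic action is taken; lifting a forcing strategy from $\state$ to $\state'$ then requires routing through a loss step, and bookkeeping this relies on the bipartite structure of the SG-LCS graph, exactly as in Lemma~\ref{reachable:termination:lemma}. A secondary subtlety is to check that upward-closedness established inside the sub-game $\game\cut\xset_\alpha\cut\zset_\alpha$ lifts to upward-closedness in the ambient $\game$, which follows because both $\xset_\alpha$ and $\xset_\alpha \cup \zset_\alpha$ are themselves upward closed, so their complements (the state spaces of the sub-games) are downward closed.
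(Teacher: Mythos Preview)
Your approach differs substantially from the paper's, and it contains a genuine gap.

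The paper does not use upward-closedness or Higman's lemma here at all. Instead it exploits the finite attractor directly: by Lemma~\ref{cset:trap:lemma} each set $\cset_{n-1}(\game\cut\xset_i\cut\zset_i)$ is an $x$-trap, hence closable, and Lemma~\ref{closable:attractor:lemma} then forces it to meet the finite attractor $A$. Since these sets for distinct $i$ are pairwise disjoint (each lies in $\yset_i\setminus\bigcup_{j<i}\yset_j$), at most $|A|$ indices can satisfy $\xset_i\neq\xset_{i+1}$. This argument needs only the finite-attractor hypothesis, not any order-theoretic structure specific to SG-LCS.

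Your central claim---that the force set of an upward-closed target is itself upward closed---is false, and the culprit is precisely the non-monotone receive you flagged. Take a control state $q$ whose only transitions are $q\transitionx{c?m}q_1$ and $q\transitionx{c?a}q_2$, with $q_2$ a $\nop$-sink, and the upward-closed target $T=\{q_1\}\times(\msgs^*)^\channels\times\{0,1\}$. From $(q,m,1)$ only $c?m$ is enabled, leading into $T$, so $(q,m,1)$ lies in the force set (for either player, since the unique successor is forced). From $(q,am,1)$ only $c?a$ is enabled, leading to the sink $q_2$, whence $T$ is unreachable; so $(q,am,1)$ is \emph{not} in the force set, although $(q,m,1)\preceq(q,am,1)$. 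This does not contradict Lemma~\ref{reachable:termination:lemma}: that proof only shows that $\rcut{\reachset_i\setminus\targetset}$ is relatively upward closed and then invokes bipartiteness to obtain termination---it never asserts upward-closedness of the full force set, and ``replaying'' it cannot yield that conclusion.

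Consequently $\xset_\alpha$, $\zset_\alpha$, and $\yset_\alpha$ need not be upward closed, and your appeal to Lemma~\ref{higman:lemma} for the chain $\xset_0\subseteq\xset_1\subseteq\cdots$ is unjustified. Your final lifting step fails independently as well: even granting that $\cset_{n-1}(\game\cut\xset_\alpha\cut\zset_\alpha)$ is upward closed inside the sub-game, a state $s'\succeq s\in\cset_{n-1}(\game\cut\xset_\alpha\cut\zset_\alpha)$ may lie in $\zset_\alpha\setminus\xset_\alpha$ (upward-closedness of $\zset_\alpha$ does not prevent this), and that set is disjoint from $\yset_\alpha$.
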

\begin{proof}
  We will prove the claim by showing that
  $\cset_{n-1}(\game\cut\xset_i\cut\zset_i)$ in the definition of
  $\yset_i$ contains an element from the attractor, and that the
  $\cset_{n-1}(\game\cut\xset_i\cut\zset_i)$ sets constructed in
  different steps $i$ are disjoint.
  First, $\cset_{n-1}(\game\cut\xset_i\cut\zset_i)$ is an $\px$-trap
  by Lemma~\ref{cset:trap:lemma}.
  Hence it is closable, and therefore
  Lemma~\ref{closable:attractor:lemma} implies that it contains an
  element from the attractor.
  Second, by the definition of the $\cut$ operator, $\xset_i$ and
  $\game\cut\xset_i\cut\zset_i$ are disjoint.
  Since
  $\cset_{n-1}(\game\cut\xset_i\cut\zset_i)\subseteq S\setminus\xset_i\setminus\zset_i$,
  it follows that $\yset_i$ is the \emph{disjoint} union of $\xset_i$ and
  $\cset_{n-1}(\game\cut\xset_i\cut\zset_i)$.
  Then, the definition of $\xset_i$ implies that
  $\cset_{n-1}(\game\cut\xset_i\cut\zset_i)\subseteq\yset_i\setminus\bigcup_{j<i}\yset_j$.
  Hence, if $j\neq i$, $\cset_{n-1}(\game\cut\xset_i\cut\zset_i)$ and
  $\cset_{n-1}(\game\cut\xset_j\cut\zset_j)$ are disjoint.
  Since all $\cset_{n-1}(\game\cut\xset_i\cut\zset_i)$ sets are
  disjoint, and each of them contains at least one element of the
  attractor, and the attractor is finite, the algorithm terminates in
  at most $|\attractor|$ steps.
\end{proof}

\parg{Computability.}
Regular languages of configurations are effectively closed under the operations of
upward-closure, predecessor, set-theoretic union, intersection, and complement \cite{AbdullaBouajjaniDorso:JLC:2008}.
For completeness, we show these properties below.
\begin{lem}
	\label{lem:upward-closure:regular}
	If $P$ is a regular set of configurations, then its upward-closure $\ucof P$ is effectively regular.
\end{lem}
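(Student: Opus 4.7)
The plan is to reduce the claim to the classical fact that the subword-upward-closure of a regular word language is effectively regular, and then to lift this through the product-of-channels and the union decomposition of $\tt X$. The key observation to start from is that the ordering $\preceq$ on configurations only relates tuples agreeing on the control state $\lcsstate$ and on the tag $i \in \{0,1\}$. Hence if $P$ is presented by regular channel-content sets $\tt X_{\lcsstate, i} \subseteq (\msgs^*)^\channels$, then
$\ucof P = \setcomp{(\lcsstate, \chassignment, i)}{\lcsstate \in \lcsstates,\ i \in \{0,1\},\ \chassignment \in \ucof{\tt X_{\lcsstate, i}}}$,
so it is enough to show that the upward closure of a regular subset of $(\msgs^*)^\channels$ is effectively regular.

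By definition, each $\tt X_{\lcsstate, i}$ is a finite union of product sets of the form $\prod_{\channel \in \channels} \tt Y(\channel)$ with each factor $\tt Y(\channel) \subseteq \msgs^*$ regular. Upward closure commutes with finite unions, and because $\preceq$ on channel contents is componentwise, it also commutes with the product, giving
$\ucof{\prod_{\channel \in \channels} \tt Y(\channel)} = \prod_{\channel \in \channels} \ucof{\tt Y(\channel)}$.
The task therefore reduces to the effective regularity of $\ucof L$ for a single regular word language $L \subseteq \msgs^*$, under the subword order.

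For that last step, given a finite automaton $\mathcal{A}$ recognising $L$, I build $\mathcal{A}'$ by adding, at every state of $\mathcal{A}$, a self-loop labelled by each $\msg \in \msgs$. A word $y$ is accepted by $\mathcal{A}'$ iff one can single out a subsequence of positions in $y$ whose induced subword is accepted by $\mathcal{A}$, i.e.\ iff $y \in \ucof L$. This construction is effective and produces a finite automaton, so $\ucof L$ is effectively regular. Reassembling, I apply this automaton construction channel by channel, take the finite product to obtain an effectively regular description of $\ucof{\prod_\channel \tt Y(\channel)}$, then take the finite union over the decomposition of $\tt X_{\lcsstate, i}$, and finally range over $\lcsstate$ and $i$ to obtain an effective regular description of $\ucof P$.

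There is no genuine obstacle here; the only care needed is keeping the componentwise decomposition straight and observing that all three operations involved in regularity — finite union, finite product of channels, and single-word subword closure — preserve effective regularity.
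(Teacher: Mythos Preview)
Your proposal is correct and follows essentially the same approach as the paper: reduce from configurations to channel contents via the $(\lcsstate,i)$-decomposition, then from channel contents to single-channel word languages via the finite-union-of-products structure, and finally apply the standard self-loop construction on a finite automaton to compute the subword upward closure. The paper's proof is slightly terser (it does not spell out the commutation of upward closure with union and product), but the construction is identical.
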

\begin{proof}
	A regular set $P$ of configurations is by definition of the form
        \[
	P = \setcomp {(\lcsstate, \chassignment, i)} {\lcsstate \in \lcsstates, i \in \{0,1\}, \chassignment \in \tt X_{\lcsstate, i}}
        \]
	where the $\tt X_{\lcsstate, i}$'s are regular sets of channel contents.
	It thus suffices to show that
	$\ucof {\tt X} := \setcomp{\chassignment}{\exists\chassignment'\in\tt X.\,\chassignment'\preceq\chassignment}$
	is an effectively regular set of channel contents
	when $\tt X$ is a regular set of channel contents.
	By definition, $\tt X$ is a finite union of sets of the form
	$\tt Y \subseteq (\msgs^*)^\channels$ with $\tt Y(c)$ regular for every $c \in \channels$.
	Let $\ucof {\tt X}$ be the union of the $\ucof {\tt Y}$,
	where, for every  $c \in \channels$,
	a finite automaton recognizing $\ucof {\tt Y}(c)$ is obtained from a finite automaton recognizing $\tt Y(c)$
	by adding a self-loop labeled with $M$ on every state thereof.
\end{proof}

\begin{lem}
	\label{lem:boolean_op:regular}
	If $P, Q$ are regular sets of configurations,
	then $P \cup Q$, $P \cap Q$, and $\states \setminus P$
	are effectively regular sets of configurations.
\end{lem}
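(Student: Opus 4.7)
The plan is to reduce each boolean operation on regular sets of configurations to the corresponding operation on recognisable sets of channel contents. A regular set $P$ is specified by a family $\{\tt X^P_{\lcsstate, i}\}_{\lcsstate \in \lcsstates,\, i \in \{0,1\}}$ of regular subsets of $(\msgs^*)^\channels$. Since the index set $\lcsstates \times \{0,1\}$ is finite, it suffices to show that regular subsets of $(\msgs^*)^\channels$ are effectively closed under union, intersection, and complement; then $P \cup Q$, $P \cap Q$, and $\states \setminus P$ are obtained by applying the corresponding operation to each $(\lcsstate, i)$ cell separately (with the total state space $(\msgs^*)^\channels$ playing the role of the universe for complementation). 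I would state this reduction first, so that the remainder of the proof only manipulates regular sets of channel contents.

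I would then dispatch the easy cases. Union is immediate: a finite union of finite unions of products $\prod_{\channel \in \channels} \tt Y(\channel)$ (with each $\tt Y(\channel) \subseteq \msgs^*$ regular) is again of this form. Intersection rests on the identity $\bigl(\prod_\channel \tt Y_1(\channel)\bigr) \cap \bigl(\prod_\channel \tt Y_2(\channel)\bigr) = \prod_\channel \bigl(\tt Y_1(\channel) \cap \tt Y_2(\channel)\bigr)$, distributed over the finitely many products appearing in each operand; effectiveness follows because regular subsets of $\msgs^*$ are effectively closed under intersection.

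The main work is complementation, since a product of regular languages is not closed under complement \emph{as a product}. My key step is to show that the complement of a single product is a finite union of products: explicitly, $(\msgs^*)^\channels \setminus \prod_\channel \tt Y(\channel) = \bigcup_{\channel_0 \in \channels} \tt Z_{\channel_0}$, where $\tt Z_{\channel_0}(\channel_0) = \msgs^* \setminus \tt Y(\channel_0)$ and $\tt Z_{\channel_0}(\channel) = \msgs^*$ for $\channel \neq \channel_0$; correctness is a one-line check (a tuple lies outside the product iff some component is outside its $\tt Y(\channel)$), and effectiveness follows from effective complementation of regular languages in $\msgs^*$. The complement of a general finite union of products is then handled by De Morgan, $\complementof{\bigcup_i A_i} = \bigcap_i \complementof{A_i}$, which reduces to finitely many intersections of finite unions of products, already dispatched above.

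The main obstacle is precisely the complement step: one needs to observe that while products of regular languages are not closed under complement, their closure under \emph{finite unions} already forms a boolean algebra. This is the standard characterization of recognisable subsets of a product of free monoids, cf.\ \cite{Berstel}; the construction above makes this closure effective.
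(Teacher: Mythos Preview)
Your proposal is correct and follows the same reduction as the paper: decompose by $(\lcsstate,i)$-cell and appeal to the boolean closure of regular (recognisable) subsets of $(\msgs^*)^\channels$. The paper's own proof is a one-line reference to this closure, whereas you spell out the only nontrivial point---that complementing a product yields a finite union of products, so finite unions of products form a boolean algebra---which is exactly the detail the paper leaves implicit.
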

\begin{proof}
	The proof is very similar to the one in the previous lemma,
	by exploiting the fact that regular languages are closed under the operations of union, intersection, and complement.
\end{proof}

\begin{lem}
	\label{lem:pre:regular}
	If $P$ is a regular set of configurations,
	then $\preof\game P$ is an effectively regular set of configurations.
\end{lem}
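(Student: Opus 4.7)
The plan is to decompose $\preof\game P$ according to the tag bit of the predecessor configuration. Since the graph of $\game$ is bipartite (configurations of tag $i\in\{0,1\}$ have successors only in configurations of tag $1-i$), I can write $\preof\game P=\preof\game(P\cap\ostates)\cup\preof\game(P\cap\zstates)$, where the first set consists entirely of tag-$0$ predecessors (reached via message-loss transitions) and the second of tag-$1$ predecessors (reached via control-state transitions from $\lcstransitions$, plus the artificial sink-avoiding self-loops). I would handle each part by a separate construction and take their union at the end, invoking Lemma~\ref{lem:boolean_op:regular}.

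For the tag-$0$ part, recall that $(\lcsstate,\chassignment,0)\transition(\lcsstate,\chassignment',1)$ iff $\chassignment'\preceq\chassignment$. Thus a configuration $(\lcsstate,\chassignment,0)$ lies in $\preof\game(P\cap\ostates)$ iff $\chassignment\in\ucof{{\tt X}_{\lcsstate,1}}$, where ${\tt X}_{\lcsstate,1}$ is the regular channel-content language witnessing regularity of $P$ at control state $\lcsstate$ with tag~$1$. Effective regularity then follows directly from Lemma~\ref{lem:upward-closure:regular}.

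For the tag-$1$ part, I would enumerate the finitely many rules $\lcsstate\transitionx{\op}\lcsstate'$ in $\lcstransitions$ and, for each, compute the preimage under $\op$ of the channel-content language ${\tt X}_{\lcsstate',0}$: if $\op=\nop$, the preimage equals ${\tt X}_{\lcsstate',0}$; if $\op=\channel!\msg$, the preimage on channel $\channel$ is the right quotient of ${\tt X}_{\lcsstate',0}(\channel)$ by the single letter $\msg$; if $\op=\channel?\msg$, the preimage on channel $\channel$ is $\msg\cdot{\tt X}_{\lcsstate',0}(\channel)$. All three constructions preserve regularity effectively, and the contents of the remaining channels are copied verbatim. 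I must also include the artificial self-loop predecessors $(\lcsstate,\chassignment,1)\transition(\lcsstate,\chassignment,0)$: the condition that $(\lcsstate,\chassignment,1)$ has no $\lcstransitions$-successor is a regular property of $\chassignment$ (trivially false whenever $\lcsstate$ has any outgoing $\nop$ or send rule, and otherwise a finite conjunction of constraints $\chassignment(\channel)\notin\msg\cdot\msgs^*$ over the outgoing receive rules), so intersecting it with ${\tt X}_{\lcsstate,0}$ keeps the set regular by Lemma~\ref{lem:boolean_op:regular}.

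Taking the finite union over all control states, transitions, and the self-loop contribution yields $\preof\game P$ as an effectively regular set of configurations. I expect the main obstacle to be merely bookkeeping: correctly enumerating all predecessor cases and not overlooking the sink-avoiding self-loops, whose omission would make the computed set a strict subset of the true predecessor set. The underlying automata-theoretic operations---upward closure, left-concatenation by a symbol, right-quotient by a symbol, and boolean combinations---are all standard and clearly effective on regular channel-content languages.
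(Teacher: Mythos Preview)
Your proposal is correct and follows essentially the same approach as the paper: decompose the predecessor set by the type of transition taken, handle the message-loss transitions via upward closure (Lemma~\ref{lem:upward-closure:regular}), and handle each rule in $\lcstransitions$ by the standard regular-language operations (identity for $\nop$, right quotient by a letter for $\channel!\msg$, left concatenation by a letter for $\channel?\msg$). Your bipartite framing is cosmetically different but amounts to the same case split. One point worth noting: your treatment is actually more careful than the paper's own proof, which writes $\preof\game P = \bigcup_{\lcstransition \in \lcstransitions} \preof\game {P, \lcstransition} \cup \preRof\game P$ and does not explicitly mention the sink-avoiding self-loops $(\lcsstate,\chassignment,1)\transition(\lcsstate,\chassignment,0)$ for deadlocked configurations; you correctly observe that the deadlock condition is itself a regular predicate on channel contents and handle it accordingly.
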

\begin{proof}
	Let $P$ be a regular set of configurations.
	By a case analysis on which transition is taken, we can write
	\begin{align*}
		\preof\game P = \bigcup_{\lcstransition \in \lcstransitions} \preof\game {P, \lcstransition} \cup \preRof\game P
	\end{align*}
	where 
	\begin{align*}
		\preof\game {P, \lcsstate\transitionx\nop\lcsstate'}
			&:= \setcomp {(\lcsstate, \chassignment, 1)} {(\lcsstate', \chassignment, 0) \in P} \\
		\preof\game {P, \lcsstate\transitionx{\channel!\msg}\lcsstate'}
			&:= \setcomp {(\lcsstate, \chassignment, 1)} {(\lcsstate', \chassignment', 0) \in P.\,
			\chassignment'(c) = w \cdot \msg,
			\chassignment = \chassignment[\channel \mapsto w]} \\
		\preof\game {P, \lcsstate\transitionx{\channel?\msg}\lcsstate'}
			&:= \setcomp {(\lcsstate, \chassignment, 1)} {(\lcsstate', \chassignment', 0) \in P.\,
			\chassignment = \chassignment'[\channel \mapsto \msg \cdot \chassignment(\channel)]} \\
		\preRof\game P
			&:= \setcomp {(\lcsstate, \chassignment, 0)} {(\lcsstate', \chassignment', 1) \in P.\,
			\chassignment' \preceq \chassignment} = \ucof {\setcomp {(\lcsstate, \chassignment', 0)} {(\lcsstate', \chassignment', 1) \in P}}
	\end{align*}
	Then, $\preof\game {P, \lcsstate\transitionx\nop\lcsstate'}$ is clearly effectively regular,
	$\preof\game {P, \lcsstate\transitionx{\channel!\msg}\lcsstate'}$ is regular,
	because regular languages are effectively closed under (right) quotients,
	$\preof\game {P, \lcsstate\transitionx{\channel?\msg}\lcsstate'}$ is regular,
	because regular language are effectively closed under (left) concatenation with single symbols,
	and $\preRof\game P$ is effectively regular by Lemma~\ref{lem:upward-closure:regular}.
\end{proof}

The lemmas above show that all operations
used in computing $\xforceset(\game,\targetset)$ effectively preserve regularity.
Thus we obtain the following lemma.

\begin{lem}
\label{reachability:compuability:lemma}
If $\targetset$ is regular, then
$\xforceset(\game,\targetset)$ is effectively regular.
\end{lem}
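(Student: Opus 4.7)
The plan is to prove the lemma by induction on the construction of the sequence $\reachset_0, \reachset_1, \ldots$, combining the effective closure properties of regular sets of configurations established in Lemmas~\ref{lem:upward-closure:regular}, \ref{lem:boolean_op:regular}, and~\ref{lem:pre:regular} with the termination guarantee of Lemma~\ref{reachable:termination:lemma}.

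First, I would observe that $\dualpreof\game\stateset = \gcomplementof\game{\preof\game{\gcomplementof\game\stateset}}$, so if $\stateset$ is effectively regular then so is $\dualpreof\game\stateset$, by two applications of Lemma~\ref{lem:boolean_op:regular} (complement) and one application of Lemma~\ref{lem:pre:regular} (predecessor). The cuts $\rcut\stateset$, $\xcut\stateset$, $\ycut\stateset$ are obtained by intersecting $\stateset$ with the regular sets $\rstates$, $\statesx$, $\statesy$, so Lemma~\ref{lem:boolean_op:regular} shows these preserve effective regularity as well.

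Next, I would show by induction on $i \in \nat$ that each $\reachset_i$ is an effectively regular set of configurations. The base case $\reachset_0 = \targetset$ is regular by hypothesis. For the inductive step, $\reachset_{i+1}$ is obtained from $\reachset_i$ by one predecessor computation, one dual-predecessor computation, three intersections with $\rstates$, $\statesx$, $\statesy$, and three unions; by the observations of the previous paragraph and Lemmas~\ref{lem:boolean_op:regular} and~\ref{lem:pre:regular}, a finite automaton representation of $\reachset_{i+1}$ is effectively computable from one for $\reachset_i$.

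Finally, Lemma~\ref{reachable:termination:lemma} yields a finite $\tpf \in \nat$ with $\reachset_\tpf = \reachset_{\tpf+1}$, and by definition $\xforceset(\game, \targetset) = \reachset_\tpf$. To turn this existence statement into an algorithm, I would iteratively compute automata for $\reachset_0, \reachset_1, \ldots$ and, at each step, test whether $\reachset_{i+1} = \reachset_i$; this equality is decidable because regular sets of configurations are effectively closed under symmetric difference (Lemma~\ref{lem:boolean_op:regular}) and emptiness of a regular set is decidable. Termination of this loop is guaranteed by Lemma~\ref{reachable:termination:lemma}. The only subtle point, which I would make explicit, is that even though Lemma~\ref{reachable:termination:lemma} does not supply an a priori bound on $\tpf$, the stabilization test is itself effective, so the overall procedure is an algorithm; this is where the combination of Higman-based termination with decidability of regular-language equality does the real work.
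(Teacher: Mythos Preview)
Your proof is correct and follows the same approach as the paper, which simply notes that all operations used in the iterative construction of $\xforceset(\game,\targetset)$ effectively preserve regularity (via Lemmas~\ref{lem:upward-closure:regular}, \ref{lem:boolean_op:regular}, \ref{lem:pre:regular}) and that the iteration stabilizes after finitely many steps by Lemma~\ref{reachable:termination:lemma}. Your version spells out the details---the treatment of $\dualpreof\game{\cdot}$, the cuts, and the decidable stabilization test---that the paper leaves implicit, but the underlying argument is identical.
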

\begin{lem}
\label{parity:compuability:lemma}
For each $n$, $\cset_n(\game)$ is
effectively regular.
\end{lem}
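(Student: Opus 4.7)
The plan is induction on the rank $n$. The base case $n = 0$ is trivial since $\cset_0(\game) = \states$ is obviously regular as a finite union of cylinders. For the inductive step, I would strengthen the induction hypothesis to assert that $\cset_{n-1}(\game')$ is effectively regular for every subgame $\game' = \game \cut \stateset$ whose state set $\gcomplementof\game \stateset$ is regular and closable. This strengthening is necessary because the construction of $\yset_\alpha$ invokes $\cset_{n-1}$ on the twice-restricted game $\game \cut \xset_\alpha \cut \zset_\alpha$.

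The inductive step then transcribes the transfinite definitions of $\xset_\alpha, \zset_\alpha, \yset_\alpha$ from Section~\ref{parity:section} into an iterative algorithm: starting from $\alpha = 0$, compute $\xset_\alpha$, $\zset_\alpha$, $\yset_\alpha$ in turn and test at each step whether $\xset_{\alpha+1} = \xset_\alpha$, which is decidable on regular sets. Given regular $\yset_0, \ldots, \yset_{\alpha-1}$, the union $\bigcup_{\beta < \alpha} \yset_\beta$ is regular by Lemma~\ref{lem:boolean_op:regular}, so $\xset_\alpha$ is regular by (the symmetric version of) Lemma~\ref{reachability:compuability:lemma}. The target $\colorset{\gcomplementof\game\xset_\alpha}=n$ is regular because the coloring depends only on the finitely many control states, hence $\zset_\alpha$ is regular by Lemma~\ref{reachability:compuability:lemma} applied in the subgame $\game \cut \xset_\alpha$. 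Finally, the strengthened induction hypothesis gives $\cset_{n-1}(\game \cut \xset_\alpha \cut \zset_\alpha)$ regular, so $\yset_\alpha$ is regular as well. By Lemma~\ref{parity:termination:lemma}, stabilization occurs at a finite index $j \in \nat$, and $\cset_n(\game) = \gcomplementof\game \xset_j$ is then effectively regular.

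The main obstacle is justifying that the reachability scheme of Section~\ref{reachability:section} and its computability (Lemma~\ref{reachability:compuability:lemma}) carry over cleanly to subgames $\game \cut \stateset$ when $\gcomplementof\game \stateset$ is regular and closable. The key observation is that closability ensures $\preof{\game \cut \stateset} Q = \preof\game Q \cap \gcomplementof\game \stateset$ (and similarly for $\dualpreof{}{}$), so each step of the reachability scheme remains a regularity-preserving operation when relativised to the subgame's state set. Since subgames of subgames are again of this shape (regular closable state set), the strengthened induction hypothesis is preserved throughout, and no new regularity-preservation machinery beyond Lemmas~\ref{lem:upward-closure:regular}--\ref{lem:pre:regular} is required.
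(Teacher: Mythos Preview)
Your proof is correct and follows essentially the same approach as the paper's (very terse) proof: induction on $n$, with the base case $\cset_0(\game)=\states$ trivially regular and the inductive step appealing to Lemma~\ref{reachability:compuability:lemma} for the force-sets together with closure under union and complement. Your explicit strengthening of the induction hypothesis to cover subgames $\game\cut\stateset$ with regular closable state sets, and your observation that the predecessor operators relativize to such subgames as regularity-preserving operations, fill in details that the paper's two-sentence proof leaves implicit.
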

\begin{proof}
The set $\states$ is regular, and hence 
$\cset_0(\game)=\states$
is effectively regular.
The result for $n>0$ follows from
Lemma~\ref{reachability:compuability:lemma} and from the fact
that the rest of the operations used to build  $\cset_n(\game)$
are those of set complement and union.
\end{proof}

\subsection{Construction of regular winning strategies}

\label{sec:strategy:construction}

In this section,
we show that the memoryless winning strategies constructed in Theorem~\ref{thm:memoryless:determinacy}
can be finitely represented as a (finite) list of rules with regular guards on the channel contents.
This representation can be easily turned in a more low-level one, e.g.,
a finite automaton with output reading the channel contents and outputting the rule do be played next,
but for the ease of presentation we have chosen a more high-level description.

\parg{Preliminaries.}

Let $\sglcs=\sglcstuple$ be a SG-LCS.
A \emph{(memoryless) regular SG-LCS strategy} $\lcsstrat$ for Player~$\px$ is a finite list of guarded rules
$\{ \lcsstate_i, X_i \transitionx{\op_i} \lcsstate'_i \}_{i=1}^n$,
where the \emph{guard} $X_i\subseteq (\msgs^*)^\channels$ is a regular
set of channel contents and
$\lcsstate_i\transitionx{\op_i}\lcsstate'_i$ is a transition in
$\lcstransitions$ s.t. $\lcsstate_i\in\lcsstatesx$ and:
\begin{itemize}
\item If $\op_i = \channel?\msg$, every $\chassignment\in X_i$ has $\msg$ as the first symbol of $\chassignment(\channel)$.
\item Guards for the same control state are disjoint; i.e., for each
  $i, j$, if $\lcsstate_i=\lcsstate_j$ then $X_i\cap X_j=\emptyset$.
\end{itemize}

\noindent
The \emph{domain} of a regular SG-LCS strategy $\lcsstrat$ is
\begin{align*}
	\domof{\lcsstrat} = \setcomp{(\lcsstate, \chassignment)}
	{\textrm{ there exists a guarded rule } \lcsstate, X \transitionx{\op} \lcsstate' \in \lcsstrat \textrm{ s.t. } \chassignment \in X}
\end{align*}
Intuitively, the rule $(\lcsstate_i,X_i\transitionx{\op_i}\lcsstate'_i)$ should be applied from control state $\lcsstate_i$ if the channel contents belong to the guard $X_i$.
Formally, let $\game=\gametuple$ be the game induced by $\sglcs$.
The (partial, memoryless) \emph{induced strategy} $\induced\lcsstrat$ of a regular SG-LCS strategy $\lcsstrat$ is defined,
for every $(\lcsstate,\chassignment)\in\domof{\lcsstrat}$,
as $\induced\lcsstrat(\lcsstate, \chassignment, 1) = (\lcsstate'_i, \chassignment', 0)$,
where $\lcsstate_i, X_i \transitionx{\op} \lcsstate'_i$ is the unique guarded rule in $\lcsstrat$ such that $\lcsstate_i=\lcsstate$ and $\chassignment \in X_i$,
and $\chassignment'$ is the unique channel contents s.t.
$(\lcsstate, \chassignment, 1) \transition (\lcsstate', \chassignment', 0)$ in the game $\game$.
If $(\lcsstate,\chassignment)\not\in\domof{\lcsstrat}$, then $\induced\lcsstrat(\lcsstate,\chassignment,1)=\bot$.

Given two regular SG-LCS strategies $\lcsstrat_0, \lcsstrat_1$ with disjoint domains,
their \emph{union} $\lcsstrat_0 \cup \lcsstrat_1$ is the regular SG-LCS strategy obtained by concatenating the lists of guarded rules of $\lcsstrat_0$ and $\lcsstrat_1$.

Given two sets of configurations $Q, Q' \subseteq \states$,
a $\emph{selection function}$ from $Q$ to $Q'$ is any function $f : Q \mapsto Q'$ s.t.,
for every $(\lcsstate, \chassignment) \in Q$,
\begin{align*}
	 f(\lcsstate, \chassignment) \in \left(\postof\game{\lcsstate, \chassignment}\cap Q'\right)
\end{align*}
In other words, a selection function picks a legal successor in $Q'$ for every configuration in $Q$.

\parg{Construction.}

The rest of this section is devoted to the construction of regular winning strategies for both players,
as summarised by the following theorem.
\begin{thm}
\label{thm:regular:strategies}
Memoryless winning strategies for both players are effectively computable as regular SG-LCS strategies.
\end{thm}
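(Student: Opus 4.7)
The plan is to follow the inductive construction of the memoryless winning strategies $\xstrat_c$ and $\ystrat_c$ given in the proof of Lemma~\ref{cn:infinite:termination:lemma} and verify that at every step the strategy produced can be expressed as a regular SG-LCS strategy. These strategies are built from three primitives: the reachability force and avoid strategies $\xforcestrat(\game,\targetset)$ and $\yavoidstrat(\game,\targetset)$ from Lemma~\ref{reachability:correct:lemma}; arbitrary selection of a successor in a given set (the $\selectfrom{\cdot}$ operator); and recursive application on the sub-game $\game\cut\xset_\tp\cut\zset_\tp$. By Lemmas~\ref{reachable:termination:lemma} and~\ref{parity:termination:lemma}, the underlying transfinite fixed-point sequences stabilize at finite stages, so all relevant strata $\reachset_{i+1}\setminus\reachset_i$, $\xset_i$, $\yset_i$, $\zset_i$ are effectively regular by Lemmas~\ref{reachability:compuability:lemma} and~\ref{parity:compuability:lemma}, and only finitely many of them have to be handled.

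The central technical step is a \emph{regular selection lemma}: given a regular set $Q\subseteq\statesx$ and a regular set $Q'\subseteq\states$ such that $\postof\game{\state}\cap Q'\neq\emptyset$ for every $\state\in Q$, one can effectively compute a regular SG-LCS strategy $\lcsstrat$ for Player~$\px$ with $\domof\lcsstrat=\setcomp{(\lcsstate,\chassignment)}{(\lcsstate,\chassignment,1)\in Q}$ and $\induced\lcsstrat(\state)\in Q'$ for every $\state\in Q$. The proof is by case analysis on the finitely many transitions $\lcstransition=(\lcsstate\transitionx{\op}\lcsstate')$ with $\lcsstate\in\lcsstatesx$: for each such $\lcstransition$, the set $X_\lcstransition$ of channel contents $\chassignment$ from which firing $\lcstransition$ yields a successor in $Q'$ is effectively regular by a one-rule variant of Lemma~\ref{lem:pre:regular}, combined with Lemma~\ref{lem:boolean_op:regular}. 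To make guards with the same source $\lcsstate$ disjoint, I fix an arbitrary total order on $\lcstransitions$ and replace $X_\lcstransition$ by $X_\lcstransition\setminus\bigcup_{\lcstransition'<\lcstransition}X_{\lcstransition'}$. The resulting finite list of guarded rules is a regular SG-LCS strategy that realises a legitimate selection function; the non-emptiness hypothesis guarantees that the guards cover $Q$.

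With this lemma, each primitive in the construction becomes a regular SG-LCS strategy. For $\xforcestrat(\game,\targetset)$, the finitely many strata $\xcut{(\reachset_{i+1}\setminus\reachset_i)}$ (for $0\leq i<\tpf$) partition its domain; on the $i$-th stratum the selection lemma is applied with $Q'=\reachset_i$, and the disjoint union of the per-stratum strategies is again a regular SG-LCS strategy. A symmetric argument handles $\yavoidstrat(\game,\targetset)$. For the parity strategy $\xstrat_c$, the three cases in the proof of Lemma~\ref{cn:infinite:termination:lemma} partition its domain into the disjoint regular subsets $\complementof{\xset_\tp}\cap\complementof{\zset_\tp}$, $\complementof{\xset_\tp}\cap\colorset{\zset_\tp}< n$, and $\complementof{\xset_\tp}\cap\colorset{\zset_\tp}=n$: on the first one I invoke the induction hypothesis on $n$ applied to the sub-game $\game\cut\xset_\tp\cut\zset_\tp$ (whose state set is regular and whose transitions are a subset of those of the original SG-LCS, so a regular SG-LCS strategy for the sub-game is also a regular SG-LCS strategy for the original game once its guards are intersected with the sub-game's state set); on the second I reuse the already-constructed regular $\xforcestrat$ strategy for the sub-game $\game\cut\xset_\tp$; and on the third I apply the selection lemma directly with $Q'=\complementof{\xset_\tp}$, relying on Lemma~\ref{not:reach:trap:lemma} to ensure non-emptiness. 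The strategy $\ystrat_c$ is handled analogously.

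The hard part will be the regular selection lemma, since the whole construction rests on picking, uniformly over $\chassignment$ and in a regular way, a successor lying in a prescribed regular set. The key observation that makes this work is that each SG-LCS transition rule affects channel contents in a way that preserves regularity (send appends a single symbol, receive consumes a single prefix symbol, $\nop$ leaves channels unchanged), so the one-rule preimage of a regular set is again regular. Once this lemma is in place, finiteness of the stratification (by the termination lemmas) and closure of regular sets under Boolean operations assemble the winning strategies as finite lists of guarded rules, as required.
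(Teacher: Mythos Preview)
Your proposal is correct and follows essentially the same approach as the paper's proof: the paper proves exactly your ``regular selection lemma'' (as Lemma~\ref{lemma:select:regular}, with the same disjointification trick over the finitely many LCS transitions), then lifts it to the reachability strata (Lemma~\ref{lemma:reachability:regular:strategy}) and to the three-case partition of the parity construction by induction on $n$ (Lemma~\ref{lemma:parity:regular:strategy}), just as you outline. The only cosmetic difference is that the paper isolates the trap-staying strategy as a separate lemma (Lemma~\ref{lemma:trap:certainly:regular}) rather than treating $\yavoidstrat$ symmetrically to $\xforcestrat$, but this is an immediate consequence of the selection lemma either way.
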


We begin by showing that, if the set of selection functions is non-empty,
then there are simple selection functions induced by \emph{regular} SG-LCS strategies.
\begin{lem}
	\label{lemma:select:regular}
	Let $Q, Q' \subseteq \states$ be two regular sets of configurations.
	If there exists a selection function from $Q$ to $Q'$,
	then there exists a regular SG-LCS strategy $\lcsstrat$
	s.t. $\induced\lcsstrat$ is a selection function from $Q$ to $Q'$.\looseness=-1
\end{lem}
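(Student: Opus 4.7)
My plan is to build the regular strategy $\lcsstrat$ by partitioning, for each of Player~$\px$'s control states $\lcsstate \in \lcsstatesx$, the slice of $Q$ at $\lcsstate$ into finitely many regular pieces, each attached to a single fixed transition of $\lcstransitions$. The key observation is that, for a given transition $\lcstransition = \lcsstate \transitionx{\op} \lcsstate'$, the set of channel contents $\chassignment$ such that $(\lcsstate,\chassignment,1)\transition(\lcsstate',\chassignment',0)\in Q'$ via $\lcstransition$ is effectively regular: this follows by the same case analysis as in Lemma~\ref{lem:pre:regular}, adapted to a single transition rule (for $\nop$ it is a slice of $Q'$; for $\channel!\msg$ a right quotient of a slice by $\msg$; for $\channel?\msg$ a left concatenation of a slice with $\msg$, intersected with the regular enabledness condition ``$\chassignment(\channel)$ starts with $\msg$''). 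Call this regular set $G_\lcstransition \subseteq (\msgs^*)^\channels$.

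Fix an arbitrary total order on the finitely many transitions of $\lcstransitions$ leaving each $\lcsstate \in \lcsstatesx$, say $\lcstransition_{\lcsstate,1}, \dots, \lcstransition_{\lcsstate,k_\lcsstate}$. For each $\lcsstate$ let $Q_\lcsstate := \setcomp{\chassignment}{(\lcsstate,\chassignment)\in Q}$, which is regular since $Q$ is. Define inductively
\[
X_{\lcsstate,i} \;:=\; \bigl(Q_\lcsstate \cap G_{\lcstransition_{\lcsstate,i}}\bigr) \,\setminus\, \bigcup_{j<i} X_{\lcsstate,j}.
\]
Each $X_{\lcsstate,i}$ is regular by Lemma~\ref{lem:boolean_op:regular}, the $X_{\lcsstate,i}$'s are pairwise disjoint by construction, and their union over $i$ equals $Q_\lcsstate$: indeed, the existence of some selection function from $Q$ to $Q'$ guarantees that for every $(\lcsstate,\chassignment)\in Q$ at least one transition $\lcstransition_{\lcsstate,i}$ satisfies $\chassignment \in G_{\lcstransition_{\lcsstate,i}}$. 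Let $\lcsstrat$ be the finite list of guarded rules $\{\lcsstate_i, X_{\lcsstate,i} \transitionx{\op_i} \lcsstate'_i\}$ over all $\lcsstate \in \lcsstatesx$ and all $i$ with $X_{\lcsstate,i}\neq\emptyset$.

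It remains to verify the two side conditions in the definition of a regular SG-LCS strategy and that $\induced\lcsstrat$ is a selection function. Disjointness of guards for the same control state is immediate from the inductive ``subtract previous'' construction. The condition that, when $\op_i = \channel?\msg$, every $\chassignment\in X_{\lcsstate,i}$ has $\msg$ as first symbol of $\chassignment(\channel)$ is automatic, because this constraint is already baked into $G_{\lcstransition_{\lcsstate,i}}$ (a receive transition is enabled only on such channel contents). Finally, for any $(\lcsstate,\chassignment)\in Q$, the unique $i$ with $\chassignment\in X_{\lcsstate,i}$ satisfies $\chassignment\in G_{\lcstransition_{\lcsstate,i}}$, so by definition $\induced\lcsstrat(\lcsstate,\chassignment,1)\in\postof\game{(\lcsstate,\chassignment,1)}\cap Q'$, proving that $\induced\lcsstrat$ is a selection function from $Q$ to $Q'$.

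I do not anticipate a hard step: the entire argument is a reduction to the regular closure properties of Lemmas~\ref{lem:upward-closure:regular}, \ref{lem:boolean_op:regular}, and~\ref{lem:pre:regular}, together with the finiteness of $\lcsstates$ and $\lcstransitions$. The only minor subtlety is computing the per-transition predecessor sets $G_\lcstransition$ rather than the joint predecessor used in Lemma~\ref{lem:pre:regular}, but this is exactly one summand of that construction and requires no new ideas.
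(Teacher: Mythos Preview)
Your proposal is correct and follows essentially the same approach as the paper: compute the per-transition predecessor sets $P_i = \preof\game{Q',\lcstransition_i}\cap Q$ (your $Q_\lcsstate\cap G_{\lcstransition_{\lcsstate,i}}$), make them disjoint by the ``subtract previous'' trick, observe that the existence of some selection function guarantees these pieces cover $Q$, and read off the guarded rules. The only difference is organizational---you group transitions by source control state before disjointing, whereas the paper enumerates all of $\lcstransitions$ globally---and you are slightly more explicit about verifying the receive-enabledness side condition, which the paper leaves implicit.
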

\begin{proof}
	Let $f$ be a selection function from $Q$ to $Q'$;
	in particular, the set $\postof\game{\lcsstate, \chassignment}\cap Q'$ is non-empty
	for each $(\lcsstate, \chassignment) \in Q$.
	Let $\lcstransitions = \set{\lcsstate_0 \transitionx{\op_0} \lcsstate_0', \dots, \lcsstate_k \transitionx{\op_k} \lcsstate_k'}$
	be the finitely many transitions of $\sglcs$.
	For every $i \in \set{0, \dots, k}$,
	let $P_i$ be the set of predecessors of $Q'$ in $Q$ via transition $\lcsstate_i \transitionx{\op_i} \lcsstate_i'$, i.e.,
	\begin{align*}
		P_i = \preof\game {Q', \lcsstate_i \transitionx{\op_i} \lcsstate_i'} \cap Q
		= \setcomp {(\lcsstate_i, \chassignment) \in Q} { \textrm{there exists } (\lcsstate_i', \chassignment') \in Q' \cdot (\lcsstate_i, \chassignment) \transitionx{\op_i} (\lcsstate_i', \chassignment') }
	\end{align*}
	Since $Q, Q'$ are regular,
	$\preof\game {Q', \lcsstate_i \transitionx{\op_i} \lcsstate_i'}$ is regular (cf. Lemma~\ref{lem:pre:regular}),
	and thus $P_i$ is regular too.
	Consider the sequence of (regular) sets $Q_0 = P_0$, and, for $0 < i \leq k$, $Q_i = P_i \setminus \bigcup_{0 \leq j < i} Q_j$,
	and let $Q_{i_0}, \dots, Q_{i_h}$ be the subsequence of non-empty sets.
	Then, $\set{Q_{i_0}, \dots, Q_{i_h}}$ is a (regular) partition of $Q$:
	The sets are disjoint by definition, and each $(\lcsstate, \chassignment) \in Q$
	belongs to some $Q_{i_j}$ since $\postof\game{\lcsstate, \chassignment}\cap Q'$ is non-empty.
	Let $\set{X_{i_0}, \dots, X_{i_h}} \subseteq 2^{(\msgs^*)^\channels}$ be the set of regular channel contents
	s.t., for $0 \leq j \leq h$,
	$Q_{i_j}$ is of the form $\setcomp{(\lcsstate_{i_j}, \chassignment)}{\chassignment \in X_{i_j}}$.
	Let $\lcsstrat$ be the following regular SG-LCS strategy:
	\begin{align}
	  \{ s_{i_0}, X_{i_0} \transitionx{\op_{i_0}} \lcsstate'_{i_0}, \dots, s_{i_h}, X_{i_h} \transitionx{\op_{i_h}} \lcsstate'_{i_h} \}
	\end{align}
	By definition, $\induced\lcsstrat$ is a selection function from $Q$ to $Q'$.
\end{proof}

In the next lemma, we show that regular SG-LCS strategies suffice to keep the game in regular traps.
\begin{lem}
	\label{lemma:trap:certainly:regular}
	If $\stateset$ is a $(\py)$-trap and regular, then there exists
	a regular SG-LCS strategy $\lcsstrat$ for Player~$\px$ 
	such that
	$\stateset\subseteq\xvinset(\induced\lcsstrat,\yallstrats(\game))(\game,\always\stateset)$.
\end{lem}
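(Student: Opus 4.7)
The plan is to reduce this to two results already proved: the non-regular version Lemma~\ref{trap:certainly:lemma} (which gives the \emph{existence} of a memoryless trap-keeping strategy) and the regular-selection Lemma~\ref{lemma:select:regular} (which lets us realise any selection function by a regular SG-LCS strategy). The overall idea is that a memoryless strategy keeping the run in a trap is literally a selection function from the Player-$x$ part of the trap into the trap itself, and we already have a mechanism to upgrade such selection functions to regular form.

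First, I would isolate the Player-$x$ slice of the trap, $\stateset^x := \stateset \cap \xstates$. Since $\stateset$ is regular and $\xstates = \lcsstatesx \times (\msgs^*)^\channels \times \{1\}$ is trivially regular, $\stateset^x$ is regular by Lemma~\ref{lem:boolean_op:regular}. Next, because $\stateset$ is a $(\py)$-trap, it is closable and in particular sink-free, so $\postof\game{\state} \cap \stateset \neq \emptyset$ for every $\state \in \stateset$, and in particular for every $\state \in \stateset^x$. Thus at least one selection function from $\stateset^x$ to $\stateset$ exists. Applying Lemma~\ref{lemma:select:regular} to the regular sets $\stateset^x$ and $\stateset$ produces a regular SG-LCS strategy $\lcsstrat$ for Player~$\px$ whose induced (partial, memoryless) strategy $\induced\lcsstrat$ is itself a selection function $\stateset^x \to \stateset$.

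It then remains to check that $\induced\lcsstrat$ is surely winning for $\always\stateset$ from every state of $\stateset$. This is a verbatim repeat of the induction in the proof of Lemma~\ref{trap:certainly:lemma}: given any strategy $\ystrat \in \yallstrats(\game)$ and any run $\state_0 \state_1 \cdots \in \runsof{\game,\state,\induced\lcsstrat,\ystrat}$ with $\state_0 \in \stateset$, one shows $\state_i \in \stateset$ by induction on $i$, splitting on whether $\state_i$ is in $\xcut\stateset$ (handled by the selection-function property of $\induced\lcsstrat$), in $\ycut\stateset$ (handled by the trap property), or in $\rcut\stateset$ (handled by closability).

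I do not anticipate any real obstacle, since all the heavy lifting is already in Lemma~\ref{lemma:select:regular}; the only point worth double-checking is that $\induced\lcsstrat$ is defined on \emph{all} of $\stateset^x$, which is guaranteed by the fact that Lemma~\ref{lemma:select:regular} produces a \emph{total} selection function on its source set. Consequently, the inductive argument never gets stuck at a Player-$x$ state in $\stateset$, and the strategy keeps the run surely inside $\stateset$, as required.
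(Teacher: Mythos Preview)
Your proposal is correct and follows essentially the same approach as the paper: observe that the trap-keeping strategy of Lemma~\ref{trap:certainly:lemma} is a selection function from $\xcut\stateset$ to $\stateset$, and then invoke Lemma~\ref{lemma:select:regular} to realise it by a regular SG-LCS strategy. The paper is slightly terser (it simply notes that \emph{any} selection function works, rather than re-running the induction), but your explicit verification that $\stateset^x$ is regular and your repetition of the trap-invariance induction are harmless elaborations of the same argument.
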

\begin{proof}
	By Lemma~\ref{trap:certainly:lemma},
	there exists a memoryless strategy $\xstrat$ for Player~$\px$ with the required property.
	Moreover, by inspecting the proof of the lemma,
	we can see that $\xstrat$ is defined as
	$\xstrat(\state)=\selectfrom{\postof\game{\state}\cap\stateset}$
	for every configuration $\state \in \xcut\stateset$,
	i.e., $\xstrat$ is a selection function from $\xcut\stateset$ to $\stateset$,
	and, in fact, any such selection function can be taken.
	By Lemma~\ref{lemma:select:regular},
	there exists a regular SG-LCS strategy $\lcsstrat$ s.t.
	the induced strategy $\induced \lcsstrat$ is a selection function from $\xcut\stateset$ to $\stateset$.
\end{proof}

The following lemma shows that there are regular SG-LCS strategies for the reachability and safety objective
(cf. Lemma~\ref{reachability:correct:lemma}).
\begin{lem}
	\label{lemma:reachability:regular:strategy}
	Let $\targetset \subseteq \states$ be a regular set of configurations.
	There exist regular SG-LCS strategies $\xlcsforcestrat(\game,\targetset)$ for Player~$\px$
	and $\ylcsavoidstrat(\game,\targetset)$ for Player~$\py$~s.t.
	%inducing strategies $\xforcestrat(\game,\targetset)$ and $\yavoidstrat(\game,\targetset)$, respectively.
	\begin{align*}
	 \xforceset(\game,\targetset)&\subseteq
		\xwinset(\induced{\xlcsforcestrat(\game,\targetset)},\yallstrats(\game))(\game,\eventually\targetset^{>0}) \\
	 \yavoidset(\game,\targetset)&\subseteq
		\yvinset(\induced{\ylcsavoidstrat(\game,\targetset)},\xallstrats(\game))(\game,\always(\gcomplementof\game\targetset))
	\end{align*}
\end{lem}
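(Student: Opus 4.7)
The plan is to separately construct the two regular strategies by upgrading the memoryless constructions from the proof of Lemma~\ref{reachability:correct:lemma}, replacing arbitrary choices of successors with the regular selection mechanism of Lemma~\ref{lemma:select:regular}.

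For the safety strategy $\ylcsavoidstrat(\game,\targetset)$, I would first observe that $\yavoidset(\game,\targetset)$ is regular: by Lemma~\ref{reachability:compuability:lemma}, $\xforceset(\game,\targetset)$ is effectively regular, so its complement is too by Lemma~\ref{lem:boolean_op:regular}. Since $\yavoidset(\game,\targetset)$ is an $\px$-trap by Lemma~\ref{not:reach:trap:lemma}, invoking Lemma~\ref{lemma:trap:certainly:regular} (with the roles of $\px$ and $\py$ swapped) yields a regular SG-LCS strategy $\ylcsavoidstrat(\game,\targetset)$ for Player~$\py$ that surely keeps the game inside $\yavoidset(\game,\targetset)$. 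Because $\targetset \cap \yavoidset(\game,\targetset) = \emptyset$, the same strategy surely avoids $\targetset$, giving the desired safety guarantee.

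For the reachability strategy $\xlcsforcestrat(\game,\targetset)$, I would exploit the finite termination of the reachability scheme: by Lemma~\ref{reachable:termination:lemma} there is some $\tpf \in \nat$ with $\xforceset(\game,\targetset) = \reachset_\tpf$, so the transfinite sequence stabilises below $\omega$. I would partition the Player-$\px$ configurations of the force set into the finitely many disjoint ``level'' sets $R_\alpha := \xcut{\reachset_{\alpha+1} \setminus \reachset_\alpha}$ for $0 \leq \alpha < \tpf$. Each $R_\alpha$ is effectively regular, since the reachability scheme uses only $\preof\game\cdot$, $\dualpreof\game\cdot = \gcomplementof\game{\preof\game{\gcomplementof\game\cdot}}$, and boolean operations, all of which preserve regularity by Lemmas~\ref{lem:boolean_op:regular}~and~\ref{lem:pre:regular}. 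By the definition of $\reachset_{\alpha+1}$, every configuration of $R_\alpha$ has at least one successor in the (regular) set $\reachset_\alpha$, so a selection function from $R_\alpha$ to $\reachset_\alpha$ exists; Lemma~\ref{lemma:select:regular} then supplies a regular SG-LCS strategy $\lcsstrat_\alpha$ whose induced strategy realises such a selection. The level sets have pairwise disjoint domains, so
\[
  \xlcsforcestrat(\game,\targetset) \;:=\; \bigcup_{\alpha < \tpf} \lcsstrat_\alpha
\]
is a well-defined regular SG-LCS strategy, and by construction its induced strategy coincides (on $\xcut{\xforceset(\game,\targetset)}$) with the memoryless $\xforcestrat(\game,\targetset)$ used in Lemma~\ref{reachability:correct:lemma}, inheriting its winning property.

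The main obstacle is to argue that the level-wise partition is both finite and effectively regular at every step. Finiteness rests on Lemma~\ref{reachable:termination:lemma}, which is specific to the SG-LCS setting and relies on Higman's lemma; regularity of every level requires that the operators used in the reachability iteration all effectively preserve regularity, which is precisely the content of Lemmas~\ref{lem:upward-closure:regular}--\ref{lem:pre:regular}. Once both are in hand, the remainder is a routine assembly of the existing regular building blocks.
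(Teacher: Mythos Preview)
Your proposal is correct and follows essentially the same approach as the paper: finite termination of the reachability scheme (Lemma~\ref{reachable:termination:lemma}) to get finitely many regular level sets, then Lemma~\ref{lemma:select:regular} on each level and a union of the resulting regular strategies for the force side, and Lemma~\ref{lemma:trap:certainly:regular} applied to the regular $\px$-trap $\yavoidset(\game,\targetset)$ for the avoid side. One small wording issue: the induced strategy need not literally \emph{coincide} with the particular $\xforcestrat(\game,\targetset)$ from Lemma~\ref{reachability:correct:lemma} (the $\selectfrom{\cdot}$ choices may differ), but since any selection into $\reachset_\alpha$ works in that proof, the winning property still transfers.
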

\begin{proof}
	We first show a regular SG-LCS strategy for Player~$\px$ for the reachability objective.
	Consider the sequence of sets $\reachset_0, \reachset_1, \dots$ constructed in Section~\ref{reachability:section}.
	By Lemma~\ref{reachable:termination:lemma}, there exists $j \in \nat$ s.t. $\forall i > j$, $\reachset_i = \reachset_j$.
	Moreover, since $\reachset_i$ is built starting from the regular set $\targetset$
	and according to regularity-preserving operations
	(union, predecessor, and complement; cf. Lemmas~\ref{lem:boolean_op:regular} and \ref{lem:pre:regular}),
	$\reachset_i$ is regular for every $0 \leq i \leq j$.
	Consider the sequence of regular sets $R_0 = \reachset_0$ and $R_i = \reachset_i \setminus \reachset_{i-1}$
	for every $0 < i \leq j$.
	Recall the definition of $\xforcestrat(\game,\targetset)$ in the proof of Lemma~\ref{reachability:correct:lemma}:
	For every $0<i\leq j$,
	$\xforcestrat(\game,\targetset)$ was uniformly defined on $R_i$ as
	\begin{align*}
		\xforcestrat(\game,\targetset)(s) = \selectfrom{\postof\game{\state}\cap\reachset_{i - 1}}.
	\end{align*}
	Therefore, there exists a selection function from $R_i$ to $\reachset_{i-1}$, for every $0 < i \leq j$.
	Since the $R_i$'s and $\reachset_i$'s are regular, by Lemma~\ref{lemma:select:regular},
	there exists a regular SG-LCS strategy $f_i$ with domain $R_i$ inducing such a selection function.
	Since the $R_i$'s are disjoint, and since any selection function is correct,
	take as $\xlcsforcestrat(\game,\targetset)$
	the union strategy $f_0 \cup \cdots \cup f_j$.
	Since the actual choice of the selection function is irrelevant, we conclude that
        \[
	\xforceset(\game,\targetset)\subseteq
	\xwinset(\induced{\xlcsforcestrat(\game,\targetset)},\yallstrats(\game))(\game,\eventually\targetset^{>0})
        \]
	We conclude the proof by providing the required regular SG-LCS strategy for Player~$\py$ for the safety objective.
	By Lemma~\ref{not:reach:trap:lemma}, $\yavoidset(\game,\targetset)$ is an $\px$-trap.
	Since $\yavoidset(\game,\targetset)$ is regular,
	by Lemma~\ref{lemma:trap:certainly:regular} there exists a regular SG-LCS strategy $\ylcsavoidstrat(\game,\targetset)$
	such that $\yavoidset(\game,\targetset)\subseteq
		\yvinset(\induced{\ylcsavoidstrat(\game,\targetset)},\xallstrats(\game))(\game,\always(\gcomplementof\game\targetset))$.
\end{proof}

To conclude the proof of Theorem~\ref{thm:regular:strategies},
we show that regular SG-LCS strategies suffice for the parity objective
(cf. Lemma~\ref{cn:infinite:termination:lemma}).
\begin{lem}
	\label{lemma:parity:regular:strategy}
	There are regular SG-LCS strategies
	$\xlcsstrat_c$ for Player~$\px$ and	$\ylcsstrat_c$ for Player~$\py$	such that 
	\begin{align*}
		\cset_n(\game)\;&\subseteq\;\xwinset(\xinducedlcsstrat_c,\yfinitestrats(\game))(\game,{\xparity}^{=1} ) \\
		\gcomplementof\game{\cset_n(\game)}\;&\subseteq\;\ywinset(\yinducedlcsstrat_c,\xfinitestrats(\game))(\game,{\yparity}^{>0} )
	\end{align*}
\end{lem}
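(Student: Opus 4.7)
The plan is to mirror the inductive construction of the strategies $\xstrat_c$ and $\ystrat_c$ given in the proof of Lemma~\ref{cn:infinite:termination:lemma}, and to show at each step that the pieces involved are definable by regular SG-LCS strategies. The induction is on the rank $n$ of $\game$. For the base case $n=0$ we have $\cset_0(\game)=\states$; the required regular strategy for Player~$\px$ is obtained by picking, for every control state $\lcsstate\in\lcsstatesx$, one outgoing transition $\lcsstate\transitionx{\op}\lcsstate'$ and adding the guarded rule $\lcsstate,(\msgs^*)^\channels\transitionx{\op}\lcsstate'$ (or, for receive operations, a suitable regular guard forcing the head message to match); the winning condition is vacuous so correctness is trivial.

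For the inductive step, first observe that by Lemma~\ref{parity:termination:lemma} the sequences $\xset_\alpha,\yset_\alpha,\zset_\alpha$ stabilize at some finite index $\tp\in\nat$, and that by the computability results of Section~\ref{algorithm:section} (in particular Lemmas~\ref{reachability:compuability:lemma} and~\ref{parity:compuability:lemma}) all the sets $\xset_i,\yset_i,\zset_i,\cset_n(\game)$ involved are effectively regular. I will construct $\xlcsstrat_c$ as a disjoint union of three regular pieces, exactly following the three cases in the definition of $\xstrat_c$:
\begin{enumerate}
\item On $\complementof{\xset_\tp}\cap\complementof{\zset_\tp}$, the game $\game\cut\xset_\tp\cut\zset_\tp$ has rank at most $n-1$, so by the induction hypothesis there is a regular SG-LCS strategy realising the winning strategy $\strat_1$ in this sub-game; its restriction to the regular set $\complementof{\xset_\tp}\cap\complementof{\zset_\tp}$ is still regular.
\item On $\complementof{\xset_\tp}\cap\colorset{\zset_\tp}<n$, apply Lemma~\ref{lemma:reachability:regular:strategy} to the sub-game $\game\cut\xset_\tp$ with the regular target $\colorset{\zset_\tp}=n$ to obtain a regular force strategy.
\item On $\complementof{\xset_\tp}\cap\colorset{\zset_\tp}=n$, Lemma~\ref{not:reach:trap:lemma} provides the selection function $\state\mapsto\selectfrom{\postof\game\state\cap\complementof{\xset_\tp}}$ from a regular set to a regular set, and Lemma~\ref{lemma:select:regular} yields a regular SG-LCS strategy inducing it.
\end{enumerate}
The three domains are disjoint and regular, so taking the union of the three regular SG-LCS strategies yields the desired $\xlcsstrat_c$. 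Correctness is inherited from Lemma~\ref{cn:infinite:termination:lemma} since $\induced{\xlcsstrat_c}$ agrees with $\xstrat_c$ wherever $\xstrat_c$ is defined.

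For $\ylcsstrat_c$, I use the finiteness of $\tp$ to decompose $\gcomplementof\game{\cset_n(\game)}=\xset_\tp$ into the finitely many disjoint regular layers $\xset_\alpha\setminus\bigcup_{\beta<\alpha}\yset_\beta$ and $\cset_{n-1}(\game\cut\xset_\alpha\cut\zset_\alpha)$ for $\alpha\leq\tp$, and realise the definition of $\ystrat_c$ piecewise: on each ``force'' layer by Lemma~\ref{lemma:reachability:regular:strategy} (applied to the regular target $\bigcup_{\beta<\alpha}\yset_\beta$), and on each ``parity'' layer by the induction hypothesis applied to the sub-game $\game\cut\xset_\alpha\cut\zset_\alpha$ of rank at most $n-1$. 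Taking the finite disjoint union of these regular strategies gives $\ylcsstrat_c$, and correctness again follows from Lemma~\ref{cn:infinite:termination:lemma}.

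The main obstacle I expect is not any single conceptual step but the careful bookkeeping needed to verify that the inductive hypothesis can legitimately be invoked on the sub-games $\game\cut\xset_\alpha\cut\zset_\alpha$: these are not themselves induced by an SG-LCS, but they are restrictions of the ambient SG-LCS game to regular subsets of configurations, so a regular SG-LCS strategy for the sub-game can be faithfully represented as a regular SG-LCS strategy for $\sglcs$ by conjoining its guards with a regular description of the sub-game's state space. Once this is made precise, the combination of regularity-preservation of the operations (union, intersection, complement, predecessor, upward closure) and the finite termination of the parity scheme ensures that only finitely many regular pieces are ever glued together, so the final strategies are bona fide finite lists of guarded rules.
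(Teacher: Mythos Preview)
Your proposal is correct and follows essentially the same route as the paper's own proof: induction on $n$, mirroring the three-case definition of $\xstrat_c$ (inductive hypothesis on $\game\cut\xset_\tp\cut\zset_\tp$, Lemma~\ref{lemma:reachability:regular:strategy} for the force layer, Lemma~\ref{lemma:select:regular} for the color-$n$ layer) and the finite layered decomposition for $\ystrat_c$, then taking disjoint unions of the resulting regular pieces. Your explicit remark about sub-games not literally being SG-LCS-induced and how to absorb this by conjoining regular guards is a point the paper leaves implicit, so if anything your write-up is slightly more careful there.
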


\begin{proof}
	We define regular SG-LCS strategies $\xlcsstrat_c$ for Player~$\px$ and $\ylcsstrat_c$ for Player~$\py$
	by induction on $n \geq 1$.
	By inspecting the proof of Lemma~\ref{cn:infinite:termination:lemma},
	we note that winning strategies for both players are constructed according to a case analysis on disjoint regular domains,
	for which winning regular SG-LCS strategies exist either by induction hypothesis,
	or by Lemma~\ref{lemma:reachability:regular:strategy} (for reachability).
	Recall that, by Lemma~\ref{parity:termination:lemma},
	there exists $i \in \nat$ s.t. $\xset_j = \xset_i$ for every $j > i$.
	Moreover, all the sets $\xset_j, \yset_j, \zset_j$ involved in the construction are regular for every $0 \leq j \leq i$
	since they are constructed starting from regular sets and according to regularity-preserving operations
	(boolean operations, cf. Lemma~\ref{lem:boolean_op:regular};
	force-sets, cf. Lemma~\ref{lemma:reachability:regular:strategy}).

	\parg{Construction of $\xlcsstrat_c$.}
	Define the two regular sets of configurations $\complementof{\xset_j}:=\gcomplementof\game{\xset_j}$ and 
	$\complementof{\zset_j}:=\gcomplementof\game{\zset_j}$.
	By definition, $\cset_n(\game)=\complementof{\xset_j}$.
	Following Lemma~\ref{cn:infinite:termination:lemma},
	we define $\xlcsstrat_c(\state)$ depending on the membership of $\state$
	in one of the following three partitions of $\complementof{\xset_j}$:
	\begin{align*}
		\set{\complementof{\xset_j}\cap\complementof{\zset_j},
			\ \ \complementof{\xset_j}\cap\colorset{\zset_j}<n,
			\ \  \complementof{\xset_j}\cap\colorset{\zset_j}=n}
	\end{align*}
	In the first case, note that $\game\cut\xset_j\cut\zset_j$ does not contain any configurations of color $\geq n$ (cf.\break Lemma~\ref{cn:infinite:termination:lemma}).
	Thus, by the induction hypothesis, there is a regular SG-LCS strategy $\lcsstrat_j$
	for Player~$\px$ in $\game\cut\xset_j\cut\zset_j$ such that the
        induced strategy has domain $\complementof{\xset_j}\cap\complementof{\zset_j}$.
	%s.t. $\gcomplementof{\game'}{\cset_{n-1}(\game')}\;\subseteq\;\xwinset(\induced\lcsstrat_1,\yfinitestrats(\game'))(\game',{\xparity}^{>0}$.
	%refine to claim probability = 1 in fact?
	%
	In the second case, let $\lcsstrat_2$ be the regular SG-LCS strategy
	$\xlcsforcestrat(\game\cut\xset_j,\colorset{\zset_j}=n)$, for which
        the induced strategy has domain $\complementof{\xset_j}\cap\colorset{\zset_j}<n$
	(it exists by Lemma~\ref{lemma:reachability:regular:strategy}).
	Finally, in the third case, the strategy $\selectfrom{\postof\game\cdot\cap\complementof{\xset_j}}$
	witnesses the existence of a selection function
	from $\complementof{\xset_j}\cap\colorset{\zset_j}=n$ to $\complementof{\xset_j}$.
	Let $\lcsstrat_3$ be a regular SG-LCS strategy
	inducing a selection function from
	$\complementof{\xset_j}\cap\colorset{\zset_j}=n$ to
        $\complementof{\xset_j}$
	(it exists by Lemma~\ref{lemma:select:regular}).
	Then, $\xlcsstrat_c$ is defined as the union of the three previously constructed strategies:\looseness=-1
	\begin{align*}
		\xlcsstrat_c := \lcsstrat_1 \cup \lcsstrat_2 \cup \lcsstrat_3
	\end{align*}
	Since the actual choice of selection function is irrelevant,
	$\xlcsstrat_c$ induces a correct strategy by the same arguments as in the proof of Lemma~\ref{cn:infinite:termination:lemma},
	i.e., $\cset_n(\game)\;\subseteq\;\xwinset(\xinducedlcsstrat_c,\yfinitestrats(\game))(\game,{\xparity}^{=1})$.

	\parg{Construction of $\ylcsstrat_c$.}
	Recall that $\gcomplementof\game{\cset_n(\game)} = \yset_j = \xset_j$, and,
	for every $0 \leq i \leq j$,
	$\yset_i = \xset_i\cup\cset_{n-1}(\game\cut\xset_i\cut\zset_i)$.
	For every $1 \leq i \leq j$, let $\lcsstrat_i^1$ be the regular SG-LCS strategy $\ylcsforcestrat(\game,\yset_{i-1})$
	with domain $\xset_i \setminus \yset_{i-1}$ % (with the convention $\yset_{-1} = \emptyset$)
	(it exists by Lemma~\ref{lemma:reachability:regular:strategy}).
        By the induction hypothesis, there is also a regular SG-LCS strategy $\lcsstrat_i^2$ such that the induced strategy has domain $\cset_{n-1}(\game\cut\xset_i\cut\zset_i)$,
	which is winning a.s. for Player~$\py$ on this domain.
	%$\ywinset(\induced\ylcsstrat_j^2,\xfinitestrats(\game\cut\xset_j\cut\zset_j))(\game\cut\xset_j\cut\zset_j,{\yparity}^{=1})$
	Then, $\ylcsstrat_c$ is defined as
	\begin{align*}
		\ylcsstrat_c := \lcsstrat_1^1 \cup \lcsstrat_1^2 \cup \cdots \cup \lcsstrat_j^1 \cup \lcsstrat_j^2
	\end{align*}
	By reasoning as in the proof of Lemma~\ref{cn:infinite:termination:lemma},
	$\ylcsstrat_c$ induces a correct strategy, i.e.,
	$\gcomplementof\game{\cset_n(\game)}\;\subseteq\;\ywinset(\yinducedlcsstrat_c,\xfinitestrats(\game))(\game,{\yparity}^{>0})$.
\end{proof}

\section{Conclusions and Discussion}
\label{conclusions:section}

We have presented a scheme for solving stochastic games with \as\  and \wpp\  parity winning
conditions under the two requirements that
(i)  the game contains a finite
attractor and 
(ii) both players are restricted to finite-memory strategies.
We have shown that this class of games is memoryless determined.
The method is instantiated to prove decidability
of \as\  and \wpp\  parity games induced by lossy channel systems.


\begin{figure}
\begin{tikzpicture}[show background rectangle]
\node[name=dummy] {};
\node[prob-node,name=s0] at (dummy) {$s_0$};
\node[prob-node,name=s1,anchor=west] at ($(s0.east)+(3mm,0mm)$) {$s_1$};
\node[prob-node,name=s2,anchor=west] at ($(s1.east)+(3mm,0mm)$) {$s_2$};
\node[prob-node,name=s3,anchor=west] at ($(s2.east)+(3mm,0mm)$) {$s_3$};
\node[name=s4,anchor=west] at ($(s3.east)+(2mm,0mm)$) {$\cdots$};

%\node[anchor=south] at ($(s4.north)+(-5mm,5mm)$) {(a)};

\draw[transition-edge] (s0) to [out=150,in=210,loop] node[above,near start]{\scriptsize $0.3$} (s0);
 
\draw[transition-edge] (s0) to [out=60,in=120] node[above]{\scriptsize $0.7$} (s1);
\draw[transition-edge] (s1) to [out=240,in=300] node[below]{\scriptsize $0.3$} (s0);

\draw[transition-edge] (s1) to [out=60,in=120] node[above]{\scriptsize $0.7$} (s2);
\draw[transition-edge] (s2) to [out=240,in=300] node[below]{\scriptsize $0.3$} (s1);

\draw[transition-edge] (s2) to [out=60,in=120] node[above]{\scriptsize $0.7$} (s3);
\draw[transition-edge] (s3) to [out=240,in=300] node[below]{\scriptsize $0.3$} (s2);

%\draw[dotted,line width=1pt] (s3) -- (s4);
\end{tikzpicture}

\caption{ Finite attractor requirement. }
\label{req:figure}
\end{figure}

The two above requirements are both necessary for
our method.
To see why our scheme fails if the game lacks a {\bf finite attractor},
consider the game in Figure~\ref{req:figure}
 (a variant of the Gambler's ruin problem).
All states are random, i.e., $\zstates=\ostates=\emptyset$, 
and $\coloring(\state_0)=1$ and $\coloring(\state_i)=0$ when $i>0$. 
The probability to go right from
any state is $0.7$ and the probability to go left (or to make a
self-loop in $\state_0$) is $0.3$.
This game does not have any finite attractor. 
It can be shown that
the probability to reach $\state_0$ infinitely often is 0 for all initial
states. However, our construction will classify all states as winning
for Player 1.
More precisely, the construction of $\cset_1(\game)$ converges
after one iteration, with $\zset_\alpha=S$ and $\xset_\alpha=\yset_\alpha=\emptyset$ for all $\alpha$, and
$\cset_1(\game)=S$.
Intuitively, the problem is that even if the force-set of $\set{\state_0}$ 
(which is the entire set of states) is visited infinitely many times, 
the probability of visiting $\set{\state_0}$ infinitely often is still zero,
since the probability of returning to $\set{\state_0}$ gets smaller and
smaller. Such behavior is impossible in a game graph that contains
a finite attractor.

Our scheme also fails when the players are not both restricted to {\bf finite-memory strategies}. 
Solving a game under a finite-memory restriction is a
different problem from when arbitrary strategies are allowed (not a sub-problem). 
In fact, it was shown in \cite{BBS:ACM2007} that
for arbitrary strategies, the problem is undecidable.
We show two simple examples of stochastic games on LCSs where the two problems yield different results (see also \cite{BBS:ACM2007}).
In one case, we show that infinite memory is more powerful for Player 1 with a \wpp\  objective
(cf. Figure~\ref{fig:infinite_memory_wpp_example}),
while in the other case infinite memory helps w.r.t. an \as\  objective
(cf. Figure~\ref{fig:infinite_memory_as_example}).
In both cases, Player 0 does not play in the game, thus the memory allowed to her is irrelevant.


\begin{figure}

	\subfloat[W.p.p. winning condition] {%\centering
	    \label{fig:infinite_memory_wpp_example}
		$\qquad$
		\begin{tikzpicture}
			
			\node[opponent-node] (s0) {$p:0$};
			\node[opponent-node] (s1) [right of = s0, node distance=1.8cm] {$q:0$};
			\node[opponent-node] (s2) [right of = s1, node distance=2cm] {$r:1$};

			\draw[transition-edge] (s0) to [out=60, in=120, loop] node [above] {$c!1$} (s0);
			\draw[transition-edge] (s0) to node [above] {$\nop$} (s1);
			\draw[transition-edge] (s1) to [out=60, in=120, loop] node [above] {$\nop$} (s1);
			\draw[transition-edge] (s1) to node [above] {$c?1$} (s2);
			\draw[transition-edge] (s2) to [bend left, out=30, in=150] node [below] {$\nop$} (s0);
	
		\end{tikzpicture}
		$\qquad$
	}
	$\quad$
	\subfloat[A.s. winning condition] {
		\label{fig:infinite_memory_as_example}
		$\qquad$
		\begin{tikzpicture}
			
			\node[opponent-node] (s0) {$0$};
			\node[opponent-node] (s1) [right of = s0, node distance=1.6cm] {$1$};
			\node[opponent-node] (s2) [right of = s1, node distance=1.6cm] {$2$};

			\draw[transition-edge] (s0) to [out=60, in=120, loop] node [above] {$c!1$} (s0);
			\draw[transition-edge] (s0) to [bend left] node [above] {$\nop$} (s1);
			\draw[transition-edge] (s1) to [bend left] node [below] {$c?1$} (s0);
			\draw[transition-edge] (s1) to node [above] {$\nop$} (s2);
			\draw[transition-edge] (s2) to [bend left, out=90, in=90] node [below] {$\nop$} (s0);
	
		\end{tikzpicture}
		$\qquad$
	}
	\caption{Infinite memory helps Player 1.}
\end{figure}

First, we show that infinite memory is more powerful for \wpp\  objectives.
In Figure~\ref{fig:infinite_memory_wpp_example}, Player~1 plays on control states $p$, $q$, and $r$.
Player 1's objective is to visit state $r$ infinitely often w.p.p..
To ensure this, from state $p$ Player~1 pumps up the channel to a sufficiently large size $k$
(which can be done a.s. for any $k$ given enough time),
and then she goes to the risk state $q$. If each message can be lost independently with probability $\frac 1 2$,
the probability that all messages are lost, and thus that Player 1 is stuck forever in $q$, is $2^{-k}$.
Otherwise, with probability $1 - 2^{-k}$ Player 1 can visit $r$ once,
and then go back to $p$.
The strategy of Player 1 is to realise an infinite sequence $k_0 < k_1 < \cdots$
s.t. the probability of visiting state $r$ infinitely often,
which is $\prod_{i = 0}^\infty (1 - 2^{-k_i})$,
can be made strictly positive.
Clearly, if Player 1 has infinite memory, then she can realize such a sequence by distinguishing different visits to control state $p$ and same channel contents.
On the other side, if Player 1 is restricted to finite memory, then either the game eventually stays forever in $p$ (which is losing),
or the infinite sequence $k_0, k_1, \dots$ is upper-bounded by some finite $n$,
which makes the infinite product above equal to $0$.
In both cases, Player 1 loses if she has only finite memory.

Notice that Player~1 wins not only w.p.p., but even limit-sure in this example.
In other words, for every $\varepsilon > 0$ there is an infinite-memory strategy
s.t. the parity objective is satisfied with probability $\geq \varepsilon$.
We don't know whether there are examples where a similar phenomenon can be reproduced under finite-memory/memoryless strategies.

We now show that infinite memory is more powerful for \as\ objectives.
An example similar to the previous case can be given for the a.s. winning mode with a 3-color parity condition.
In Figure~\ref{fig:infinite_memory_as_example}, Player 1 controls states $0$, $1$, and $2$,
whose color equals their name.
Thus, the objective of Player 1 is to a.s. visit state $1$ infinitely often and state $2$ only finitely often.
The strategy is similar as in the previous example:
Player 1 tries to pump up the channel in state $0$,
and then she goes to the risk state $1$.
From here, with low probability all messages are lost, and the penalty is to visit state $2$ once.
Otherwise, the game can go back directly to state $0$ without visiting state $2$. 
In both cases, the game restarts afresh from state $0$.
An analysis as in the previous example shows that, if Player 1 is restricted to finite memory,
then the probability of visiting state $2$ from state $1$ can be bounded from below.
This implies that, whenever state $1$ is visited infinitely often, then so is state $2$ a.s.,
and so Player 1 is losing.
On the other hand, there is an infinite-memory strategy for Player 1
s.t. the probability of visiting state $2$ for $n$ times goes to $0$ as $n$ goes to infinity,
which implies that the probability of visiting state $2$ only finitely often is 1.

As future work, we will consider extending our framework
to (fragments of) probabilistic extensions of other models
such as Petri nets and noisy Turing machines
\cite{Parosh:etal:MC:infinite:journal}.

% \section*{Acknowledgments}
% We thank the anonymous referees for their detailed comments.

\bibliographystyle{alpha}
\bibliography{bibdatabase,private}

\end{document}